\newtheorem{Def}{Definition}[section]
\newtheorem{Thm}[Def]{Theorem}
\newtheorem{Lem}[Def]{Lemma}
\newtheorem{Assumption}[Def]{Assumption}
\newtheorem{Rem}[Def]{Remark}
\newtheorem{Cor}[Def]{Corollary}
\newtheorem{Example}[Def]{Example}
\newcommand{\mca}{\mathcal{A}}
\newcommand{\mcc}{\mathcal{C}}
\newcommand{\mcf}{\mathcal{F}}
\newcommand{\mci}{\mathcal{I}}
\newcommand{\mcl}{\mathcal{L}}
\newcommand{\mbbh}{\mathbb{H}}
\newcommand{\mbbn}{\mathbb{N}}
\newcommand{\mbbr}{\mathbb{R}}
\newcommand{\mbby}{\mathbb{Y}}
\newcommand{\al}{\alpha}
\newcommand{\del}{\delta}
\newcommand{\ep}{\epsilon} 
\newcommand{\vp}{\varphi}
\newcommand{\D}{\Delta}
\newcommand{\sig}{\sigma}
\newcommand{\Sig}{\Sigma}
\newcommand{\lam}{\lambda}
\newcommand{\gam}{\gamma}
\newcommand{\Gam}{\Gamma}
\newcommand{\p}{\partial}
\newcommand{\cil}{\xrightarrow{\mcl}} % <- Convergence in law
\newcommand{\cip}{\xrightarrow{p}} % <- Convergence in probability
\newcommand{\argmax}{\mathop{\rm argmax}}
\newcommand{\diag}{\mathop{\rm diag}}
\newcommand{\lf}{\lfloor}
\newcommand{\rf}{\rfloor}
\def\ds#1{\displaystyle{#1}}
\def\nn{\nonumber}
\def\lp{L\'{e}vy process}
\def\lm{L\'{e}vy measure}
\def\sumj{\sum_{j=1}^{n}}
\def\pr{P}%\def\pr{\mathbb{P}}
\def\E{E}%\def\E{\mathbb{E}}
\def\intj{\int_{\left\lf \frac{j-1}{h }\right\rf h }^{\left\lf \frac{j}{h }\right\rf h }}
\def\intj{\int_{t_{j-1}}^{t_j}}
\def\tz{\theta_{0}}
\def\tes{\hat{\theta}_{n}}
\def\ees{\hat{\eta}_n}
\def\aes{\hat{\alpha}_{n}}
\def\ges{\hat{\gamma}_{n}}
\newcommand{\yuima}{YUIMA }
\newenvironment{CodeChunk}{}{}
\let\proglang=\textsf
\let\code=\texttt
\title[Noise inference for ergodic L\'{e}vy driven SDE \footnote{This work was partly supported by %JSPS KAKENHI Grant Number ZP26400204 and ZP17K05367, and also 
JST CREST Grant Number JPMJCR14D7, Japan.}
]
{Noise inference for ergodic L\'{e}vy driven SDE}
\author[H. Masuda]{Hiroki Masuda}
\address{Department of Mathematical Sciences, Faculty of Mathematics, Kyushu University}%, Motooka 744 Nishi-ku Fukuoka 819-0395, Japan}
\email{hiroki@math.kyushu-u.ac.jp}
\author[L. Mercuri]{Lorenzo Mercuri}
\address{Department of Economics, Management and Quantitative Methods, University of Milan}
\email{lorenzo.mercuri@unimi.it}
\author[Y. Uehara]{Yuma Uehara}
\address{Department of Mathematics, Faculty of Engineering Science, Kansai University}
\email{y-uehara@kansai-u.ac.jp}
\begin{document}

\setlength{\baselineskip}{4.5mm}

\maketitle

\begin{abstract}
We study inference for the driving L\'{e}vy noise of an ergodic stochastic differential equation (SDE) model, when the process is observed at high-frequency and long time and when the drift and scale coefficients contain finite-dimensional unknown parameters. By making use of the Gaussian quasi-likelihood function for the coefficients, we derive a stochastic expansion for functionals of the unit-time residuals, which clarifies some quantitative effect of plugging in the estimators of the coefficients, thereby enabling us to take several inference procedures for the driving-noise characteristics into account.
We also present new classes and methods available in \yuima  for the simulation and the estimation of a L\'evy SDE model. We highlight the flexibility of these new advances in \yuima using simulated and real data.
\end{abstract}

% "Title of the Paper"

%\runtitle{Noise inference for ergodic L\'{e}vy driven SDE}

% indicate corresponding author with \corref{}
% \author{\fnms{John} \snm{Smith}\thanksref{t1}\corref{}\ead[label=e1]{smith@foo.com}\ead[label=e2,url]{www.foo.com}}
% \thankstext{t1}{Thanks to somebody} 
% \address{line 1\\ line 2\\ \printead{e1}\\ \printead{e2}}

% history:
% \received{\smonth{1} \syear{0000}}

\section{Introduction}

We consider the following univariate Markovian stochastic differential equation (SDE):
\begin{equation}
dX_t=a(X_t,\al)dt+c(X_{t-},\gam)dJ^\eta_t,
\label{hm:sde}
\end{equation}
%\begin{equation*}
%dX_t=a(X_t,\al)dt+c(X_{t-},\gam)dZ^\eta_t,
%\end{equation*}
where:
\begin{itemize}
\item The coefficients $a$ and $c$ are smooth enough with $c$ being non-degenerate, and known except for an unknown parameter
\begin{equation}
\theta:=(\gam,\al)\in\Theta_\gam\times\Theta_\al =: \Theta \subset\mbbr^p
\nonumber
\end{equation}
for bounded convex domains $\Theta_\al\subset\mbbr^{p_\al}$ and $\Theta_\gam\subset\mbbr^{p_\gam}$ ($p=p_\al+p_\gam$);
\item The driving noise $J^\eta$ is a standardized non-Gaussian L\'{e}vy process with finite moments, whose distribution depends on an unknown parameter $\eta\in\Theta_\eta$, a domain in $\mbbr^{p_\eta}$, and where $J^\eta$ is independent of $X_0$.
\end{itemize}
We write $J=J^\eta$ in the sequel. We suppose that there are true values $\tz=(\gam_0,\al_0)\in\Theta_\gam\times\Theta_\al$ and $\eta_0\in\Theta_\eta$ which induce the true image measure $\pr$ of $(X,J)$, and that we observe a discrete-time sample $\bm{X}_n:=(X_{t_j})_{j=1}^n$, where $t_j=t_j^n:=jh_n$, with the sampling stepsize $h=h_n\to 0$ satisfying that
\begin{equation}
T_n:=nh\to\infty \ \text{ and } nh^2\to 0  \text{ as }  n\to\infty,
\label{sampling.design}
\end{equation}
the so-called rapidly increasing experimental design.

Our objective is to estimate the value $\xi_0:=(\tz,\eta_0)$ under the ergodicity.
%We will present a new estimation strategy for $\eta$ which characterizes the non-Gaussian behavior of the solution process $X$ under the above setting.
We remark the parameter may not completely characterize the distribution $\mcl(J)$, so that the problem is not necessarily parametric; for example, $\eta$ is just a skewness or kurtosis, which may or may not completely determine $\mcl(J_1)$.

For the estimation of the coefficient parameter $\theta$, from the statistical point of view it is important what kind of distributional is used to approximate the conditional distribution of $X_{t_j}$ given $X_{t_{j-1}}$.
Previously, \cite{Mas13-1} and \cite{MasUeh17-2} considered estimation of $\theta$ based on the Gaussian quasi-likelihood (GQL), and proved the asymptotic normality and the tail probability estimates of the Gaussian quasi-maximum likelihood estimator (GQMLE).
%Note that although its convergence rate is often inferior to other estimators (that is, the GQMLE is not efficient in general), we require more stringent conditions on the driving noise and sometimes on the coefficients to ensure the asymptotic behavior of them.
At the expense of efficiency, the GQL based method has the robustness against misspecification of the driving-noise distribution, which may be crucial in the context of time-series models, see \cite[Section 6.2]{Str05}.
Further, concerned with estimation of a L\'{e}vy-measure functional of the form $\int \vp(z)\nu(dz)$, with $\nu$ denoting the {\lm} of $J$, the previous study \cite{MasUeh17-1} proposed a moment-matching based estimator and proved its asymptotic normality at rate $\sqrt{T_n}$.
However, the procedure imposed some stringent conditions on the behavior of $\vp$ around the origin \cite[Assumption 2.7]{MasUeh17-1}, and was not quite suitable if we want to estimate $\mcl(J_1)$ directly; in general, estimation of the distribution and that of the corresponding {\lm} can be of technically rather different matters.

In this paper, we will propose yet another strategy for estimating $\eta$ based on the unit-time approximation, which goes as follows:

\begin{enumerate}
\item First we construct the GQMLE $\tes:=(\ges,\aes)$ and the residual
\begin{equation}
\label{hm:add1}
\widehat{\D_j J} = \widehat{\D_j J}^n := \frac{X_{t_j}-X_{t_{j-1}}-h a(X_{t_{j-1}},\aes)}{c(X_{t_{j-1}},\ges)}.
\end{equation}
\item We then estimate the $\mcl(J_1)$-i.i.d. sequence ($i=1,\dots,n$)
\begin{equation}
\ep_i:=J_{i}-J_{i-1}
\nonumber
\end{equation}
by adding up the finer increments over the $i$th unit-time interval $[i-1,i]$:
\begin{equation}
\label{hm:add2}
\hat{\ep}_i := 
%\sum_{j=\left\lf \frac{i-1}{h }\right\rf+1}^{\left\lf \frac{i}{h }\right\rf} 
\sum_{j\in A_i} \widehat{\D_j J},
\end{equation}
where, for each $i\in\{1,2,\dots,\lf T_n\rf\}$,
\begin{align}
A_i := \left\{j\in\mbbn:\, i-1 < t_j\le i\right\}= \left\{j\in\mbbn:\, \left\lf \frac{i-1}{h }\right\rf+1\leq j\leq\left\lf \frac{i}{h }\right\rf\right\},
\nonumber
\end{align}
and then measure the discrepancy between a functional of $\{\hat{\ep}_i\}$ and that of $\{\ep_i\}$ through a stochastic expansion.
\item For an appropriate function $m$, we estimate $\eta$ by
\begin{equation}
\nn%\label{hm:est.func.eta}
\ees \in \argmax_{\eta\in\overline{\Theta_\eta}}\sum_{i=1}^{\left\lf T_n\right\rf} m\left(\hat{\ep}_i,\eta\right).
\end{equation}

\end{enumerate}

The rest of this paper is organized as follows.
We briefly summarize some prerequisites in Section \ref{yu:notation and assumption}, and then presents theoretical results in Section \ref{yu:main}.
Section \ref{sec:Implementation} introduces new classes and methods in \yuima \proglang{R} package (\cite{YUIMA14} and \cite{iacus2018simulation}) for the estimation procedure proposed in the previous sections.
Some numerical examples based on simulated and real data are given in Section \ref{sec:Numerical Examples}.

%%%%%
%%%%%
\section{Preliminaries}
\label{yu:notation and assumption}

%%%%%
\subsection{Notations and conventions}

Here are some basic notations and conventions used in this paper.

\begin{itemize}
%\item We largely abbreviate ``$n$'' from the notation.% like $t_{j}=t^{n}_{j}$ and $h=h_{n}$. 
\item For any vector variable $x=(x^{(i)})$, we write $\p_x=\left(\frac{\p}{\p x^{(i)}}\right)_i$.
\item $C$ denotes a universal positive constant which may vary at each appearance.
\item $\top$ stands for the transpose operator, and $v^{\otimes2}:= vv^\top$ for any matrix $v$.
\item For a matrix $M=(M_{ij})\in\mbbr^{d_1}\times\mbbr^{d_2}$ and vector $u=(u_j)\in\mbbr^{d_2}$, we will write $M[u]=\sum_{j=1}^{d_2} M_{\cdot j} u_j\in\mbbr^{d_1}$. In particular, when $d_1=1$, it stands for the dot product of two vectors. We will also write $M[U]=\sum_{k,l} M_{kl}U_{kl}$ for two square matrices $M$ and $U$ of the same order.
\item The convergences in probability and in distribution are denoted by $\cip$ and $\cil$, respectively, and all limits appearing below are taken for $n\to\infty$ unless otherwise mentioned.
\item For two nonnegative real sequences $(a_n)$ and $(b_n)$, we write $a_n \lesssim b_n$ if $\limsup_n(a_n/b_n)<\infty$.
%We write $a_n \asymp b_n$ if both $a_n \lesssim b_n$ and $b_n \lesssim a_n$ hold.
\item For any $x\in\mbbr$, $\lf x \rf$ denotes the maximum integer which does not exceed $x$.
%\item For any process $Y$, $\D_j Y$ denotes the $j$-th increment $Y_{t_{j}}-Y_{t_{j-1}}$.
%\item The {\lm} of $J$ is denoted by $\nu_0(dz)$, which we implicitly assume to be non-null.
\item Given a function $h:\mathbb{R}\to\mathbb{R}^+$ and a signed measure $m$ on a one-dimensional Borel space, we write
\begin{equation}\nn
||m||_h=\sup\left\{|m(f)|:\mbox{$f$ is $\mathbb{R}$-valued, $m$-measurable and satisfies $|f|\leq h$}\right\}.
\end{equation}
\end{itemize}

%%%%%
\subsection{Basic assumptions}

Denote by $(\Omega,\mcf,(\mcf_t)_{t\ge 0},\pr)$ the underlying complete filtered probability space; every processes are adapted to the filtration $(\mcf_t)$. 
We will write $\E$ for the expectation operator associated with $\pr$.
%We will write as $\E_\eta$ when emphasizing the dependence of the expectation operator on $\eta$.

%To present our main results, we introduce some technical assumptions.

\begin{Assumption}\label{Moments}
%For any $\eta\in\overline{\Theta_\eta}$, 
The {\lp} $J$ has moments of any order with $\E[J_1]=0$, $\E[J_1^2]=1$, and $\E[|J_1|^q]<\infty$, for any $q>0$.
\end{Assumption}

\begin{Assumption}\label{Smoothness}
\begin{enumerate}

\item 
The drift coefficient $a(\cdot,\al_0)$ and scale coefficient $c(\cdot,\gam_0)$ are Lipschitz continuous, and $c(x,\gam)\neq0$ for every $(x,\gam)$.
%The coefficients $a(\cdot,\al)$ and $c(\cdot,\gam)$ are globally Lipschitz continuous uniformly in $\theta\in\overline{\Theta}$, and $c(x,\gam)\neq0$ for every $(x,\gam)$.

\item For each $i \in \left\{0,1\right\}$ and $k \in \left\{0,\dots,4\right\}$, the following conditions hold:
\begin{itemize}
\item The coefficients $a$ and $c$ admit %extension in $\mathcal{C}(\mathbb{R}\times\overline{\Theta})$ and have 
the partial derivatives $\partial_x^i \partial_\alpha^k a$ and $\partial_x^i \partial_\gamma^k c$ for $i\ge 0$ and $k\ge 0$, all of which have continuous extensions as elements in $\mathcal{C}(\mathbb{R}\times\overline{\Theta})$.
\item There exists nonnegative constant $C_{i,k}$ satisfying
\begin{equation}\label{polynomial}
\sup_{(x,\alpha,\gamma) \in \mathbb{R} \times \overline{\Theta}_\alpha \times \overline{\Theta}_\gamma}\frac{1}{1+|x|^{C_{i,k}}}\left\{|\partial_x^i\partial_\alpha^ka(x,\alpha)|+|\partial_x^i\partial_\gamma^kc(x,\gamma)|+|c^{-1}(x,\gamma)|\right\}<\infty.
\end{equation}
\end{itemize}
\end{enumerate}
\end{Assumption}

Then 
\begin{Assumption}\label{Stability}
\begin{enumerate}
\item
There exists a probability measure $\pi_0$ such that for every $q>0$, we can find constants $a>0$ and $C_q>0$ for which 
\begin{equation}\label{Ergodicity}
\sup_{t\in\mathbb{R}_{+}} \exp(at) ||P_t(x,\cdot)-\pi_0(\cdot)||_{h_q} \leq C_qh_q(x),
\end{equation}
for any $x\in\mathbb{R}$ where $h_q(x):=1+|x|^q$.
\item 
For any $q>0$,  we have
\begin{equation}
\sup_{t\in\mathbb{R}_{+}}E[|X_t|^q]<\infty. 
\end{equation}
\end{enumerate}
\end{Assumption}

See \cite[Proposition 5.4]{Mas13-1} for easy-to-verify conditions for Assumption \ref{Stability}.

\medskip

We introduce a block-diagonal $p\times p$-matrix
\begin{equation}
\Gam=\diag(\Gam_\gam,\Gam_\al),
\nonumber
\end{equation}
whose components are defined by:
\begin{align*}
&\Gam_\gam:=2\int_\mbbr \frac{(\p_\gam c(x,\gam_0))^{\otimes2}}{c^2(x,\gam_0)}\pi_0(dx),\\
&\Gam_\al:=\int_\mbbr\frac{(\p_\al a(x,\al_0))^{\otimes2}}{c^2(x,\gam_0)}\pi_0(dx).
\end{align*}

\begin{Assumption}\label{nd}
$\Gam$ is positive definite.
\end{Assumption}

\medskip

We define real valued functions $\mbbh_1(\gam)$ and $\mbbh_{2}(\al)$ on $\Theta_\gam$ and $\Theta_\al$ by
\begin{align}
&\mbbh_1(\gam)=-\int_\mbbr \left(\log c^2(x,\gam)+\frac{c^2(x,\gam_0)}{c^2(x,\gam)}\right)\pi_0(dx),\label{KL1}\\
&\mbbh_2(\al)=-\int_\mbbr c(x,\gam_0)^{-2}(a(x,\al_0)-a(x,\al))^2\pi_0(dx). \label{KL2}
\end{align}

We assume the following identifiability condition for $\mbbh_1(\gam)$ and $\mbbh_2(\al)$:
\begin{Assumption}\label{Identifiability}
$\theta^\star\in\Theta$,
and there exist positive constants $\chi_\gam$ and $\chi_\al$ such that for all $(\gam,\al)\in\Theta$,
\begin{align}
&\label{idg}\mbby_1(\gam):=\mbbh_1(\gam)-\mbbh_1(\gam_0)\leq-\chi_\gam|\gam-\gam_0|^2,\\
&\label{ida}\mbby_2(\al):=\mbbh_2(\al)-\mbbh_2(\al_0)\leq-\chi_\al|\al-\al_0|^2.
\end{align}
\end{Assumption}

\subsection{Examples}
Although Assumption \ref{Moments} imposes a strong restriction on the mean and variance structure of $Z$, there is still room for statistical modeling of $Z$ with respect to, for example, its skewness and jump activity. 
In this section, with the parameter constraints for Assumption \ref{Moments}, we give concrete L\'{e}vy processes induced from subordinators (i.e. non-decreasing L\'{e}vy processes) by the following two procedures:
For given two independent subordinators $\tau^1$ and $\tau^2$, one can easily construct a possibly skewed L\'{e}vy process of finite variation by taking its bilateral version: $\tau':=\tau^1-\tau^2$.
Another way to deliver a new L\'{e}vy process is to take a normal mean variance mixture of a subordinator: for $\mu, \beta\in\mathbb{R}$, a subordinator $\tau$, and a standard normal random variable $\eta$ being independent of $\tau$, the normal mean variance mixture of $\tau$ at time $t$ is given by
\begin{equation}
\nn Z_t = \mu t+\tau_t\beta+\sqrt{\tau_t}\eta.
\end{equation}
By their construction, $E[|\tau'|^q]<\infty$ and $E[|Z|^q]<\infty$ hold as long as the $q$-th moment of $\tau^1, \tau^2$ and $\tau$ exists.
It is worth noting that having a generator of $\tau^1, \tau^2$ and $\tau$ in hand, we can directly obtain that of $\tau'$ and $Z$.
All of the following induced L\'{e}vy processes can be generated by the functions {\texttt{setLaw} and \texttt{simulate} in YUIMA package.

\begin{Example}(Bilateral gamma \cite{KucTap08-1})
For $\del_1,\gam_1,\del_2,\gam_2\in\mbbr_+$, the bilateral gamma process $\tau'$ is defined by the difference of two independent gamma subordinators $\tau^1$ and $\tau^2$ whose L\'{e}vy densities are expressed as: for each $i\in\{1,2\}$,
\begin{align*}
f_{\tau^i}(z)=\frac{\del_i}{z}e^{-\gam_iz}, \quad z>0.
\end{align*}
%\begin{align*}
%f_{\tau^i_t}(x)=\frac{\del_i^{\gam_it}}{\Gam(\gam_it)}x^{\gam_it-1}e^{-\del_ix} \quad \text{for} \ x>0.
%\end{align*}
We write the law of $\tau'_1$ as $bgamma(\del_1,\gam_1,\del_2,\gam_2)$ and it is straightforward from the form of $f_{\tau^i}$ that $\tau'_t\sim bgamma(\del_1t,\gam_1,\del_2t,\gam_2)$.
Since the density function of $bgamma(\del_1,\gam_1,\del_2,\gam_2)$ is the convolution of two gamma density, it satisfies the symmetry relation
\begin{equation*}
p(x; \del_1,\gam_1,\del_2,\gam_2)=p(-x;\del_2,\gam_2, \del_1,\gam_1), \quad x\in\mbbr\setminus\{0\},
\end{equation*}
and on the positive real line, its form is given by
\begin{equation*}
p(x; \del_1,\gam_1,\del_2,\gam_2)=\frac{\gam_1^{\del_1}\gam_2^{\del_2}}{(\gam_1+\gam_2)^{\frac{1}{2}(\del_1+\del_2)}\Gam(\del_1)}x^{\frac{1}{2}(\del_1+\del_2)-1}e^{-\frac{x}{2}(\gam_1-\gam_2)}W_{\frac{1}{2}(\del_1-\del_2), \frac{1}{2}(\del_1+\del_2-1)}(x(\gam_1+\gam_2)),
\end{equation*}
where $W_{\lam, \mu}(z)$ denotes the Whittaker function.
By using the independence between $\tau^1$ and $\tau^2$, the parameter constraints for Assumption \ref{Moments} are written as follows:
\begin{equation}
\nn E[Z_1]= \frac{\del_1}{\gam_1}-\frac{\del_2}{\gam_2}=0, \quad V[Z_1]= \frac{\del_1}{\gam^2_1}+\frac{\del_2}{\gam^2_2}=1.
\end{equation}
%the parameters have to satisfy
%\begin{equation}
%\nn \frac{\del_{2}}{\gam_{2}}\left(\frac{1}{\gam_1}+\frac{1}{\gam_{1}}\right)=1
%\end{equation} 

\end{Example}
%\begin{Example}(Variance gamma)
%\end{Example}
\begin{Example}(Normal (exponentially) tempered stable) The normal (exponentially) tempered stable law $NTS(\al, a, b, \beta, \mu)$ is defined by the law of the normal mean variance mixture of the positive exponentially tempered stable random variable whose L\'{e}vy density is given by
\begin{equation}
\nn f(z)=az^{-1-\al}e^{-bz}, \quad \al\in(0,1), \ a>0, \ b>0,
\end{equation}
and its law is denoted by $TS(\al,a,b)$; especially inverse Gaussian law corresponds to $\al=\frac{1}{2}$.
From \cite[Theorem 30.1]{Sat99}, the L\'{e}vy density of $NTS(\al, a, b, \beta, \mu)$ is explicitly expressed as:
\begin{equation}
\nn g(z)=\sqrt{\frac{2}{\pi}}ae^{\beta z}\left(\frac{z^2}{2b+\beta^2}\right)^{-\frac{\al}{2}-\frac{1}{4}}K_{\al+\frac{1}{2}}\left(z\sqrt{(2b+\beta^2)}\right), 
\end{equation}
where $K_{\al+\frac{1}{2}}$ stands for the modified Bessel function of the third kind with index $\al+\frac{1}{2}$.
From the expression of the L\'{e}vy density, the associated positive exponentially tempered stable subordinator $\tau$ satisfies $\tau_t\sim TS(\al,at,b)$ and thus the corresponding normal (exponentially) tempered stable process $Z$ also does $Z_t\sim NTS(\al, at, b, \beta, \mu t)$. 
Since $z^\lambda K_\lambda(z)\lesssim 1$ as $z\downarrow0$ for $\lambda>0$, the Blumenthal-Getoor index of $Z$ is $2\al$.
The parameter constraints for Assumption \ref{Moments} are written as follows:
\begin{equation}
\nn E[Z_1]= \mu-a\al \Gamma(-\alpha)b^{\al-1}\beta=0, \quad V[Z_1]=a\al \Gam(-\al)b^{\al-1}\left[\frac{(\al-1)\beta^2}{b}-1\right]=1.
\end{equation}
Especially in the simple case where $\mu=\beta=0$ (that is, $Z$ is a time-changed Brownian motion), the above constraints are reduced to 
\begin{equation}
\nn-a\al b^{\al-1}\Gam(-\al)=1.
\end{equation}
%and otherwise, 
%\begin{equation}
%\nn\frac{\mu}{\beta}\left[\frac{(\al-1)\beta^2}{b}-1\right]=1.
%\end{equation} 
\end{Example}
%For more theoretical details of these distributions and associated processes, we refer to, for instance, \cite{} and \cite{}. 

%%%%%
\color{black}
\subsection{Stepwise Gaussian quasi-likelihood estimation}

Here and in what follows, for any process $Y$ we will denote by $\D_j Y$ the $j$-th increment
\begin{equation}
\D_j Y := Y_{t_{j}}-Y_{t_{j-1}},
\nonumber
\end{equation}
and $f_{j-1}(\theta):=f(X_{t_{j-1}},\theta)$ for a measurable function $f$ on $\mbbr\times\overline{\Theta}$.
Building on the discrete time formal Gaussian approximation, we define the stepwise GQL functions $\mbbh_{1,n}(\gam)$ and $\mbbh_{2,n}(\al)$ as follows \cite{MasUeh17-2}:
\begin{align}
&\nn\mbbh_{1,n}(\gam):=-\frac{1}{2T_n}\sumj \left(h \log c^2_{j-1}(\gam)+\frac{(\D_j X)^2}{c^2_{j-1}(\gam)}\right),\\
&\nn\mbbh_{2,n}(\al):=-\frac{1}{2T_n}\sumj \frac{(\D_j X-h a_{j-1}(\al))^2}{h c^2_{j-1}(\ges)},
\end{align}
where $\ges$ is any maximizer of $\mbbh_{1,n}$ over $\overline{\Theta}_{\gam}$.
We then define the associated stepwise GQMLE $\tes:=(\ges,\aes)$ where $\aes$ is any maximizer of $\mbbh_{2,n}$ over $\overline{\Theta}_{\al}$.
Formally, the first-stage $\mbbh_{1,n}(\gam)$ corresponds to the quasi log-likelihoods associated with the (fake) approximation $N(x,\,c^2(x,\gam)h)$ for $\mcl(X_{t_j}|X_{t_{j-1}}=x)$, and also the second-stage $\mbbh_{2,n}(\al)$ does to the one associated with $N(x+a(x,\al)h,\,c^2(x,\ges)h)$.

Let $\nu_0(dz)$ denote the (true) {\lm} of $J$, and
\begin{equation}
\nu_k := \int z^k\,\nu_0(dz),\qquad k\ge 2.
\nonumber
\end{equation}
Under the aforementioned Assumptions \ref{Moments} to \ref{Identifiability}, both $\nu_3$ and $\nu_4$ exist and are finite (and $E[J_1^2]=\nu_2 + \sig^2=1$, where $\sig^2 \ge 0$ denotes the Gaussian variance of $J$, possibly $\sig^2=0$).
We can deduce the asymptotic normality and the uniform tail-probability estimate:

\begin{Thm}
\label{hm:thm_old}
Suppose that Assumptions \ref{Moments} to \ref{Identifiability} holds.
\begin{enumerate}
\item The GQMLE $\tes=(\ges,\aes)$ satisfies
\begin{equation}
\sqrt{T_n}(\tes-\tz)\cil N_{p_\gam + p_\al}\left(0,\Gam^{-1}\Sig_\theta(\Gam^{-1})^\top\right),
\end{equation}
where the $p\times p$-matrix% $\Sig:=\begin{pmatrix}\Sig_\gam &\Sig_{\gam,\al}\\ \Sig_{\gam,\al}^\top & \Sig_{\al}\end{pmatrix}$ is defined as
\begin{equation}
\Sig_\theta = \begin{pmatrix}\Sig_\gam &\Sig_{\gam,\al}\\ \Sig_{\gam,\al}^\top & \Sig_{\al}\end{pmatrix},
\nonumber
\end{equation}
is defined by
\begin{align}
%    \nn &\Sig_{\gam} = \int_\mbbr\int_\mbbr \left(\frac{\p_\gam c(x,\gam_0)}{c(x,\gam_0)}\right)^{\otimes2} z^4\nu_0(dz)\pi_0(dx),\\
%    \nn &\Sig_{\gam,\al} = \int_\mbbr\int_\mbbr \frac{\p_\gam c(x,\gam_0)\p_\al a(x,\al_0)}{c^2(x,\gam_0)} z^3\nu_0(dz)\pi_0(dx), \\
    \nn &\Sig_{\gam} = \nu_4 \, \int \left(\frac{\p_\gam c(x,\gam_0)}{c(x,\gam_0)}\right)^{\otimes2} \pi_0(dx),\\
    \nn &\Sig_{\gam,\al} = \nu_3\,\int \frac{\p_\gam c(x,\gam_0)\p_\al a(x,\al_0)}{c^2(x,\gam_0)} \pi_0(dx), \\
    \nn &\Sig_\al = \int \left(\frac{\p_\al a(x,\al_0)}{c(x,\gam_0)}\right)^{\otimes2} \pi_0(dx).
\end{align}

\item For any $L>0$, there exists a constant $C_L>0$ such that
\begin{equation}\label{eq: TPE}
\nn \sup_{n\in\mbbn} P\left(\left|\sqrt{T_n}(\tes-\tz)\right|>r\right)\leq \frac{C_L}{r^L}, \qquad r>0.
\end{equation}
\end{enumerate}
\end{Thm}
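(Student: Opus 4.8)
The plan is to prove both assertions simultaneously using the now-standard quasi-likelihood machinery (Yoshida's polynomial-type large deviation inequality together with the identifiability conditions), exploiting the stepwise structure of the estimation. First I would treat the $\gam$-stage. Introduce the rescaled random field $\mbby_{1,n}(\gam):=\mbbh_{1,n}(\gam)-\mbbh_{1,n}(\gam_0)$ and show that, uniformly on $\overline{\Theta}_\gam$, $\mbby_{1,n}(\gam)\cip\mbby_1(\gam)$, where $\mbby_1$ is the limit field from \eqref{idg}. The key inputs are: the ergodic theorem (Assumption \ref{Stability}) for functionals of $X$, which converts time averages $\frac1{T_n}\sum_j g_{j-1}(\gam)$ into $\int g(x,\gam)\pi_0(dx)$; the martingale-plus-negligible decomposition of $\D_j X = h a_{j-1}(\al_0) + c_{j-1}(\gam_0)\D_j J$, using $\E[\D_j J\mid\mcf_{t_{j-1}}]=0$ and $\E[(\D_j J)^2\mid\mcf_{t_{j-1}}]=h$ (Assumption \ref{Moments}); and the moment bounds from Assumption \ref{Smoothness}\eqref{polynomial} combined with $\sup_t\E[|X_t|^q]<\infty$ to control the polynomially-growing integrands. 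The condition $nh^2\to0$ in \eqref{sampling.design} is what kills the drift contribution $h\,a_{j-1}$ inside the squared increment at the required order. For the score and Hessian I would show $\frac1{\sqrt{T_n}}\p_\gam\mbbh_{1,n}(\gam_0)\cil N(0,\Sig_\gam)$ via a martingale CLT — here the fourth moment $\nu_4$ enters through $\E[(\D_j J)^4\mid\mcf_{t_{j-1}}] = 3h^2 + h\nu_4 + o(h^2)$ after the appropriate normalization — and $-\p_\gam^2\mbbh_{1,n}(\gam)\cip \Gam_\gam$ uniformly near $\gam_0$, which is positive definite by Assumption \ref{nd}.

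Next I would handle the $\al$-stage conditionally on the first stage. The subtlety is that $\mbbh_{2,n}$ depends on $\ges$; I would write a Taylor expansion of $\p_\al\mbbh_{2,n}(\al_0)$ around $\ges=\gam_0$ and check that the plug-in perturbation is $O_p(T_n^{-1/2})\cdot O_p(1)$ and moreover does not contribute to the limiting distribution at first order, because the relevant cross-derivative term $\frac1{T_n}\sum_j\p_\gam\{c_{j-1}^{-2}(\gam_0)\}(\D_j X - h a_{j-1}(\al_0))a_{j-1}'$ has conditional mean zero and hence vanishes in probability — this is exactly the mechanism that makes the stepwise procedure legitimate and produces the block structure of $\Gam$ with the off-diagonal covariance $\Sig_{\gam,\al}$ governed by $\nu_3$ (the odd moment surviving because $\D_j X$ enters linearly, not quadratically, at the $\al$-stage). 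The asymptotic variance of the $\al$-score is then $\Sig_\al$, and the Hessian converges to $\Gam_\al$.

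For part (1), once the uniform convergence of the fields, the CLT for the scores, and the local uniform convergence of the Hessians to the positive-definite $\Gam_\gam$, $\Gam_\al$ are in hand, the asymptotic normality $\sqrt{T_n}(\tes-\tz)\cil N(0,\Gam^{-1}\Sig_\theta(\Gam^{-1})^\top)$ follows by the standard $Z$-estimator argument (delta method / first-order expansion of the estimating equations), with the joint convergence of the two scores giving the full covariance matrix $\Sig_\theta$. For part (2), the uniform tail-probability (polynomial-type large deviation) estimate, I would invoke the scheme of \cite{Mas13-1}, \cite{MasUeh17-2}: verify the moment conditions on the normalized field and its derivatives — which reduce precisely to $\E[|X_t|^q]$ bounds of every order (available by Assumption \ref{Stability}(2)) and the polynomial-growth estimates \eqref{polynomial} — together with the nondegeneracy \eqref{idg}–\eqref{ida}, and then apply Yoshida's PLDI to obtain, for every $L>0$, a constant $C_L$ with $\sup_n P(|\sqrt{T_n}(\tes-\tz)|>r)\le C_L r^{-L}$; the stepwise nature again requires doing this first for $\ges$ and then, on the event that $\ges$ is close to $\gam_0$, for $\aes$. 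The main obstacle I expect is bookkeeping the plug-in error from $\ges$ in the $\al$-stage at the level of the large-deviation estimate (not just in probability): one must show the perturbation terms satisfy the same polynomial moment bounds uniformly, which needs the tail estimate for $\ges$ from the first stage fed into the second — this coupling is routine in principle but is where the argument is most delicate, and it is essentially the content already established in \cite{MasUeh17-2}, so I would cite it for the details while reproducing the structure above.
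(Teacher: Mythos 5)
Your proposal is correct and follows essentially the same route as the paper, which itself does not spell out a proof but defers to \cite{Mas13-1} and \cite{MasUeh17-2}: uniform convergence of the stepwise quasi-likelihood fields via the ergodic theorem, a martingale CLT for the scores in which $\nu_4$ and $\nu_3$ enter exactly as you describe, local uniform convergence of the Hessians to $\Gam_\gam$ and $\Gam_\al$, and Yoshida's polynomial-type large deviation inequality for the tail estimate, with the plug-in of $\ges$ into the $\al$-stage controlled as you indicate. The only point worth noting is the one the paper itself makes: the cited works use a $Z$-estimator type identifiability condition rather than the $M$-estimator type Assumption \ref{Identifiability}, but the argument goes through without essential change.
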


We note that $\Sig_{\gam,\al}$, $\aes$ and $\ges$ are asymptotically independent if $\nu_3=0$, hence if in particular $\nu$ is symmetric.
We refer to \cite{Mas13-1} and \cite{MasUeh17-2} for technical details of the proof of Theorem \ref{hm:thm_old}; 
although the two cited papers used a $Z$-estimator type identifiability condition, which is seemingly different from Assumption \ref{Identifiability} ($M$-estimator type), it is trivial that we can follow the same line without any essential change. 

\medskip

%\color{red}

To appreciate the difficulty of relaxing the standing condition $nh^2\to 0$ in \eqref{sampling.design}, let us first mention the case of diffusions:
let $(w_t)$ be a standard Wiener process, and consider the following one-dimensional diffusion process
\begin{equation*}
dY_t=a(Y_t)dt+b(Y_t)dw_t
\end{equation*}
defined on the stochastic bases $(\Omega, \mcf, (\mcf_t)_{t\in\mbbr_+}, P)$.
Write the infinitesimal generator of $Y$ as $\mca$.
By repeatedly applying It\^{o}'s formula, it follows that for $0\leq s<t$, $q\in\mbbn$, and a sufficiently smooth function $f$, we have
\begin{align*}
&f(Y_t)\\
&=f(Y_s)+\int_s^t \p f(Y_u)dY_u+\frac{1}{2}\int_s^t \p^2 f(Y_u)b^2(Y_u)du\\
&=f(Y_s)+\int_s^t \left(\p f(Y_u)a(Y_u)+\frac{1}{2}\p^2 f(Y_u)b^2(Y_u)\right)du+\int_s^t \p f(Y_u)b(Y_u)dw_u\\
&=f(Y_s)+\int_s^t \mca f(Y_u)du+\int_s^t \p f(Y_u)b(Y_u)dw_u\\
&=f(Y_s)+\int_s^t \left(\mca f(Y_s)+\int_u^s \mca^2 f(Y_v)dv+\int_u^s\p\mca f(Y_v)b(Y_v)dw_v\right)du+\int_s^t \p f(Y_u)b(Y_u)dw_u\\
&=f(Y_s)+(t-s)\mca f(Y_s)+\int_s^t \left(\int_u^s \mca^2 f(Y_v)dv+\int_u^s\p\mca f(Y_v)b(Y_v)dw_v\right)du+\int_s^t \p f(Y_u)b(Y_u)dw_u\\
&=\sum_{i=0}^{q-1}\frac{(t-s)^i}{i!}\mca^i f(Y_s)+\int_s^t\int\dots\int \mca^qf(Y_{v_q})dv_1\dots dv_q+\text{(martingale term)}.
\end{align*}
Thus, under suitable integrability conditions, we obtain the expansion of $E[f(Y_t)|\mcf_s]$:
\begin{equation}
E[f(Y_t)|\mcf_s]=\sum_{i=0}^{q-1}\frac{(t-s)^i}{i!}\mca^i f(Y_s)+O_p((t-s)^q).
\label{hm:ho-ItoTaylor}
\end{equation}
In particular, the first-order approximation of the conditional expectation and conditional variance are given by
\begin{align}
E[X_t|\mcf_s] &\approx X_s+(t-s)a(X_s,\al_0), \nn\\
V[X_t|\mcf_s] &\approx E[(X_t-X_s-(t-s)a(X_s,\al_0))^2|\mcf_s] \approx (t-s)c(X_s,\gam_0),
\nonumber
\end{align}
respectively, which are used for constructing the GQL for diffusions.
Relaxation of the condition $nh^2\to0$ to $nh^k\to0$ for $k>2$ is then possible by taking $q=q(k)$ large enough according to the value of $k$:
the associated GQMLE has the consistency and asymptotic normality under $nh^k\to0$.
We refer to \cite{Kes97} for details.
At this point, we should remark that 
%since the infinitesimal generator consists of the drift and diffusion coefficients, first and second differential operator, 
the GQL based on the first term on the right-hand side of \eqref{hm:ho-ItoTaylor} (with $f(y)=y$ and $f(y)=y^2$) is fully explicit whatever $k>2$ is.

On the other hand, although similar It\^{o}-Taylor expansions to $E[X_t|\mcf_s]$ and $V[X_t|\mcf_s]$ can be easily derived in our L\'{e}vy driven case \eqref{hm:sde}, the corresponding infinitesimal generator contains not only the differential operator but also the integral operator with respect to the L\'{e}vy measure $\nu_\eta$ of the driving L\'{e}vy noise.
%: recall that we consider the process $X$ which is the solution of the following SDE:
%$$dX_t=a(X_t,\al)dt+c(X_{t-},\gam)dJ_t^\eta,$$
%where $J$ is a pure-jump L\'{e}vy process whose L\'{e}vy measure is $\nu_\eta$.
Specifically, its infinitesimal generator $\tilde{\mca}$ is given by
$$\tilde{\mca} f(x)=a(x,\al)\p f(x)+\int (f(x+c(x,\gam)z)-f(x)-\p f(x)c(x,\gam)z)\nu_\eta(dz),$$
for a suitable function $f$.
Consequently, the modified GQL based on the higher-order It\^{o}-Taylor expansion \eqref{hm:ho-ItoTaylor} contains the unknown parameter $\eta$ in addition to the drift and scale parameters.
It is not clear that the simultaneous estimation of $\al,\gam,$ and $\eta$ by the modified GQL has a nice theoretical property.
Even if it does, the entailed numerical optimization involved would be quite heavy and unstable since, for each $\eta$, we need to repeatedly compute several integrals with respect to $\nu_\eta$ inside of the modified GQL.
For this reason, it is difficult to remove the condition $nh^2\to0$ in the present general non-linear-SDE setting, as long as we use the GQL based on the stochastic expansion \eqref{hm:ho-ItoTaylor}.

%\color{black}

\medskip

In practice, the sampling points $t_1,\dots t_n$ may not be equally spaced.
In such a case, supposing \eqref{sampling.design}, we remark that the same statement as in Theorem \ref{hm:thm_old} remains in place under the additional ``sampling-balance'' condition:
\begin{equation}
\frac{\min_{1\leq j\leq n}(t_j-t_{j-1})}{\max_{1\leq j\leq n}(t_j-t_{j-1})} \to 1.
\nonumber
\end{equation}
For more technical details, see the discussion in \cite[p. 1604--1605]{Mas13-1}.

%%%%%
%%%%%
\section{Theoretical results}
\label{yu:main}

%%%%%
\subsection{Stochastic expansion of residual functional}

Having the GQMLE in hand, we now turn to approximating $\mcl(J_1)$, the the unit-time distribution of $J$, based on the residuals
\begin{equation}
\nn \hat{\ep}_i = \sum_{j\in A_i} \widehat{\D_j J},
\end{equation}
where
%, with abbreviating as $\hat{f}_{j-1}=f_{j-1}(\tes)$, we wrote
\begin{equation}
\nn \widehat{\D_j J}=\frac{\D_j X-h a_{j-1}(\aes)}{c_{j-1}(\ges)}.
\end{equation}
Write $\hat{u}_{\al,n} = \nn \sqrt{T_n}(\aes-\al_0)$ and $\hat{u}_{\gam,n} = \nn \sqrt{T_n}(\ges-\gam_0)$, and let
\begin{equation}
\hat{u}_{\theta,n}:=(\hat{u}_{\gam,n},\hat{u}_{\al,n}).
\nonumber
\end{equation}
From now on we will mostly omit ``$(\tz)$'' from notation. In particular, for a measurable function $f(x,\theta)$ we will abbreviate $f_{j-1}(\tz)$ as $f_{j-1}$.
%The next theorem will be used later to conduct $M$-estimation of $\eta$.

\begin{Thm}\label{yu:se}
Suppose that Assumptions \ref{Moments} to \ref{Identifiability} hold. % and Assumption \ref{estimating-1}
Let $\rho:\,\mbbr\to\mbbr$ be a $\mcc^{2}$-function such that
\begin{equation}
\max_{i\in\{0,1,2\}}\left|\p^i_\ep \rho(\ep) \right|\lesssim 1+|\ep|^{C}.
\nonumber
\end{equation}
Then, we have
\begin{equation}
\sum_{i=1}^{\left\lf T_n\right\rf} \rho\left(\hat{\ep}_i\right) 
=\sum_{i=1}^{\left\lf T_n\right\rf} \rho\left(\ep_i\right) 
+ \frac{1}{\sqrt{T_n}} \sum_{i=1}^{\left\lf T_n\right\rf} \p_\ep \rho\left(\ep_i\right) b_i [\hat{u}_{\theta,n}]
+o_p\left(\sqrt{T_n}\right),
\label{yu:se-1}
\end{equation}
where the random sequence $(b_i)_{i=1}^{\left\lf T_n\right\rf}\subset\mbbr^p$ is given by
\begin{equation}
%\mathfrak{b}_i
b_i=b_i(\tz)  := \sum_{j\in A_i} \begin{pmatrix}\p_\gam (c^{-1})_{j-1} (\D_j X-h a_{j-1}) \\
-h c^{-1}_{j-1} \p_\al a_{j-1}\end{pmatrix}.
\nonumber
\end{equation}
\end{Thm}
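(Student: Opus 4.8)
The plan is to perform a Taylor expansion of $\rho(\hat\ep_i)$ around $\ep_i$ and control the remainder terms using the tail-probability estimate from Theorem \ref{hm:thm_old} together with the ergodicity and moment bounds from Assumptions \ref{Moments}--\ref{Stability}. First I would write $\hat\ep_i - \ep_i$ explicitly: since $\widehat{\D_j J} = (\D_j X - h a_{j-1}(\aes))/c_{j-1}(\ges)$ and $\D_j J = (\D_j X - h a_{j-1})/c_{j-1}$ (recall $\D_j X = h a_{j-1} + c_{j-1}\D_j J$ up to the SDE structure—more precisely $\D_j X = \int_{t_{j-1}}^{t_j}a(X_s,\al_0)ds + \int_{t_{j-1}}^{t_j}c(X_{s-},\gam_0)dJ_s$, but the leading term is $h a_{j-1} + c_{j-1}\D_j J$), the difference $\widehat{\D_j J} - \D_j J$ is a smooth function of $(\aes,\ges)$ vanishing at $\tz$. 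A first-order Taylor expansion in $\theta$ about $\tz$ gives $\widehat{\D_j J} - \D_j J = \p_\theta(\cdot)_{j-1}[\tes-\tz] + (\text{quadratic remainder})$, where the linear coefficient is exactly the summand of $b_i$ after multiplying by $\sqrt{T_n}$: indeed $\p_\gam\big((\D_j X - h a_{j-1})/c_{j-1}\big) = \p_\gam(c^{-1})_{j-1}(\D_j X - h a_{j-1})$ and $\p_\al\big((\D_j X - h a_{j-1})/c_{j-1}\big) = -h c^{-1}_{j-1}\p_\al a_{j-1}$. Summing over $j\in A_i$ yields $\hat\ep_i - \ep_i = \frac{1}{\sqrt{T_n}} b_i[\hat u_{\theta,n}] + R_i$ with $R_i$ the accumulated quadratic-in-$(\tes-\tz)$ remainder.

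Next I would Taylor-expand $\rho$ itself: $\rho(\hat\ep_i) = \rho(\ep_i) + \p_\ep\rho(\ep_i)(\hat\ep_i - \ep_i) + \frac12\p_\ep^2\rho(\tilde\ep_i)(\hat\ep_i-\ep_i)^2$ for some intermediate point $\tilde\ep_i$. Substituting the expression for $\hat\ep_i-\ep_i$, the linear term produces $\frac{1}{\sqrt{T_n}}\p_\ep\rho(\ep_i)b_i[\hat u_{\theta,n}]$ (the stated main term) plus $\p_\ep\rho(\ep_i)R_i$, and the quadratic term is $\frac12\p_\ep^2\rho(\tilde\ep_i)(\hat\ep_i-\ep_i)^2$. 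I would then show both $\sum_{i=1}^{\lf T_n\rf}\p_\ep\rho(\ep_i)R_i$ and $\sum_{i=1}^{\lf T_n\rf}\frac12\p_\ep^2\rho(\tilde\ep_i)(\hat\ep_i-\ep_i)^2$ are $o_p(\sqrt{T_n})$. The key quantitative inputs are: (i) $|\hat u_{\theta,n}| = O_p(1)$ with uniform tail bounds (Theorem \ref{hm:thm_old}(2)), so $\hat\ep_i - \ep_i$ and $R_i$ are each "of order $1/\sqrt{T_n}$" in an $L^q$-sense, with constants controlled by increments of the coefficients and their derivatives over $A_i$; (ii) each $A_i$ contains $\asymp 1/h$ indices $j$, and $\sum_{j\in A_i}|\D_j X - h a_{j-1}|$-type quantities are, after centering, of order $O_p(1)$ per unit interval, uniformly in $i$, by Burkholder--Davis--Gundy and the moment bound $\sup_t E|X_t|^q<\infty$; (iii) polynomial growth of $\rho,\p_\ep\rho,\p_\ep^2\rho$ together with moment bounds on $\ep_i$ and on the intermediate point $\tilde\ep_i$ (which lies between $\ep_i$ and $\hat\ep_i$, hence has the same polynomial moment control once one knows $|\hat\ep_i-\ep_i|$ is $L^q$-small uniformly). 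Concretely I would bound $E\big[\sum_i |\p_\ep^2\rho(\tilde\ep_i)|(\hat\ep_i-\ep_i)^2\big] \lesssim \lf T_n\rf \cdot \sup_i\big(E[|\p_\ep^2\rho(\tilde\ep_i)|^2]\big)^{1/2}\big(E[(\hat\ep_i-\ep_i)^4]\big)^{1/2} \lesssim T_n \cdot T_n^{-1} = O(1) = o(\sqrt{T_n})$, using that $E[(\hat\ep_i-\ep_i)^4]^{1/2}\lesssim T_n^{-1}$, and similarly $E\big[|\sum_i\p_\ep\rho(\ep_i)R_i|\big]\lesssim T_n\cdot\sup_i E[|R_i|\cdot|\p_\ep\rho(\ep_i)|] \lesssim T_n\cdot T_n^{-1} = o(\sqrt{T_n})$ since $R_i$ is quadratic in $\tes-\tz = O_p(T_n^{-1/2})$ so $E|R_i|\lesssim T_n^{-1}$. (Here the single sum over $i$ of $O(T_n^{-1})$ terms being $O(1)$ is the crucial gain.)

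The main obstacle is the uniform-in-$i$ control of the per-unit-interval quantities built from $\D_j X$, in particular making rigorous that $\sum_{j\in A_i}\p_\gam(c^{-1})_{j-1}(\D_j X - h a_{j-1})$ and the accumulated remainders $R_i$ have $L^q$-norms that are bounded uniformly over $i\le\lf T_n\rf$ and that the number $\lf T_n\rf$ of them times their decay rate still gives $o(\sqrt{T_n})$. This requires: decomposing $\D_j X - h a_{j-1} = c_{j-1}\D_j J + (\text{drift correction}) + (\text{diffusive correction from } c(X_{s-})-c_{j-1})$, where the dominant piece $\sum_{j\in A_i} c_{j-1}\D_j J/c_{j-1}\cdot(\cdots)$ must be handled by a martingale argument conditional on $\mcf_{i-1}$, and the correction terms bounded by BDG plus Assumption \ref{Smoothness}'s polynomial growth and Assumption \ref{Stability}'s moment bounds; a Sobolev/maximal-inequality argument (or a union bound combined with the tail estimate) is needed to pass from "$E|\cdot|^q$ small" for each fixed $i$ to a statement about the sum. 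I would also need to verify that the intermediate evaluation points $\tilde\ep_i$ (in $\rho$) and $\tilde\theta$ (in the $\theta$-expansion of the coefficients, using the convexity of $\Theta$) stay in regions where the polynomial bounds apply — this is where the boundedness of $\overline\Theta$ and the tail-probability estimate for $\tes$ are used to absorb the event $\{\tes\notin\text{nbhd of }\tz\}$ into a negligible term. Everything else is a routine, if lengthy, bookkeeping of moment bounds of the type already developed in \cite{Mas13-1} and \cite{MasUeh17-2}.
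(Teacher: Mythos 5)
Your overall architecture coincides with the paper's: a first-order Taylor expansion of $\widehat{\D_j J}$ in $\theta$ around $\tz$ to extract the term $T_n^{-1/2}b_i[\hat{u}_{\theta,n}]$, followed by a second-order Taylor expansion of $\rho$ around $\ep_i$, with the remainders controlled by uniform-in-$i$ $L^q$-bounds, the Sobolev embedding over the bounded convex $\Theta$ for the random intermediate point, and Burkholder's inequality for the stochastic integrals. Your identification of the linear coefficient as $b_i$ is correct, and your treatment of the quadratic-in-$(\tes-\tz)$ remainder (giving $E|R_i|\lesssim T_n^{-1}$ and hence an $O(1)$ contribution after summing over $i$) matches the paper's bound on its term $\zeta_{1,i}$.

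There is, however, a genuine gap in the quantitative bookkeeping. You write $\hat{\ep}_i-\ep_i=T_n^{-1/2}b_i[\hat{u}_{\theta,n}]+R_i$ with $R_i$ purely quadratic in $\tes-\tz$, and conclude $E[(\hat{\ep}_i-\ep_i)^4]^{1/2}\lesssim T_n^{-1}$ so that the second-order term in $\rho$ sums to $O(1)$. This is too optimistic, because $\hat{\ep}_i-\ep_i$ contains two further pieces: (a) the Euler-discretization error $\sum_{j\in A_i}c_{j-1}^{-1}\bigl(\int_{t_{j-1}}^{t_j}(a_s-a_{j-1})\,ds+\int_{t_{j-1}}^{t_j}(c_{s-}-c_{j-1})\,dJ_s\bigr)$, which you mention qualitatively in your last paragraph but omit from the rate computation; and (b) a boundary mismatch, since $A_i$ tiles $(\lf(i-1)/h\rf h,\lf i/h\rf h]$ rather than $(i-1,i]$, so $\sum_{j\in A_i}\D_j J\neq\ep_i$ exactly --- this term never appears in your proposal. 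Both pieces satisfy only $E[|\cdot|^r]\lesssim h$ (not $h^{r/2}$), so the correct uniform bound is $\max_i E[|\hat{\ep}_i-\ep_i|^4]^{1/2}\lesssim h^{1/2}\vee T_n^{-1}$, and the quadratic term in $\rho$ sums to $O(T_nh^{1/2}\vee 1)=O(nh^{3/2}\vee 1)$. Since $nh^{3/2}=\sqrt{nh^2}\,\sqrt{T_n}$, this is $o(\sqrt{T_n})$ only by virtue of the standing condition $nh^2\to0$ in \eqref{sampling.design}, which your argument never invokes but which is essential here; the same condition is needed for the cross term $\sum_i\p_\ep\rho(\ep_i)\cdot(\text{piece (a)}+\text{piece (b)})$, where Schwarz gives $T_n\sqrt{h}=\sqrt{nh^2}\,\sqrt{T_n}$. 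Once these two error pieces are inserted and $nh^2\to0$ is used, your argument becomes the paper's proof (Lemma~\ref{yu:gap} followed by the concluding Taylor step).
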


%For convenience, we introduce the function space
%\begin{align}
%\mcc^{k,l}_\star
%&:= \left\{ f \in\mcc^{2,3}(\mbbr \times \overline{\Theta_\eta}) :~
%\text{$\ds{\max_{0\le i\le k \atop 0\le j\le l}\sup_{\eta\in\overline{\Theta_\eta}}\left|\p^i_\ep \p^j_\eta f(x)m(\ep,\eta)\right|\lesssim 1+|\ep|^{C_{k,l}}}$ for some $C_{k,l}\ge 0$}
%\right\}.
%\nonumber
%\end{align}

Theorem \ref{yu:se} reveals the quantitative effect of plugging in $\tes$.
The second term on the right-hand side of \eqref{yu:se-1} is not $o_p(\sqrt{T_n})$ but $O_p(\sqrt{T_n})$, which implies that the asymptotic distribution of $\ees$ is indeed subject to influence of the proposed unit-time approximation; this is natural and expected, for we are using the $\sqrt{T_n}$-consistent (generally sub-optimal) estimator of $\theta$.

\medskip

To prove Theorem \ref{yu:se}, we begin with a preliminary estimate.

\begin{Lem}\label{yu:gap}
Under Assumptions \ref{Moments} to \ref{Identifiability}, for each $r\ge 2$ we have
\begin{align}
& \max_{1\le i\le \lf T_n\rf} \E\left[\left| \hat{\ep}_i-\ep_i - \frac{1}{\sqrt{T_n}} b_i [\hat{u}_{\theta,n}] \right|^r\right] \lesssim h \vee T_n^{-r}.
\label{yu:gap-1}
%\\
%& \max_{1\le \le \lf T_n\rf} \E\left(\left| \hat{\ep}_i-\ep_i\right|^r\right) \lesssim h \vee T_n^{-r/2}.
%\label{yu:gap-2}
\end{align}
%$\overline{\del}_i := \hat{\ep}_i-\ep_i - \frac{1}{\sqrt{T_n}} b_i [\hat{u}_{\theta,n}]$
\end{Lem}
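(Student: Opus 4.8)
The plan is to decompose the gap $\hat{\ep}_i - \ep_i$ into a sum over the fine increments $j \in A_i$ and then, for each $j$, expand the residual $\widehat{\D_j J}$ around the true parameter $\tz$. Recall that
\begin{equation}
\widehat{\D_j J} = \frac{\D_j X - h a_{j-1}(\aes)}{c_{j-1}(\ges)}, \qquad \D_j J = \frac{\D_j X - h a_{j-1}(\al_0)}{c_{j-1}(\gam_0)},
\nonumber
\end{equation}
since the true increment $\D_j J$ is recovered by plugging in $\tz$ (this uses the SDE \eqref{hm:sde} only through the integral representation $\D_j X = \int_{t_{j-1}}^{t_j} a(X_u,\al_0)du + \int_{t_{j-1}}^{t_j} c(X_{u-},\gam_0)dJ_u$, which contributes a further discretization error to be controlled separately). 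I would write $\widehat{\D_j J} - \D_j J = g_j(\tes) - g_j(\tz)$ for the smooth map $g_j(\theta) := (\D_j X - h a_{j-1}(\al))/c_{j-1}(\gam)$, and Taylor-expand to first order: $g_j(\tes) - g_j(\tz) = \p_\theta g_j(\tz)[\tes - \tz] + \tfrac12 (\tes-\tz)^\top \p_\theta^2 g_j(\bar\theta)(\tes-\tz)$ for an intermediate $\bar\theta$. A direct computation gives $\p_\theta g_j(\tz) = \big(\p_\gam(c^{-1})_{j-1}(\D_j X - h a_{j-1}),\, -h c_{j-1}^{-1}\p_\al a_{j-1}\big)$, so that summing over $j\in A_i$ yields exactly $b_i$, and the leading term becomes $b_i[\tes-\tz] = \tfrac{1}{\sqrt{T_n}} b_i[\hat u_{\theta,n}]$. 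Thus the quantity to bound in $L^r$-norm is the aggregate of the quadratic remainders $\sum_{j\in A_i} \tfrac12 (\tes-\tz)^\top \p_\theta^2 g_j(\bar\theta)(\tes-\tz)$ over the $O(1/h)$ indices in $A_i$.

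The key estimates are then: (i) the second $\theta$-derivatives $\p_\theta^2 g_j(\bar\theta)$ are, by Assumption \ref{Smoothness}, polynomially bounded in $\sup_{u\in[t_{j-1},t_j]}|X_u|$ and in $|\D_j X|$, whose $L^q$-moments are uniformly bounded for every $q$ by Assumption \ref{Stability} together with the standard moment bound $\E[|\D_j X|^q] \lesssim h^{q/2}$ (or $\lesssim h$ for small $q$) for Lévy-driven SDEs; (ii) by Theorem \ref{hm:thm_old}(2), $\sqrt{T_n}(\tes-\tz)$ has uniformly bounded moments of all orders, so $|\tes-\tz|^2 = O_p(T_n^{-1})$ with $L^q$-control. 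Combining via Hölder's inequality, each quadratic remainder term is $L^r$-bounded by $C T_n^{-1}$ times a polynomial moment factor times (roughly) $h^{1/2}$ from the $\D_j X$ factors when present; summing $|A_i| \asymp 1/h$ such terms gives a bound of order $T_n^{-1}(1 + h^{-1}\cdot h^{1/2}) \lesssim h^{1/2} T_n^{-1}$ — but I must be careful, because the $-h c_{j-1}^{-1}\p_\al a_{j-1}$-type contributions and their derivatives carry an explicit factor $h$, whereas the $\gam$-derivative terms carry a factor $\D_j X$; a careful bookkeeping of which terms have the $h$-factor versus the $\sqrt{h}$-order $\D_j X$-factor is needed to land on the claimed $h \vee T_n^{-r}$. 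The $T_n^{-r}$ alternative on the right-hand side presumably absorbs the case where the quadratic-remainder bound $T_n^{-1}\cdot(\text{something } O(1))$ dominates $h$ — i.e. when $T_n^{-1} \gtrsim h$, which under \eqref{sampling.design} can happen.

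For the discretization error coming from replacing $\D_j X$ by its Euler-type proxy inside $g_j$, I would use the It\^o--Taylor expansion of $\int_{t_{j-1}}^{t_j}(a(X_u,\al_0) - a(X_{t_{j-1}},\al_0))du$ and $\int_{t_{j-1}}^{t_j}(c(X_{u-},\gam_0)-c(X_{t_{j-1}},\gam_0))dJ_u$; the first is $O_p(h^2)$ per increment in a strong sense, the second is a martingale increment of conditional $L^2$-size $O(h^{3/2})$, and both aggregate over $A_i$ (with $|A_i|\asymp h^{-1}$) and over the additional $\sqrt{T_n}$-normalization to contribute within the stated order. The main obstacle I anticipate is \emph{not} any single estimate but the combinatorial bookkeeping: one must track, term by term across the first- and second-order Taylor expansions and the It\^o--Taylor corrections, exactly which factors carry an explicit power of $h$, which carry an increment $\D_j X$ (hence an extra $\sqrt h$ in $L^r$ for moderate $r$, but only an $h^{1/r}$ gain for large $r$ — this is why the bound is $h$, not $h^{r/2}$, after taking $r$-th roots are undone), and which carry a factor $|\tes - \tz|$; assembling these into the clean dichotomy $h \vee T_n^{-r}$, uniformly in $1\le i\le \lf T_n\rf$, is the delicate part. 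Throughout, all moment bounds are uniform in $j$ and $i$ by stationarity-type control from Assumption \ref{Stability}, which is what makes the $\max_{1\le i\le\lf T_n\rf}$ harmless.
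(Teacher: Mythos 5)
Your overall architecture matches the paper's: Taylor-expand the residual map $\theta\mapsto(\D_j X-h a_{j-1}(\al))/c_{j-1}(\gam)$ around $\tz$, identify the gradient summed over $A_i$ with $b_i$, and control separately the quadratic remainder and the discretization error from replacing $\D_j X$ by its Euler proxy. Two points, however, are genuine gaps rather than bookkeeping.

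First, your decomposition silently identifies $\ep_i=J_i-J_{i-1}$ with $\sum_{j\in A_i}\D_j J=J_{\lf i/h\rf h}-J_{\lf (i-1)/h\rf h}$. These differ by the two boundary pieces $J_{(\lf (i-1)/h\rf+1)h}-J_{i-1}$ and $J_i-J_{\lf i/h\rf h}$, which the paper isolates as a third error term $\zeta_{3,i}$ and bounds by stationarity of increments ($r$-th moment $\lesssim h$). This is easy to repair but is missing from your decomposition. Second, and more seriously, your treatment of the quadratic remainder $\sum_{j\in A_i}\tfrac12(\tes-\tz)^\top\p_\theta^2 g_j(\bar\theta)(\tes-\tz)$ would not close. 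The dominant contribution to $\p_\gam^2 g_j$ carries the factor $\D_j X$, whose $L^r$-norm for a jump-driven SDE is of order $h^{1/r}$ (not $h^{1/2}$) for $r>2$; a termwise absolute-value bound over the $|A_i|\asymp h^{-1}$ indices therefore produces $h^{1/r-1}\to\infty$, and even your own (arithmetically slipped) $h^{-1}\cdot h^{1/2}=h^{-1/2}$ diverges. The paper avoids this by keeping the martingale structure: it writes $\sum_{j\in A_i}\p_\gam^k(c^{-1})_{j-1}(\gam)\int_{t_{j-1}}^{t_j}c_{s-}dJ_s$ as a single stochastic integral over $[\lf(i-1)/h\rf h,\lf i/h\rf h]$ and applies Burkholder's inequality, which makes the whole sum $O(1)$ in $L^r$ uniformly in $i$; and because the intermediate point $\bar\theta$ is random, it first passes to $\sup_{\gam\in\Theta_\gam}$ via the Sobolev embedding inequality on the bounded convex parameter space. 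Only with these two devices does one obtain $\E[|\zeta_{1,i}|^r]\lesssim T_n^{-r}$, which is the source of the $T_n^{-r}$ alternative in \eqref{yu:gap-1}; your Hölder-only route cannot reach it.
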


\begin{proof}
We will abbreviate $a(X_s,\al_0)$ as $a_s$ and so on; with a slight abuse of notation, we will write $a_{j-1}$ for $a_{t_{j-1}}$.
Write $\widehat{\D_j J}=\del_j(\tes)$; then,
\begin{equation}
\del_j(\tes) = \del_j + (\p_\theta\del_j) [\tes-\tz] + \frac12 \left(\p_\theta^2\del_j(\tz+s_n(\tes-\tz))\right)[(\tes-\tz)^{\otimes 2}]
\nonumber
\end{equation}
for a suitable random point $s_n\in[0,1]$.
Decompose $\hat{\ep}_i-\ep_i$ as follows:
\begin{align}
\hat{\ep}_i-\ep_i &= \sum_{j\in A_i}
(\widehat{\D_j J}-\D_j J)+\left(J_{(\lf\frac{i-1}{h}\rf  + 1)h} - J_{i-1}\right)+\left(J_i - J_{\lf\frac{i}{h }\rf h }\right) \nn\\
&=\frac{1}{\sqrt{T_n}} b_i [\hat{u}_{\theta,n}]+\zeta_{1,i}+\zeta_{2,i}+\zeta_{3,i},
\nonumber
\end{align}
where
\begin{align*}
%&\zeta_{1,i}:=\frac{1}{\sqrt{T_n}} b_i [\hat{u}_{\theta,n}],
%%\sum_{k\in A_j}\begin{pmatrix}\p_\gam c^{-1}_{k-1} (\D_j X-h a_{k-1})&-h \p_\al a_{k-1}\end{pmatrix}\begin{pmatrix}\ges-\gam_0\\\aes-\al_0\end{pmatrix},
%\\
&\zeta_{1,i}:= \frac{1}{2T_n} \sum_{j\in A_i} \left(\p_\theta^2\del_j(\tz+s_n(\tes-\tz))\right)[\hat{u}_{\theta,n}^{\otimes 2}],
%\sum_{k\in A_j}\p_\theta^{\otimes2}\left(\tilde{c}^{-1}_{k-1}(\D_k X-h \tilde{a}_{k-1})\right)\left[(\tes-\tz)^{\otimes 2}\right],
\\
&\zeta_{2,i}:=\sum_{j\in A_i}\left(\del_j - \D_j J\right) 
= \sum_{j\in A_i} c_{j-1}^{-1}\left( \intj (a_s-a_{j-1})ds+\intj (c_{s-}-c_{j-1})dJ_s\right),
\\
&\zeta_{3,i}:=\left(J_{(\lf\frac{i-1}{h}\rf  + 1)h} - J_{i-1}\right)+\left(J_i - J_{\lf\frac{i}{h }\rf h }\right).
\end{align*}
%We will look at $\zeta_{1,i}$, $\zeta_{2,i}$, and $\zeta_{3,i}$ in this order.
Before proceeding, let us note that by Theorem \ref{hm:thm_old} the sequence $(\hat{u}_{\theta,n})$ is $L^r$-bounded for each $r\ge 2$:
\begin{equation}
%\|\hat{u}_{\theta,n}\|_{L^r} := \E\left(|\hat{u}_{\theta,n}|^r\right)^{1/r} \lesssim 1,
\E\left(|\hat{u}_{\theta,n}|^r\right) \lesssim 1,
\label{yu:gap-p1}
\end{equation}
where we implicitly assume that $r\ge 1$ when using this notation.

First we will deduce
\begin{equation}
\max_{1\le i\le \lf T_n\rf} \E\left(\left| \zeta_{1,i} \right|^r\right) \lesssim T_n^{-r}.
\label{yu:gap-p2}
\end{equation}

Since the parameter space $\Theta=\Theta_\al\times\Theta_\gam$ is supposed to be bounded and convex, the Sobolev inequality is in force (see \cite{Ada73} for details): for a random field $u\in\mcc^1(\Theta)$ and $q>p$, we have
$$E\left[\sup_{\theta\in\Theta}|u(\theta)|^q\right] \lesssim \sup_{\theta\in\Theta}\left\{E[|u(\theta)|^q]+E[|\p_\theta u(\theta)|^q]\right\}.$$

Noting the identities
\begin{align}
\p_\gam^k \p_\al^l \del_j(\theta) &= - h \left\{\p_\gam^k (c^{-1})_{j-1}(\gam)\right\} \p_\al^l a_{j-1}(\al), \nn\\
\p_\gam^k \del_j(\theta) &= \left\{\p_\gam^k (c^{-1})_{j-1}(\gam)\right\} (\D_j X - h a_{j-1}(\al)),
\nonumber
\end{align}
valid for each $k\ge 0$ and $l\ge 1$,
%We also note that the number of elements of each $A_i$ equals $\lf 1/h\rf$.
we can apply Sobolev's and Jensen's inequalities to conclude that, for $r>p_\gam$,
\begin{align*}
&\E\left[\left| \sum_{j\in A_i} \left(\p_\theta^2\del_j(\tz+s_n(\tes-\tz))\right) \right|^r \right]\\
&\lesssim \E\left[\sup_{\gam\in\Theta_\gam}\left|
\sum_{j\in A_i} \p_\gam^2 (c^{-1})_{j-1}(\gam) \intj c_{s-}dJ_s\right|^r\right]
+ \E\left[ h \sum_{j\in A_i} (1+|X_{t_{j-1}}|^C)\right]
\\
&\lesssim \max_{k\in\{2,3\}}\sup_{\gam\in\Theta_\gam}
\E\left[\left| \sum_{j\in A_i} \p_\gam^k (c^{-1})_{j-1}(\gam) \intj c_{s-}dJ_s\right|^r\right] +1.
\end{align*}
Let $\chi_j(s)$ denote the indicator function of the interval $[t_{j-1},t_{j})$.
%\begin{equation*}
%\chi_k(s)=\begin{cases}1,& s\in[t_{k-1},t_{k}),\\
%0,& \text{otherwise}.\end{cases}
%\end{equation*}

To proceed, we recall Burkholder's inequality for stochastic integrals with respect to a centered {\lp}: % (cf. \cite[Lemma 5.1]{Pro})
under the moment conditions on $J_\eta$, for any predictable process $H$ and $q\geq 2$ we have
$$E\left[\left|\int_{\left\lf \frac{i-1}{h }\right\rf h }^{\left\lf \frac{i}{h }\right\rf h } H_sdJ_s\right|^q\right]
\leq K_q(\nu_{\eta,2}^{q/2}+\nu_{\eta,q})\int_{\left\lf \frac{i-1}{h }\right\rf h }^{\left\lf \frac{i}{h }\right\rf h } E\left[|H_s|^q\right]ds
\lesssim \int_{\left\lf \frac{i-1}{h }\right\rf h }^{\left\lf \frac{i}{h }\right\rf h } E\left[|H_s|^q\right]ds,$$
where $K_q$ is a positive constant depending only on $q$, and $\nu_{\eta,k} := \int z^k\,\nu_\eta(dz)$ for $k\geq 2$ (see \cite[Theorem IV 48]{Pro04}).

Then, we see that the last expectation equals
\begin{align}
%E\left[\left|\sum_{k\in A_j} \p_\gam^{\otimes 2} c^{-1}_{k-1}(\gam)\intk c_{s-}dJ_s\right|^q\right]
%&=
\label{yu:burk}\E\left[\left|\int_{\left\lf \frac{i-1}{h }\right\rf h }^{\left\lf \frac{i}{h }\right\rf h }\sum_{j\in A_i}\chi_j(s)
\p_\gam^k (c^{-1})_{j-1}(\gam) c_{s-}dJ_s\right|^r\right]
\lesssim \sum_{j\in A_i}\intj \E\left[\left| \p_\gam^k (c^{-1})_{j-1}(\gam)\, c_{s}\right|^r\right]ds 
\lesssim 1.
\end{align}
This together with \eqref{yu:gap-p1} concludes \eqref{yu:gap-p2}.

Turning to $\zeta_{2,i}$, we note the standard moment estimate: for any real $r\ge 2$,
\begin{equation}
\max_{j\le n}\sup_{s\in(t_{j-1},t_j]}\E^{}\left[|X_{s}-X_{t_{j-1}}|^r \right] \lesssim h.
\nonumber
\end{equation}
With this and the Lipschitz property of $x\mapsto (a(x,\al_0), c(x,\gam_0))$,  analogous arguments as in handling $\zeta_{1,i}$ yield that for each $r\geq2$
\begin{equation}
\max_{1\le i\le \lf T_n\rf} \E\left[\left| \zeta_{2,i} \right|^r\right] \lesssim h.
\label{yu:gap-p3}
\end{equation}

As for the remaining $\zeta_{3,i}$, it follows from Assumption \ref{Moments} and the stationarity of increments that for each $r\ge 2$,
\begin{align}
\max_{1\le i\le \lf T_n\rf} \E\left[\left|\zeta_{3,i}\right|^r\right]
\lesssim \E\left[\left(J_{j-1}-J_{\lf\frac{j-1}{h }\rf h +h }\right)^r\right]+\E\left[\left(J_{\lf\frac{j}{h }\rf h }-J_j\right)^r\right]\lesssim  h.
\label{yu:gap-p4}
\end{align}
Piecing together \eqref{yu:gap-p2}, \eqref{yu:gap-p3}, and \eqref{yu:gap-p4} concludes the proof.
\end{proof}

\begin{proof}[Proof of Theorem \ref{yu:se}]
Mimicking the estimates for \eqref{yu:gap-p2}, for each $r\geq 2$ we obtain
\begin{equation*}
\max_{1\le i\le \lf T_n\rf} \E\left[\left| \frac{1}{\sqrt{T_n}} b_i \right|^r\right] \lesssim T_n^{-r/2}.
\end{equation*}
Combined with \eqref{yu:gap-1} and \eqref{yu:gap-p1}, it follows from H\"{o}lder's inequality that for each $r\ge 2$, 
\begin{equation}
\max_{1\le i\le \lf T_n\rf} \E\left[\left| \hat{\ep}_i-\ep_i\right|^r\right] \lesssim h \vee T_n^{-\frac{r}{2}},
\label{yu:gap-2}
\end{equation}
and hence $\max_{1\le i\le \lf T_n\rf} \E[|\hat{\ep}_i|^r]\lesssim 1$ as well.
We use the expression
\begin{align*}
\sum_{i=1}^{\left\lf T_n\right\rf} \rho\left(\hat{\ep}_i\right)
=\sum_{i=1}^{\left\lf T_n\right\rf} \rho\left(\ep_i\right)+\sum_{i=1}^{\left\lf T_n\right\rf} \p_\ep \rho\left(\ep_i\right)(\hat{\ep}_i-\ep_i)
+\frac{1}{2}\sum_{i=1}^{\left\lf T_n\right\rf} \p_\ep^2 \rho\left(\ep_i+u\left(\hat{\ep}_i-\ep_i\right)\right) (\hat{\ep}_i-\ep_i)^2
\end{align*}
for a (random) $u\in[0,1]$.
By means of Schwarz's inequality and \eqref{yu:gap-1},
\begin{align}
& \E\left[\left|\sum_{i=1}^{\left\lf T_n\right\rf} \p_\ep \rho\left(\ep_i\right)(\hat{\ep}_i-\ep_i)
- \frac{1}{\sqrt{T_n}} \sum_{i=1}^{\left\lf T_n\right\rf} \p_\ep \rho\left(\ep_i\right) b_i [\hat{u}_{\theta,n}]\right|\right]
\nn\\
&=\E\left[\left|
\sum_{i=1}^{\left\lf T_n\right\rf} \p_\ep \rho\left(\ep_i\right) \left( \hat{\ep}_i-\ep_i - \frac{1}{\sqrt{T_n}} b_i [\hat{u}_{\theta,n}] \right)
\right|\right]
\lesssim T_n \sqrt{h \vee T_n^{-2}} = \sqrt{T_n} \sqrt{nh^2 \vee T_n^{-1}} = o\left(\sqrt{T_n}\right).
\nonumber
\end{align}
By the moment estimates in the proof of Lemma \ref{yu:gap}, H\"{o}lder's inequality, and \eqref{yu:gap-2}, and also recalling \eqref{sampling.design}, we see that for $\del\in(1,2]$,
\begin{align}
& \E\left[\left|
\sum_{i=1}^{\left\lf T_n\right\rf} 
\p_\ep^2 \rho\left(\ep_i+u\left(\hat{\ep}_i-\ep_i\right)\right)(\hat{\ep}_i - \ep_i)^2 \right|\right] \nn\\
&\lesssim 
T_n \max_{1\le i\le \lf T_n\rf} \E\left[\left| \hat{\ep}_i-\ep_i\right|^{2\del}\right]^{1/\del}
\lesssim \sqrt{T_n} \left\{(nh^{1+2/\del})^{\del/2} \vee T_n^{-\del/2}\right\}^{1/\del} = o\left(\sqrt{T_n}\right).
\nonumber
\end{align}
This completes the proof.
\end{proof}

%%%%%
\subsection{$M$-estimation of noise parameter}

We keep Assumptions \ref{Moments} to \ref{Identifiability} in force.
Having Theorem \ref{yu:se} in hand, we proceed with estimation of $\eta$ based on the unit-time residual sequence $(\hat{\ep}_i)$. Let
\begin{equation}
\mbbh_{3,n}(\eta) :=\frac{1}{T_n}\sum_{i=1}^{\lf T_n\rf}m(\hat{\ep}_i,\eta),
\nonumber
\end{equation}
and consider an $M$-estimator
\begin{equation}
\label{hm:est.func.eta}
\ees \in \argmax_{\eta\in\overline{\Theta_\eta}} \mbbh_{3,n}(\eta).
%\sum_{i=1}^{\left\lf T_n\right\rf} m\left(\hat{\ep}_i,\eta\right).
\end{equation}
Among others, this includes the (quasi) maximum-likelihood for $m(\ep,\eta)=:\log f(\ep;\eta)$, where $\{f(\ep;\eta):\,\eta\in\Theta_\eta\}$ is a model for the unit-time noise distribution $\mcl(J_1)$.
We need to impose several conditions on the function $m$, all of which are standard in the general theory of $M$-estimation.
%Write
%\begin{equation}
%\Sig_{33}(\eta) = \E\left[\left(\p_\eta m(J_1,\eta)\right)^{\otimes2}\right].
%\nonumber
%\end{equation}

\begin{Assumption}\label{estimating}
$\ $\begin{enumerate}
\item $m\in\mcc^{2,3}(\mbbr \times \Theta_\eta)$ and $\ds{\max_{i\in\{0,1,2\} \atop k\in\{0,1,2,3\}}\sup_{\eta\in\overline{\Theta_\eta}}\left|\p^i_\ep \p^k_\eta m(\ep,\eta)\right|\lesssim 1+|\ep|^{C}}$ for some $C\ge 0$.
\item $\E\left[\p_\eta m(J_1,\eta_0)\right]=0$, the $p_\eta\times p_\eta$-matrix 
%$\Sig_{33}=\Sig_{33}(\eta_0)$ 
$\E\left[\left(\p_\eta m(J_1,\eta_0)\right)^{\otimes2}\right]$ 
is positive definite, and
\begin{equation}
\{\eta_0\} = \argmax_\eta \E\left[ m(J_1,\eta) \right].
\nonumber
\end{equation}
\end{enumerate}
\end{Assumption}

The consistency of $\ees$ can be easily seen from Theorem \ref{yu:se} and Lemma \ref{yu:gap}:
we have the continuous random function
\begin{equation}
\mbby_n(\eta) := \mbbh_{3,n}(\eta) - \mbbh_{3,n}(\eta_0) 
\cip \mbby(\eta) := \E\left[ m(J_1,\eta) \right] - \E\left[ m(J_1,\eta_0) \right],
\nonumber
\end{equation}
where $\mbby(\eta)\le 0$ by Jensen's inequality with $\mbby(\eta)=0$ if and only if $\eta=\eta_0$, and moreover, the convergence is uniform in $\eta\in\overline{\Theta_\eta}$ since $\sup_n\E[\sup_\eta |\p_\eta\mbby_n(\eta)|]<\infty$.
Hence the consistency $\ees\cip\eta_0$ follows.

\medskip

We turn to the asymptotic normality of $\ees$.
Let $\hat{u}_{\eta,n}:=\sqrt{T_n}(\ees-\eta_0)$.
In the sequel, for any measurable function $f(x,\theta)$ we will simply write $\hat{f}_{j-1}$ for $f_{j-1}(\tes)$.
\begin{Cor}\label{yu:clteta}
Under Assumptions \ref{Moments} to \ref{Identifiability}, and Assumption \ref{estimating}, we have
\begin{align}
\left( -\p_\eta^2\mbbh_{3,n}(\ees)\right) [\hat{u}_{\eta,n}] -\frac{1}{T_n} \sum_{i=1}^{\left\lf T_n\right\rf} \left(\p_\eta\p_\ep m\left(\hat{\ep}_i,\ees\right) \left(\hat{b}_i[\hat{u}_{\theta,n}] \right) \right)
&\cil N_{p_\eta}\left(0, \E\left[\left(\p_\eta m(J_1,\eta_0)\right)^{\otimes2}\right]\right),
\nonumber
\end{align}
where 
\begin{equation}
\hat{b}_i := \sum_{j\in A_i} \begin{pmatrix}\widehat{\p_\gam (c^{-1})}_{j-1} (\D_j X-h \hat{a}_{j-1}) \\
-h \widehat{c^{-1}}_{j-1}\widehat{\p_\al a}_{j-1}\end{pmatrix}.
\nonumber
\end{equation}
\end{Cor}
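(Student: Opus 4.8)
The plan is to start from a first-order Taylor expansion of the estimating equation $\p_\eta\mbbh_{3,n}(\ees)=0$ around $\eta_0$, exactly as in the standard $M$-estimation argument, and then substitute the residual expansion of Theorem \ref{yu:se} (applied componentwise to $\rho=\p_\eta m(\cdot,\eta_0)$) to isolate the leading stochastic term. Concretely, by the mean value theorem there is a random point $\bar\eta_n$ on the segment between $\ees$ and $\eta_0$ with
\begin{equation*}
0 = \sqrt{T_n}\,\p_\eta\mbbh_{3,n}(\eta_0) + \left(\p_\eta^2\mbbh_{3,n}(\bar\eta_n)\right)[\hat u_{\eta,n}].
\end{equation*}
Rearranging and re-centering the Hessian at $\ees$ (the difference is $o_p(1)$ times $\hat u_{\eta,n}$ by the polynomial-growth bound in Assumption \ref{estimating}(1), the moment bounds from Lemma \ref{yu:gap}, and the consistency $\ees\cip\eta_0$ together with tightness of $\hat u_{\eta,n}$, which follows from the nondegeneracy in Assumption \ref{estimating}(2)), it suffices to analyze $\sqrt{T_n}\,\p_\eta\mbbh_{3,n}(\eta_0) = T_n^{-1/2}\sum_{i=1}^{\lf T_n\rf}\p_\eta m(\hat\ep_i,\eta_0)$.

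The core step is to apply Theorem \ref{yu:se} with $\rho = (\p_\eta m)^{(l)}(\cdot,\eta_0)$ for each coordinate $l=1,\dots,p_\eta$; the polynomial-growth and $\mcc^{2,3}$ hypotheses on $m$ guarantee that each such $\rho$ meets the $\mcc^2$-with-polynomial-growth requirement of the theorem. This yields
\begin{equation*}
\sum_{i=1}^{\lf T_n\rf}\p_\eta m(\hat\ep_i,\eta_0) = \sum_{i=1}^{\lf T_n\rf}\p_\eta m(\ep_i,\eta_0) + \frac{1}{\sqrt{T_n}}\sum_{i=1}^{\lf T_n\rf}\p_\eta\p_\ep m(\ep_i,\eta_0)\,b_i[\hat u_{\theta,n}] + o_p(\sqrt{T_n}).
\end{equation*}
Dividing by $\sqrt{T_n}$, the first term, being a normalized sum of the i.i.d.\ mean-zero (Assumption \ref{estimating}(2)) square-integrable vectors $\p_\eta m(\ep_i,\eta_0)$, converges by the classical CLT to $N_{p_\eta}(0,\E[(\p_\eta m(J_1,\eta_0))^{\otimes2}])$; the second term must be matched with the correction term appearing in the statement. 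For the correction, I would show that one may replace $\ep_i$ by $\hat\ep_i$, $b_i$ by $\hat b_i$, and $\eta_0$ by $\ees$ inside $T_n^{-1}\sum\p_\eta\p_\ep m(\ep_i,\eta_0)\,b_i[\hat u_{\theta,n}]$ at cost $o_p(1)$: the differences are controlled using \eqref{yu:gap-2}, the analogous $L^r$-bound on $T_n^{-1/2}b_i$ established in the proof of Theorem \ref{yu:se}, the Lipschitz/polynomial-growth properties of the coefficients and their derivatives (Assumption \ref{Smoothness}), the stability bounds (Assumption \ref{Stability}), and the $L^r$-boundedness and consistency of $\hat u_{\theta,n}$ (Theorem \ref{hm:thm_old}), combined via Hölder's inequality and a law-of-large-numbers argument for the array $T_n^{-1}\sum\p_\eta\p_\ep m(\ep_i,\eta_0)\,b_i$. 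Moving this term to the left-hand side reproduces exactly the quantity asserted to be asymptotically normal, and the limit is inherited from the first term.

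The main obstacle I anticipate is the bookkeeping around the correction term: one must verify that the array $\{(\ep_i, b_i, \hat u_{\theta,n})\}$ can be handled despite $b_i$ being a sum of $|A_i|\asymp h^{-1}$ martingale-type increments correlated with $\ep_i$, and that replacing the true quantities by their hatted versions (which involves plugging $\tes$ into the coefficients and their $x$-derivatives, evaluated at random $X_{t_{j-1}}$) genuinely costs only $o_p(1)$ after normalization — this requires the same Sobolev-inequality and Burkholder-inequality machinery used in the proof of Lemma \ref{yu:gap}, now applied to $\p_\eta\p_\ep m$ as well. A secondary technical point is establishing tightness of $\hat u_{\eta,n}$ rigorously (rather than assuming it), which one gets from a standard convexity/nondegeneracy argument using that $-\p_\eta^2\mbbh_{3,n}(\eta_0)\cip -\E[\p_\eta^2 m(J_1,\eta_0)]$, a positive-definite matrix by Assumption \ref{estimating}(2) (via the usual information-type identity under the maximizer characterization), together with the expansion just derived for the score.
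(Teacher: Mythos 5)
Your proposal is correct and follows essentially the same route as the paper's proof: a Taylor expansion of the score equation $\p_\eta\mbbh_{3,n}(\ees)=0$ around $\eta_0$, an application of Theorem \ref{yu:se} componentwise with $\rho=\p_\eta m(\cdot,\eta_0)$, replacement of the correction term by its hatted version at cost $o_p(1)$, and the classical CLT for the i.i.d.\ sum $T_n^{-1/2}\sum_i\p_\eta m(\ep_i,\eta_0)$. If anything, you are more explicit than the paper about the $o_p(1)$ replacements and the tightness of $\hat{u}_{\eta,n}$, both of which the paper leaves implicit.
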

\color{black}

\begin{proof}
%Recall the notation $\ep_i = J_i-J_{i-1}$ ($i=1,\dots,\lf T_n\rf$), i.i.d. with common distribution $\mcl(J_1)$.
By the consistency $\ees\cip\eta_0$ we may and do focus on the event $\{\p_\eta\mbbh_{3,n}(\ees)=0\}$, on which
\begin{equation}
\left( - \int_0^1 \p_\eta^2\mbbh_{3,n}(\eta_0+s(\ees-\eta_0))ds\right) [\hat{u}_{\eta,n}] = \sqrt{T_n} \p_\eta\mbbh_{3,n}.
\nonumber
\end{equation}
Then, by Theorem \ref{yu:se} we have
\begin{align}
\left( -\p_\eta^2\mbbh_{3,n}(\ees) + o_p(1) \right) [\hat{u}_{\eta,n}] 
&= \frac{1}{\sqrt{T_n}}\sum_{i=1}^{\left\lf T_n\right\rf} \p_\eta m\left(\ep_i,\eta_0\right)
\nn\\
&{}\qquad
+ \frac{1}{T_n} \sum_{i=1}^{\left\lf T_n\right\rf} \left(\p_\eta\p_\ep m\left(\hat{\ep}_i,\ees\right) \left(\hat{b}_i[\hat{u}_{\theta,n}] \right) \right)
+o_p(1).
\nonumber
\end{align}
%The right-hand side of this identity is $O_p(1)$, and under Assumption \ref{estimating} it is routine to deduce that
%\begin{equation}
%\tcr{-\p_\eta^2\mbbh_{3,n}(\ees) \cip -\E\left[\p_\eta^2 m(J_1,\eta_0)\right]=\E\left[\left(\p_\eta m(J_1,\eta_0)\right)^{\otimes2}\right].}
%\nonumber
%\end{equation}
Hence, letting
\begin{equation}
\hat{u}_{n} := \left(\hat{u}_{\theta,n},\hat{u}_{\eta,n}\right),
\nonumber
\end{equation}
we have
\begin{equation}\label{yu:score}
\hat{H}_n [\hat{u}_n]
= \frac{1}{\sqrt{T_n}}\sum_{i=1}^{\left\lf T_n\right\rf} \p_\eta m\left(\ep_i,\eta_0\right) + o_p(1),
\end{equation}
where $\hat{H}_n$ is $p_\eta \times (p+p_\eta)$-matrix given by
\begin{equation}
\hat{H}_n = \left(
- \frac{1}{T_n} \sum_{i=1}^{\left\lf T_n\right\rf} \p_\eta\p_\ep m\left(\hat{\ep}_i,\ees\right) \hat{b}_i ,~
 -\p_\eta^2\mbbh_{3,n}(\ees)
 \right),
\nonumber
\end{equation}
and is $O_p(1)$.
%\begin{equation}
%\left(
%- \frac{1}{T_n} \sum_{i=1}^{\left\lf T_n\right\rf} \p_\eta\p_\ep m\left(\hat{\ep}_i,\ees\right) \hat{b}_i ,~
% -\p_\eta^2\mbbh_{3,n}(\ees)
% \right) \hat{u}_n
%= \frac{1}{\sqrt{T_n}}\sum_{i=1}^{\left\lf T_n\right\rf} \p_\eta m\left(\ep_i,\eta_0\right) + o_p(1).
%\nonumber
%\end{equation}
Since the random variables $\ep_1,\ep_2,\dots,\ep_{\lf T_n\rf}$ are i.i.d., the desired result follows on applying the central limit theorem to \eqref{yu:score}.
%\begin{equation}
%\frac{1}{\sqrt{T_n}}\sum_{i=1}^{\left\lf T_n\right\rf} \p_\eta m\left(\ep_i,\eta_0\right)\cil N_{p_\eta}\left(0, \E\left[\left(\p_\eta m(J_1,\eta_0)\right)^{\otimes2}\right]\right).
%\nn
%\end{equation}
%As a result, we 
%A direct Studentization seems to be possible. [HM, 2021/8/30]
\end{proof}

Corollary \ref{yu:clteta} suggests that the effect of plugging in the GQMLE does remain in the limit.
Therefore, in order to construct confidence interval and hypothesis testing for $\eta$, we need to verify the joint asymptotic distribution of $\hat{\mci}_n[\hat{u}_n]$ with some invertible matrix $\hat{\mci}_n$.
We additionally introduce the following condition.

\begin{Assumption}\label{yu:cov}
There exist $p_\eta\times p_\gam$-matrix $\Sigma_{\eta,\gam}$ and $p_\eta\times p_\al$-matrix $\Sigma_{\eta,\al}$ such that 
\begin{align}
&\label{yu:cr1} \frac{1}{T_n}\sum_{i=1}^{\lf T_n\rf}\p_\eta m(\ep_i, \eta_0)\sum_{j\in A_i} \left(\frac{\p_\gam c_{j-1}}{c_{j-1}}\left[(\D_j J)^2-h \right]\right)\cip \Sigma_{\eta,\gam},\\
&\label{yu:cr2} \frac{1}{T_n}\sum_{i=1}^{\lf T_n\rf}\p_\eta m(\ep_i, \eta_0)\sum_{j\in A_i}\left(\frac{\p_\al a_{j-1}}{c_{j-1}}\D_j J\right)  \cip \Sigma_{\eta,\al}.
\end{align}
Furthermore, the $(p+p_\eta)\times(p+p_\eta)$-matrix
\begin{equation*}
\Sig = \begin{pmatrix}\Sig_\gam & \Sig_{\gam,\al} & \Sig_{\eta,\gam}\\\Sig_{\gam,\al}^\top & \Sig_{\al} &\Sig_{\eta,\al}\\ \Sig_{\eta,\gam}^\top& \Sig_{\eta,\al}^\top & E\left[\left(\p_\eta m(J_1,\eta_0)\right)^{\otimes2}\right] \end{pmatrix},
\end{equation*}
is invertible.
\end{Assumption}

From now on, we will write $E^{i-1}[\cdot]$ for the conditional expectation $E[\cdot | \mcf_{i-1}]$.

\begin{Rem}\label{yu:cond}
By an elementary application of Burkholder's %and Chebyshev's 
inequality, it is easy to deduce that \eqref{yu:cr1} and \eqref{yu:cr2} are equivalent to 
\begin{align}
&\nn \frac{1}{T_n}\sum_{i=1}^{\lf T_n\rf}E^{i-1}\left[\p_\eta m(\ep_i, \eta_0)\sum_{j\in A_i} \left(\frac{\p_\gam c_{j-1}}{c_{j-1}}\left[(\D_j J)^2-h \right]\right)\right]\cip \Sigma_{\eta,\gam},\\
&\nn \frac{1}{T_n}\sum_{i=1}^{\lf T_n\rf}E^{i-1}\left[\p_\eta m(\ep_i, \eta_0)\sum_{j\in A_i}\left(\frac{\p_\al a_{j-1}}{c_{j-1}}\D_j J\right)\right]  \cip \Sigma_{\eta,\al},
\end{align}
respectively.
%Such an equivalence is convenient when we apply the martingale central limit theorem.
This fact will be used later.
\end{Rem}

Let
\begin{align}
\hat{\Gam}_n &:=\diag(-\p_\gam^2\mbbh_n(\ges), -\p_\al^2\mbbh_n(\aes)), \nn\\
\hat{\mci}_n &:=\begin{pmatrix}\hat{\Gam}_n &O\\ - T_n^{-1} \sum_{i=1}^{\left\lf T_n\right\rf} \p_\eta\p_\ep m\left(\hat{\ep}_i,\ees\right) \hat{b}_i &
 -\p_\eta^2\mbbh_{3,n}(\ees)
\end{pmatrix}.
\nonumber
\end{align}
We also introduce the $(p+p_\eta)\times(p+p_\eta)$-matrix
\begin{equation*}
\hat{\Sig}_n=\begin{pmatrix}\hat{\Sig}_{\gam,n} & \hat{\Sig}_{\gam,\al,n} & \hat{\Sig}_{\eta,\gam,n}\\\hat{\Sig}_{\gam,\al,n}^\top & \hat{\Sig}_{\al,n} &\hat{\Sig}_{\eta,\al,n}\\ \hat{\Sig}_{\eta,\gam,n}^\top& \hat{\Sig}_{\eta,\al,n}^\top & \hat{\Sig}_{\eta,n}\end{pmatrix},
\end{equation*}
where the ingredients are defined as follows:
\begin{align*}
&\hat{\Sig}_{\gam,n}=\left(\frac{1}{n}\sumj \left(\frac{\p_\gam \hat{c}_{j-1}}{\hat c_{j-1}}\right)^{\otimes2}\right)\left(\frac{1}{h_n}\sumj  \left(\widehat{\D_j J}\right)^4\right),
\nn\\
&\hat{\Sig}_{\al,n}=\frac{1}{n}\sumj \left(\frac{\p_\al \hat{a}_{j-1}}{\hat c_{j-1}}\right)^{\otimes2},
\nn\\
& \hat{\Sig}_{\eta,n}=\frac{1}{T_n}\sum_{i=1}^{\lf T_n\rf}(\p_\eta m(\hat{\ep}_i, \ees))^{\otimes2},
\nn\\
& \hat{\Sig}_{\gam,\al,n}=\left(\frac{1}{n}\sumj \frac{\p_\gam \hat{c}_{j-1}\p_\al \hat{a}_{j-1}}{\hat {c}_{j-1}^2}\right)\left(\frac{1}{h_n}\sumj  \left(\widehat{\D_j J}\right)^3\right),\\
&\hat{\Sig}_{\eta,\gam,n}= \frac{1}{T_n}\sum_{i=1}^{\lf T_n\rf}\p_\eta m(\hat{\ep}_i, \ees)\sum_{j\in A_i} \left(\frac{\p_\gam \hat{c}_{j-1}}{\hat{c}_{j-1}}\left[\left(\widehat{\D_j J}\right)^2-h \right]\right), \\
&\hat{\Sig}_{\eta,\al,n}=\frac{1}{T_n}\sum_{i=1}^{\lf T_n\rf}\p_\eta m(\hat{\ep}_i, \ees)\sum_{j\in A_i}\left(\frac{\p_\al \hat{a}_{j-1}}{\hat{c}_{j-1}}\widehat{\D_j J}\right).
\end{align*}
%\begin{align*}
%&\hat{\Sig}_{\gam,n}=\left(\frac{1}{n}\sumj \left(\frac{\p_\gam \hat{c}_{j-1}}{\hat c_{j-1}}\right)^{\otimes2}\right)\left(\frac{1}{h_n}\sumj  \left(\widehat{\D_j J}\right)^4\right),
%\nn\\
%& \hat{\Sig}_{\gam,\al,n}=\left(\frac{1}{n}\sumj \frac{\p_\gam \hat{c}_{j-1}\p_\al \hat{a}_{j-1}}{\hat {c}_{j-1}^2}\right)\left(\frac{1}{h_n}\sumj  \left(\widehat{\D_j J}\right)^3\right),\\
%&\hat{\Sig}_{\al,n}=\frac{1}{n}\sumj \left(\frac{\p_\al \hat{a}_{j-1}}{\hat c_{j-1}}\right)^{\otimes2}, \  
%\hat{\Sig}_{\eta,\gam,n}= \frac{1}{T_n}\sum_{i=1}^{\lf T_n\rf}\p_\eta m(\hat{\ep}_i, \ees)\sum_{j\in A_i} \left(\frac{\p_\gam \hat{c}_{j-1}}{\hat{c}_{j-1}}\left[\left(\widehat{\D_j J}\right)^2-h \right]\right), \\
%&\hat{\Sig}_{\eta,\al,n}=\frac{1}{T_n}\sum_{i=1}^{\lf T_n\rf}\p_\eta m(\hat{\ep}_i, \ees)\sum_{j\in A_i}\left(\frac{\p_\al \hat{a}_{j-1}}{\hat{c}_{j-1}}\widehat{\D_j J}\right), \ \hat{\Sig}_{\eta,n}=\frac{1}{T_n}\sum_{i=1}^{\lf T_n\rf}(\p_\eta m(\hat{\ep}_i, \ees))^{\otimes2}.
%\end{align*} 
%
%Under Assumptions \ref{}-\ref{}, $\hat{\Gam}_n$ is a consistent estimator of $\Gam$ and thus the positive definiteness of $\Gam$ and 
Now we are ready to state the main result.

\begin{Thm}
\label{hm:thm_u-AN}
Under Assumptions \ref{Moments} to \ref{Identifiability}, Assumption \ref{estimating}, and Assumption \ref{yu:cov}, we have
\begin{equation}
\nn\hat{\Sig}_n^{-1/2}\hat{\mci}_n[\hat{u}_n]\cil N_{p+p_\eta}\left(0, I_{p+p_\eta}\right).
\end{equation}
\end{Thm}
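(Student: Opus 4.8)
The plan is to reduce Theorem \ref{hm:thm_u-AN} to a single joint central limit theorem for $\hat{\mci}_n[\hat u_n]$ after identifying the matrix $\hat\mci_n$ as a consistent estimator of a deterministic matrix $\mci$ whose inverse transforms the score vector into the i.i.d.\ sum appearing in \eqref{yu:score}. First I would collect the three building blocks already at hand: (i) the stepwise GQMLE expansion underlying Theorem \ref{hm:thm_old}, which gives $\hat\Gam_n[\hat u_{\theta,n}] = \frac{1}{\sqrt{T_n}}\sum_j \Delta_j M + o_p(1)$ for an explicit martingale-difference array (the $\gam$-component built from $(\D_j J)^2-h$ and the $\al$-component from $\D_j J$), together with $\hat\Gam_n\cip\Gam$; (ii) the identity \eqref{yu:score} from the proof of Corollary \ref{yu:clteta}, namely $\hat H_n[\hat u_n] = \frac{1}{\sqrt{T_n}}\sum_i \p_\eta m(\ep_i,\eta_0) + o_p(1)$; and (iii) the observation that the bottom-left and bottom-right blocks of $\hat\mci_n$ are exactly those of $\hat H_n$, while the top-left block is $\hat\Gam_n$ and the top-right block is $O$. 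Stacking (i) and (iii) gives
\begin{equation}
\hat\mci_n[\hat u_n] = \frac{1}{\sqrt{T_n}}\sum_{i=1}^{\lf T_n\rf} \begin{pmatrix} \sum_{j\in A_i}\Delta_j M \\ \p_\eta m(\ep_i,\eta_0)\end{pmatrix} + o_p(1) =: \frac{1}{\sqrt{T_n}}\sum_{i=1}^{\lf T_n\rf}\chi_{i,n} + o_p(1),
\nonumber
\end{equation}
where $(\chi_{i,n})_i$ is a triangular array of martingale differences with respect to $(\mcf_i)$ (each summand is $\mcf_i$-measurable with $E^{i-1}[\chi_{i,n}]=0$, the drift-correction and boundary terms being absorbed into the $o_p(1)$ exactly as in the proofs of Lemma \ref{yu:gap} and Corollary \ref{yu:clteta}).

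Next I would apply a martingale CLT (e.g.\ the standard one for triangular arrays, checking the conditional-variance convergence and a Lyapunov condition) to $\frac{1}{\sqrt{T_n}}\sum_i \chi_{i,n}$. The Lyapunov condition follows from the uniform $L^r$-bounds established in the proof of Lemma \ref{yu:gap} together with Assumption \ref{Moments} (all moments of $J_1$ finite) and Assumption \ref{estimating}(1) (polynomial growth of $\p_\eta m$), since $\frac{1}{T_n^{1+\delta/2}}\sum_i E[|\chi_{i,n}|^{2+\delta}]\lesssim T_n^{-\delta/2}\to 0$. For the conditional-variance part I would compute
\begin{equation}
\frac{1}{T_n}\sum_{i=1}^{\lf T_n\rf} E^{i-1}\!\left[\chi_{i,n}^{\otimes 2}\right] \cip \Sig,
\nonumber
\end{equation}
with $\Sig$ the matrix displayed in Assumption \ref{yu:cov}: the top-left $p\times p$ block reproduces $\Sig_\theta$ via the ergodic theorem applied to the conditional second moments of $(\D_j J)^2-h$ and $\D_j J$ (here $E[((\D_j J)^2-h)^2]\approx \nu_4 h$ up to $o(h)$ modulo the Gaussian-part bookkeeping that already appears in Theorem \ref{hm:thm_old}, and $E[(\D_j J)^2]\approx h$); the bottom-right block is $\frac1{T_n}\sum_i E^{i-1}[(\p_\eta m(\ep_i,\eta_0))^{\otimes 2}] = E[(\p_\eta m(J_1,\eta_0))^{\otimes 2}]$ by the i.i.d.\ property of $(\ep_i)$; and the cross blocks are precisely $\Sig_{\eta,\gam}$ and $\Sig_{\eta,\al}$ by Remark \ref{yu:cond}, which rephrases \eqref{yu:cr1}–\eqref{yu:cr2} in the conditional form that matches $E^{i-1}[\chi_{i,n}^{\otimes 2}]$. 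Hence $\hat\mci_n[\hat u_n]\cil N_{p+p_\eta}(0,\Sig)$.

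Finally, to pass from $\Sig$ to $I_{p+p_\eta}$ I would show $\hat\Sig_n\cip\Sig$ and invoke Slutsky together with continuity of $M\mapsto M^{-1/2}$ on the positive-definite cone (legitimate since $\Sig$ is invertible by Assumption \ref{yu:cov}, and one checks $\Sig\succ 0$ from positive-definiteness of $\Gam$ in Assumption \ref{nd} combined with the block structure, or simply takes invertibility plus the obvious nonnegative-definiteness). The convergence $\hat\Sig_n\cip\Sig$ is entrywise: $\hat\Sig_{\gam,n},\hat\Sig_{\al,n},\hat\Sig_{\gam,\al,n}$ by the law of large numbers for ergodic functionals plus $\frac1{h_n}\sum_j(\widehat{\D_j J})^k\cip\nu_k$ for $k=3,4$ (a consequence of \eqref{yu:gap-2} and the high-frequency moment expansions, again as in Theorem \ref{hm:thm_old}); $\hat\Sig_{\eta,n}$ by consistency of $\ees$, Theorem \ref{yu:se}/Lemma \ref{yu:gap} controlling $\hat\ep_i-\ep_i$, and the i.i.d.\ LLN; and $\hat\Sig_{\eta,\gam,n},\hat\Sig_{\eta,\al,n}$ by \eqref{yu:cr1}–\eqref{yu:cr2} after replacing hatted quantities and $\widehat{\D_j J}$ by their unhatted counterparts, the replacement error being $o_p(1)$ by the tail-probability estimate in Theorem \ref{hm:thm_old} and the moment bounds of Lemma \ref{yu:gap}. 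Then $\hat\Sig_n^{-1/2}\hat\mci_n[\hat u_n]\cil N_{p+p_\eta}(0,I_{p+p_\eta})$.

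The main obstacle I expect is the bookkeeping in the conditional-variance computation: one must show that the drift-correction term $\zeta_{2,i}$ and the boundary terms $\zeta_{3,i}$ from Lemma \ref{yu:gap}, as well as the discrepancy between $\hat b_i$ and $b_i$, contribute only $o_p(\sqrt{T_n})$ to $\hat\mci_n[\hat u_n]$ — i.e.\ that the genuine martingale-difference array $\chi_{i,n}$ really does capture the asymptotics — and simultaneously that cross terms between the $\theta$-block martingale $\sum_{j\in A_i}\Delta_j M$ and $\p_\eta m(\ep_i,\eta_0)$ converge to exactly $\Sig_{\eta,\gam}$ and $\Sig_{\eta,\al}$ and not to something perturbed by these negligible pieces. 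This is essentially a careful reprise of the estimates in the proofs of Lemma \ref{yu:gap} and Corollary \ref{yu:clteta}, using $nh^2\to0$ at the decisive places, but it requires attention to keep the Gaussian-part ($\sig^2$) terms consistent with the definitions of $\Sig_\gam,\Sig_{\gam,\al}$ in Theorem \ref{hm:thm_old}.
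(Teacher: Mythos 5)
Your proposal is correct and follows essentially the same route as the paper: reduce to the asymptotically linear representation $\hat{\mci}_n[\hat{u}_n]=T_n^{-1/2}\sum_i(\xi_{\gam,i},\xi_{\al,i},\xi_{\eta,i})+o_p(1)$ by combining the GQMLE expansion with \eqref{yu:score}, apply the martingale CLT over the unit-time blocks $A_i$ with the conditional variances identified via the ergodic theorem and Assumption \ref{yu:cov}/Remark \ref{yu:cond}, and finish with $\hat{\Sig}_n\cip\Sig$ and Slutsky. The only cosmetic differences are that the paper first invokes the Cram\'{e}r--Wold device to reduce to scalar parameters and verifies a fourth-moment Lyapunov condition via It\^{o}'s formula and Burkholder-type bounds, whereas you state a generic $(2+\del)$-moment condition; these amount to the same argument.
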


By Theorem \ref{hm:thm_u-AN}, we have
\begin{equation}
T_n 
(\tes-\tz,\ees-\eta_0)\,
\hat{\mci}_n^{\top} \hat{\Sig}_n^{-1} \hat{\mci}_n
(\tes-\tz,\ees-\eta_0)^{\top}
\cil \chi^{2}(p+p_\eta),
\nonumber
\end{equation}
based on which we can construct an approximate confidence set for $(\tz,\eta_0)$, and also perform a Wald-type test.
Also trivially, we can recover the asymptotic distribution of $\tes$:
\begin{equation}
T_n 
(\tes-\tz)\,
\hat{\Gam}_n
\begin{pmatrix}\hat{\Sig}_{\gam,n} & \hat{\Sig}_{\gam,\al,n} \\
\hat{\Sig}_{\gam,\al,n}^\top & \hat{\Sig}_{\al,n}
\end{pmatrix}^{-1}
\hat{\Gam}_n
(\tes-\tz)^{\top} \cil \chi^{2}(p).
\nonumber
\end{equation}
%\begin{equation}
%\begin{pmatrix}\hat{\Sig}_{\gam,n} & \hat{\Sig}_{\gam,\al,n} \\
%\hat{\Sig}_{\gam,\al,n}^\top & \hat{\Sig}_{\al,n}
%\end{pmatrix}^{-1/2}
%\hat{\Gam}_n [\hat{u}_{\theta,n}]\cil N_{p}\left(0, I_{p}\right).
%\nonumber
%\end{equation}

It is difficult to obtain $\Sig_{\eta,\gam}$ and $\Sig_{\eta,\al}$ in explicit easy-to-handle forms even if the coefficients and $m$ are simple.
However, by an application of Cauchy-Schwartz inequality and the estimates in the proof of Theorem 3.8,
we can observe that at least, the left-hand-sides in \eqref{yu:cr1} and \eqref{yu:cr2} are tight.
Moreover, we can formally write their limit by means of the representation theorem (\cite[Proposition 3]{Lok04}): there exists a predictable process $s\mapsto \tilde{\xi}_{\eta,n}(s,z)$ such that
\begin{equation}
\sum_{i=1}^{\lf T_n\rf}\p_\eta m(\ep_i,\eta_0)=\int_{0}^{\lf T_n\rf}\int \tilde{\xi}_{\eta,n}(s,z)\tilde{N}(ds,dz).
\end{equation}
From It\^{o}'s formula and some calculations, we have
\begin{align*}
&\sum_{j\in A_i}\frac{\p_\al a_{j-1}}{c_{j-1}}\D_j J\\
&=\int_{\left\lf \frac{i-1}{h }\right\rf h }^{\left\lf \frac{i}{h }\right\rf h }\sum_{j\in A_i} \left(\chi_j(s)\frac{\p_\al a_{j-1}}{c_{j-1}}\right) dJ_s=\int_{\left\lf \frac{i-1}{h }\right\rf h }^{\left\lf \frac{i}{h }\right\rf h }\int\sum_{j\in A_i} \left(\chi_j(s)\frac{\p_\al a_{j-1}}{c_{j-1}}\right) z\tilde{N}(ds,dz)\\
&=\int_{\left\lf \frac{i-1}{h }\right\rf h }^{\left\lf \frac{i}{h }\right\rf h }\int\frac{\p_\al a_{s-}}{c_{s-}} z\tilde{N}(ds,dz),\\
&\sum_{j\in A_i}\frac{\p_\gam c_{j-1}}{c_{j-1}}\left[(\D_j J)^2-h \right]\\
&=\int_{\left\lf \frac{i-1}{h }\right\rf h }^{\left\lf \frac{i}{h }\right\rf h }\sum_{j\in A_i} \left(\chi_j(s)\frac{\p_\gam c_{j-1}}{c_{j-1}}(J_{s-} - J_{t_{j-1}})\right) dJ_s+\int_{\left\lf \frac{i-1}{h }\right\rf h }^{\left\lf \frac{i}{h }\right\rf h }\int \sum_{j\in A_i} \left(\chi_j(s)\frac{\p_\gam c_{j-1}}{c_{j-1}}\right)z^2\tilde{N}(ds,dz)\\
&=\int_{\left\lf \frac{i-1}{h }\right\rf h }^{\left\lf \frac{i}{h }\right\rf h }\int \frac{\p_\gam c_{s-}}{c_{s-}}z^2\tilde{N}(ds,dz)+O_p(h_n).
\end{align*}
To sum up, we obtain the following expression:
\begin{align*}
&\hat{\mci}_n[\hat{u}_n]=\frac{1}{\sqrt{T_n}}\int_{0}^{\lf T_n\rf}\int \begin{pmatrix}\frac{\p_\gam c_{s-}}{c_{s-}}z^2\\\frac{\p_\al a_{s-}}{c_{s-}} z\\\tilde{\xi}_{\eta,n}(s,z)\end{pmatrix}\tilde{N}(ds,dz)+o_p(1).
\end{align*}
By applying the central limit theorem for the stochastic integral with respect to a Poisson random measure (cf. \cite[Lemma A.2]{Ueh19}), the isometry property of the stochastic integral yield that under suitable moment and regularity conditions,
\begin{align*}
\hat{\mci}_n[\hat{u}_n]\overset{\mcl}\rightarrow N_{p+p_\eta}\left(0,  \begin{pmatrix}\Sig_\gam & \Sig_{\gam,\al} & \Sig_{\eta,\gam}\\\Sig_{\gam,\al}^\top & \Sig_{\al} &\Sig_{\eta,\al}\\ \Sig_{\eta,\gam}^\top& \Sig_{\eta,\al}^\top & E\left[\left(\p_\eta m(J_1,\eta_0)\right)^{\otimes2}\right] \end{pmatrix}\right),
\end{align*}
where $\Sig_{\eta,\gam}$ and $\Sig_{\eta,\al}$ are given by the limits in probability:
% by letting $\lim{}_p$ stand for the limit in probability,
%\begin{align*}
%&\Sig_{\eta,\gam}=\lim_n \frac{1}{T_n}\int_{0}^{\lf T_n\rf}\int E\left[\frac{\p_\gam c_{s-}}{c_{s-}}\tilde{\xi}_{\eta,n}(s,z)\right]z^2\nu_0(dz)ds, \\
%&\Sig_{\eta,\al}=\lim_n \frac{1}{T_n}\int_{0}^{\lf T_n\rf}\int E\left[\frac{\p_\al a_{s-}}{c_{s-}}\tilde{\xi}_{\eta,n}(s,z)\right]z\nu_0(dz)ds,
%\end{align*}
\begin{align*}
&\frac{1}{T_n}\int_{0}^{\lf T_n\rf}\int E\left[\frac{\p_\gam c_{s-}}{c_{s-}}\tilde{\xi}_{\eta,n}(s,z)\right]z^2\nu_0(dz)ds
\cip \Sig_{\eta,\gam}, \\
&\frac{1}{T_n}\int_{0}^{\lf T_n\rf}\int E\left[\frac{\p_\al a_{s-}}{c_{s-}}\tilde{\xi}_{\eta,n}(s,z)\right]z\nu_0(dz)ds
\cip \Sig_{\eta,\al},
\end{align*}
and the other ingredients are the same as our previous works (cf. Theorem \ref{hm:thm_old}).
However, since the explicit form of $\tilde{\xi}_{\eta,n}(s,z)$ cannot be obtained in general, it is difficult to check the above convergence.

Finally, we would like to add that Theorem \ref{hm:thm_u-AN} and the resulting Wald-type test are valid without Assumption \ref{yu:cov} if the minimum eigenvalue of $\hat{\Sig}$ is positive uniformly in $n$.
Such a condition for eigenvalues is often assumed in the context of (non-)linear regression.

\begin{proof}[Proof of Theorem \ref{hm:thm_u-AN}]
By the Cram\'{e}r-Wold device, we may and do assume that $p_\gam=p_\al=p_\eta=1$ without loss of generality.
It is straightforward to deduce that $\hat{\Sig}_n\cip \Sig$ from Theorem \ref{hm:thm_old}, Theorem \ref{yu:se}, Lemma \ref{yu:gap}, and the estimates we have seen in the previous proofs.
%\begin{equation}
%\nn \hat{\Sig}_n\cip \Sig
%\end{equation}
Hence, by means of Slutsky's theorem, it suffices to show that
\begin{equation}\label{yu:jointan}
\hat{\mci}_n[\hat{u}_n]\cil N_{p+p_\eta}\left(0, \Sig\right).
\end{equation}
From \cite[Proof of Theorem 3.4]{MasUeh17-2}, we have
\begin{align}
    \nn\hat{\Gam}_n[\hat{u}_{\theta,n}]=\frac{1}{\sqrt{T_n}}\sum_{i=1}^{\lf T_n\rf}\sum_{j\in A_i}\begin{pmatrix}\frac{\p_\gam c_{j-1}}{c_{j-1}}\left[(\D_j J)^2-h \right]\\\frac{\p_\al a_{j-1}}{c_{j-1}}\D_j J\end{pmatrix}+o_p(1).
\end{align}
This together with \eqref{yu:score} and the definition of $\hat{\mci}_n$ leads to
\begin{equation}
\label{hm:asymp.linear.rep}
\hat{\mci}_n[\hat{u}_n]=\frac{1}{\sqrt{T_n}}\sum_{i=1}^{\lf T_n\rf}\begin{pmatrix}\xi_{\gam,i}\\\xi_{\al,i}\\\xi_{\eta,i}\end{pmatrix}+o_p(1),
\end{equation}
where
\begin{align}
&\nn \xi_{\gam,i}:=\sum_{j\in A_i}\frac{\p_\gam c_{j-1}}{c_{j-1}}\left[(\D_j J)^2-h \right], 
\quad \xi_{\al,i}:=\sum_{j\in A_i}\frac{\p_\al a_{j-1}}{c_{j-1}}\D_j J, 
\quad \xi_{\eta,i}:=\p_\eta m\left(\ep_i,\eta_0\right).
\end{align}
By \eqref{yu:cr1}, \eqref{yu:cr2}, and the arguments in the proof of Corollary \ref{yu:clteta}, the martingale central limit theorem concludes \eqref{yu:jointan} if we have the following convergences: 
\begin{align}
&\label{yu:an1}\left|\frac{1}{\sqrt{T_n}}\sum_{i=1}^{\lf T_n\rf}E^{i-1}\left[\xi_{\gam,i}\right]\right|+\left|\frac{1}{\sqrt{T_n}}\sum_{i=1}^{\lf T_n\rf}E^{i-1}\left[\xi_{\al,i}\right]\right|\cip0,\\
&\label{yu:an2}\frac{1}{T_n}\sum_{i=1}^{\lf T_n\rf}E^{i-1}\left[\xi_{\gam,i}^2\right]\cip \Sig_\gam,\\
&\label{yu:an3}\frac{1}{T_n}\sum_{i=1}^{\lf T_n\rf}E^{i-1}\left[\xi_{\gam,i}\xi_{\al,i}\right]\cip \Sig_{\gam,\al},\\
&\label{yu:an4}\frac{1}{T_n}\sum_{i=1}^{\lf T_n\rf}E^{i-1}\left[\xi_{\al,i}^2\right]\cip \Sig_\al,\\
&\label{yu:an5}\frac{1}{T_n^2}\sum_{i=1}^{\lf T_n\rf}E^{i-1}\left[|\xi_{\gam,i}|^4+|\xi_{\al,i}|^4+|\xi_{\eta,i}|^4\right]\cip0.
\end{align}
%By the tower property of the conditional expectation, 
Trivially $\{(\xi_{\gam,i},\xi_{\al,i},\xi_{\eta,i})\}_{i\le \lf T_n\rf}$ forms a martingale difference array with respect to $(\mcf_i)$, since we have $\mcf_{i-1} \subset \mcf_{t_j}$ for each $j\in A_i$; this immediately ensures \eqref{yu:an1}.
By the arguments in Remark \ref{yu:cond}, we can replace \eqref{yu:an2}, \eqref{yu:an3}, and \eqref{yu:an4} by
\begin{align*}
&%\label{yu:man2}
\frac{1}{T_n}\sumj\left(\frac{\p_\gam c_{j-1}}{c_{j-1}}\left[(\D_j J)^2-h \right]\right)^2\cip \Sig_\gam,\\
&%\label{yu:man3}
\frac{1}{T_n}\sumj\left(\frac{\p_\gam c_{j-1}}{c_{j-1}}\left[(\D_j J)^2-h \right]\right)\left(\frac{\p_\al a_{j-1}}{c_{j-1}}\D_j J\right)\cip \Sig_{\gam,\al},\\
&%\label{yu:man4}
\frac{1}{T_n}\sumj\left(\frac{\p_\al a_{j-1}}{c_{j-1}}\D_j J\right)^2\cip \Sig_\al,
\end{align*}
respectively.
Noting that $E[|J_{h_n}|^q]=O(h_n)$ for any $q\geq2$ under Assumption \ref{Moments}, we can deduce the last three convergences from \cite[Lemma 9]{GenJac93} and the ergodic theorem.

It remains to show \eqref{yu:an5}.
It follows from It\^{o}'s formula and Assumption \ref{Moments} that for any $j\in\{1,\dots, n\}$,
\begin{equation*}
(\D_j J)^2-h = 2\int_{t_{j-1}}^{t_j} (J_{s-} - J_{t_{j-1}}) dJ_s + \int_{t_{j-1}}^{t_j}\int z^2 \tilde{N}(ds,dz),
\end{equation*}
where $\tilde{N}(ds,dz)$ is the compensated Poisson random measure of $J$; recall that we are assuming that $\E[(\D_j J)^2]=(\sig^2 + \int z^2\nu(dz))h=h$.
Then, we can rewrite $\xi_{\gam,i}$ and $\xi_{\al,i}$ as
\begin{align}
\nn &\xi_{\gam,i}=\int_{\left\lf \frac{i-1}{h }\right\rf h }^{\left\lf \frac{i}{h }\right\rf h }\sum_{j\in A_i} \left(\chi_j(s)\frac{\p_\gam c_{j-1}}{c_{j-1}}(J_{s-} - J_{t_{j-1}})\right) dJ_s+\int_{\left\lf \frac{i-1}{h }\right\rf h }^{\left\lf \frac{i}{h }\right\rf h }\int \sum_{j\in A_i} \left(\chi_j(s)\frac{\p_\gam c_{j-1}}{c_{j-1}}\right)z^2\tilde{N}(ds,dz),\\
\nn &\xi_{\al,i}=\int_{\left\lf \frac{i-1}{h }\right\rf h }^{\left\lf \frac{i}{h }\right\rf h }\sum_{j\in A_i} \left(\chi_j(s)\frac{\p_\al a_{j-1}}{c_{j-1}}\right) dJ_s=\int_{\left\lf \frac{i-1}{h }\right\rf h }^{\left\lf \frac{i}{h }\right\rf h }\int\sum_{j\in A_i} \left(\chi_j(s)\frac{\p_\al a_{j-1}}{c_{j-1}}\right) z\tilde{N}(ds,dz).
\end{align}
Taking a similar route to the estimate \eqref{yu:burk}, we have
% there exists a positive constant $C$ satisfying
\begin{align}
\nn\frac{1}{T_n^2}\sum_{i=1}^{\lf T_n\rf}E^{i-1}\left[|\xi_{\gam,i}|^4+|\xi_{\al,i}|^4\right]\lesssim \frac{1}{T_n^2}\sum_{i=1}^{\lf T_n\rf} (1+|X_{i-1}|^C)=O_p(T_n^{-1})=o_p(1).
\end{align}
%by the ergodic theorem.
Since $T_n^{-2}\sum_{i=1}^{\lf T_n\rf}E^{i-1}\left[|\xi_{\eta,i}|^4\right]\cip0$ under Assumptions \ref{Moments} and \ref{estimating}, we obtain the desired result.
\end{proof}

%Let 
%\begin{align}
%\hat{B}_{n,\gam} &:=\frac{1}{T_n}\sum_{i=1}^{\left\lf T_n\right\rf} \p_\eta\p_\ep m\left(\hat{\ep}_i,\ees\right)\sum_{j\in A_i} \widehat{\p_\gam (c^{-1})}_{j-1} (\D_j X-h \hat{a}_{j-1}), \nn\\
%\hat{B}_{n,\al} &:= -\frac{1}{n}\sum_{i=1}^{\left\lf T_n\right\rf} \p_\eta\p_\ep m\left(\hat{\ep}_i,\ees\right) \sum_{j\in A_i}\widehat{\p_\al a}_{j-1}.
%\nn
%\end{align}

%%%
\subsection{Further remarks}

\subsubsection{Dimension of the processes}

For the asymptotics of the GQMLE, we could consider multivariate $X$ without any essential change \cite{MasUeh17-2};
we will conduct related simulations in Section \ref{hm:sec_multivariate.SDE.sim}.
Moreover, the estimator of $(\al,\gam)$ may not be necessarily the GQMLE and could be any measurable mappings $\tes=\tes(\bm{X}_n)$ for which we have an asymptotically linear representation as in \eqref{hm:asymp.linear.rep}.

\subsubsection{Model selection for $\mcl(J_1)$}

Residual based on information criterion (IC) formulation after estimation should be possible (both AIC and BIC types).
We can infer the structure of $J$ as in the i.i.d case, yet should be careful in making possibly necessary corrections stemming from the stochastic expansion \eqref{yu:se-1}.
For example, for the AIC statistics to be theoretically in effect, among other conditions it is required that the random sequence $(\hat{u}_{\eta,n})_n$ is $L^{2+\del}(\pr)$-bounded for some $\del>0$. It could be verified by means of the uniform tail-probability estimate for $(\hat{u}_{\eta,n})_n$ through the random function $\mbbh_{3,n}(\eta)$; indeed, we could make use of the same machinery to deduce Theorem \ref{hm:thm_old}(2).

\subsubsection{Setting of noise inference}

Although we have set a finite-dimensional $\eta$ above, we could consider infinite-dimensional $\eta$, most generally $\mcl(J_1)$ itself: once $\{\hat{\ep}_j\}$ has been constructed, it is also possible to take into account conventional nonparametric procedures, such as the kernel density estimation, and also goodness-of-fit tests; see Section \ref{hm:sec_real.data} for an illustration.

%%%%%
%%%%% 

\section{Implementation} \label{sec:Implementation}

In this section, we discuss the new classes and the new methods in the \yuima  \proglang{R} package that gives us the possibility to deal with an SDE driven by a L\'evy process completely specified by the user. To construct an object of \yuima  class, that is a mathematical description of an SDE driven by  a pure L\'evy jump process, three steps are necessary:
\begin{enumerate}
\item Definition of an object that contains all the information about the structure of the pure L\'evy jump. In this step, the user can specify a random number generator, a density function, a cumulative distribution function, a quantile function, a characteristic function, and the number of components for the underlying L\'evy process.  
\item Definition of the structure of the SDE where the driving noise is determined from the object constructed in Step 1.
\item Construction of an object that belongs to the \yuima class whose slots are reported in Figure \ref{Fig:slots}. The slot \texttt{model} is filled with the object built in Step 2.  This new object can be used to simulate a sample path by overwriting the slot \texttt{sampling} with the structure of the time grid. Alternatively, we can use this object to estimate the SDE defined in Step 2. In this case, we can store the observed data in the slot \texttt{data}.  
\end{enumerate}
\begin{figure}[!h]
\centering
\begin{tikzpicture} 
\tikzstyle{comment}=[rectangle, draw=black, rounded corners, fill=gray!10, text=black, drop shadow, text width=8.2cm]
    \node (PressureInstants) [comment, text justified]
        {\ \ \ \ \ \ \ \ \ \ \ \ \ \ \ \ \ \ \ \ \ \ \footnotesize{Object of \yuima class}
				
						\footnotesize{\texttt{Slots}:}\newline
						$\begin{array}{|ll|}
						\hline 
						\emph{{\footnotesize{@}}\footnotesize{data}}:&\text{Data that can be either real or simulated}\\ 
						\emph{{\footnotesize{@}}\footnotesize{model}}:&\text{Mathematical description of the model}\\
						\emph{{\footnotesize{@}}\footnotesize{sampling}}:&\text{Structure of the time grid}\\
						\emph{{\footnotesize{@}}\footnotesize{characteristic}}:&\text{Additional info}\\
						\emph{{\footnotesize{@}}\footnotesize{functional}}:&\text{Functional of SDE}\\
						\hline
						\end{array}$
					};
\end{tikzpicture}
\caption{The structure of an object that belongs to the \yuima class.}
\label{Fig:slots}
\end{figure}
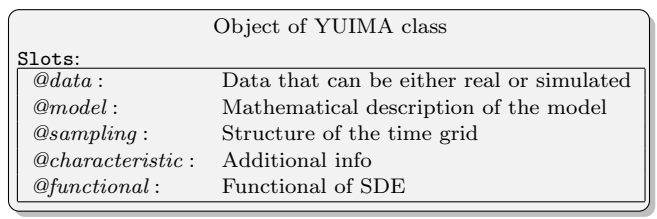

%%%%%
\subsection{yuima.law: A New Class for a Mathematical Description of the L\'evy process}

In this section, we describe the structure of a \texttt{yuima.law-object} and its constructor \texttt{setLaw}. The main advantage of this new class is the possibility of connecting \yuima with any CRAN package that provides functions for a specific random variable. Figure \ref{Fig:slotslaw} reports the slots that constitute an object of \texttt{yuima.law} class.

\begin{figure}[!h]
\centering
\begin{tikzpicture} 
\tikzstyle{comment}=[rectangle, draw=black, rounded corners, fill=gray!10, text=black, drop shadow, text width=7.5cm]
    \node (PressureInstants) [comment, text justified]
        {\ \ \ \ \ \ \ \ \ \ \ \ \ \ \ \ \ \footnotesize{Object of \texttt{yuima.law} class}
				
						\footnotesize{\texttt{Slots}:}\newline
						$\begin{array}{|ll|}
						\hline 
						\emph{{\footnotesize{@}}\footnotesize{rng}}:&\text{Random number generator}\\ 
						\emph{{\footnotesize{@}}\footnotesize{density}}:&\text{Density function}\\
						\emph{{\footnotesize{@}}\footnotesize{cdf}}:&\text{Cumulative distribution function}\\
						\emph{{\footnotesize{@}}\footnotesize{quantile}}:&\text{Quantile function}\\
						\emph{{\footnotesize{@}}\footnotesize{characteristic}}:&\text{Characteristic function}\\
						\emph{{\footnotesize{@}}\footnotesize{param.measure}}:&\text{Parameter labels}\\
						\emph{{\footnotesize{@}}\footnotesize{time.var}}:&\text{Time label}\\
						\emph{{\footnotesize{@}}\footnotesize{dim}}:&\text{Dimension of the random variable}\\
						\hline
						\end{array}$
					};
\end{tikzpicture}
\caption{The structure of an object that belongs to the \texttt{yuima.law} class.\label{Fig:slotslaw}}
\end{figure}
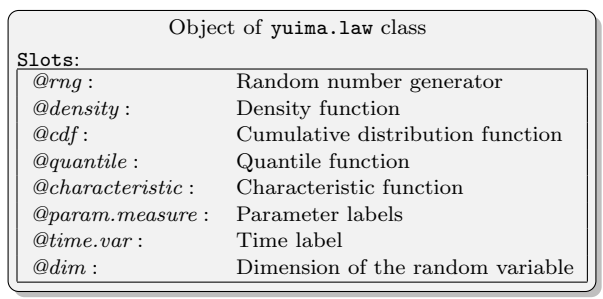

The first five slots contain \textsf{R} user-defined functions. In particular, the first two slots contain the random number generator and the density function respectively. Although it is not necessary to specify these functions to construct an object of \texttt{yuima.law} class, the definition of a random number generator is necessary to run the \yuima \texttt{simulate} method while the density function is used internally by the \yuima \texttt{qmleLevy} method. The template of these two functions is listed below:
\begin{CodeChunk}
\begin{CodeInput}
# User specified random number generator
R> user.rng <- function(n, eta, t){
+    ... ... ... # Body of the function 
+  }
# User specified density function
R> user.density <- function(x, eta, t){
+    ... ... ... # Body of the function 
+  }
\end{CodeInput}
\end{CodeChunk} 
where the input \texttt{eta} is a \texttt{vector} containing the names of the L\'evy noise parameters and the input \texttt{t} refers to the label of the time variable. %\newline

An object of \texttt{yuima.law} class is built using \texttt{setLaw} constructor.
\begin{CodeChunk}
\begin{CodeInput}
R> setLaw(rng = function(n, ...){ NULL }, density = function(x, ...){ NULL }, 
+    cdf = function(q, ...){ NULL }, quant = function(p, ...){ NULL },
+    characteristic = function(u, ...){ NULL }, time.var = "t",
+    dim = NA )
\end{CodeInput}
\end{CodeChunk}
The first five inputs in the function fill the corresponding slots in the \texttt{yuima.law} object. Figure \ref{TIKZFIG} describes the steps required for the construction of the \texttt{yuima.law-object}. 
%\begin{figure}[!h]
%\centering
%\begin{tikzpicture} 
%Inputs \arrow[rrrr, dotted]  &  &  &  & Constructor \arrow[rrr, dotted] &  &  & Output    \\
%rng \arrow[rrrrdd]           &  &  &  &                                 &  &  &           \\
%density \arrow[rrrrd]        &  &  &  &                                 &  &  &           \\
%cdf \arrow[rrrr]             &  &  &  & setLaw \arrow[rrr]              &  &  & yuima.law \\
%quantile \arrow[rrrru]       &  &  &  &                                 &  &  &           \\
%caracteristic \arrow[rrrruu] &  &  &  &                                 &  &  &          
%\end{tikzpicture}
%\end{figure}
\pgfdeclarelayer{background}
\pgfdeclarelayer{foreground}
\pgfsetlayers{background,main,foreground}
\tikzstyle{sensor}=[draw, fill=gray!10, text width=7.5em, 
    text centered, minimum height=3em,drop shadow]
\tikzstyle{sensor1}=[draw, fill=gray!10, text width=13em, 
    text centered, minimum height=3em,drop shadow]
%% Define distances for bordering
\def\blockdist{2.3}
\def\edgedist{2.5}
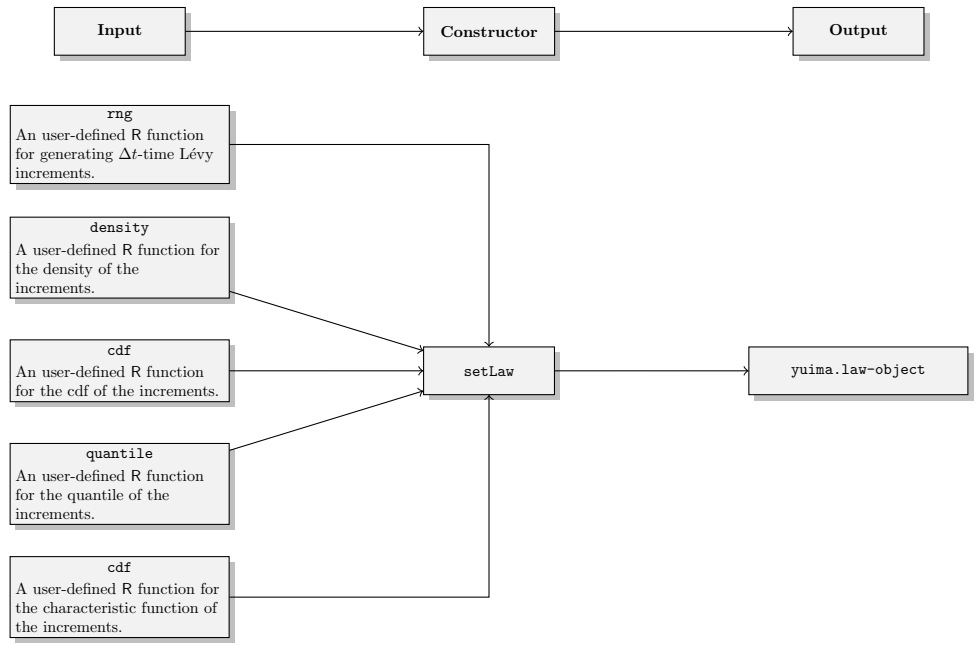
\begin{figure}[!h]
\centering
\begin{tikzpicture}[scale=0.60, transform shape]
    \node (pos1) [sensor]  {\textbf{Output}};
		\path (pos1)+(-8.1,0) node (cons) [sensor]{\textbf{Constructor}};
    \path (pos1)+(-16.2, 0) node (Inp) [sensor] {\textbf{Input}};
		\path (pos1)+(-16.2, -2.5) node (rng) [sensor1] {
		\texttt{rng} 
		\begin{flushleft}
		An user-defined \proglang{R} function for generating $\Delta t$-time L\'evy increments.
		\end{flushleft}
		};
		\path (rng)+(0, -2.5) node (density) [sensor1] {
		\code{density} 
		\begin{flushleft}
		A user-defined \proglang{R} function for the density of the increments.
		\end{flushleft}
		};
		\path (density)+(0, -2.5) node (cdf) [sensor1] {
		\code{cdf} 
		\begin{flushleft}
		An user-defined \proglang{R} function for the cdf of the increments.
		\end{flushleft}
		};
		\path (cdf)+(0, -2.5) node (quantile) [sensor1] {
		\code{quantile} 
		\begin{flushleft}
		An user-defined \proglang{R} function for the quantile of the increments.
		\end{flushleft}
		};
		\path (quantile)+(0, -2.5) node (char) [sensor1] {
		\code{cdf} 
		\begin{flushleft}
		A user-defined \proglang{R} function for the characteristic function of the increments.
		\end{flushleft}
		};
		\path (cdf)+(8.1, 0) node (sl) [sensor] {
		\code{setLaw} 
		};
		\path (sl)+(8.1, 0) node (yl) [sensor1] {
		\code{yuima.law-object} 
		};
		\draw[->] (Inp) edge (cons);
		\draw[->] (cons) edge (pos1);
		\draw[->] (sl) edge (yl);
		\draw[->] (cdf) edge (sl);
		\draw[->,  to path={-| (\tikztotarget)}] (rng) edge (sl);
		\draw[->] (density) edge (sl);
		\draw[->] (quantile) edge (sl);
		\draw[->,  to path={-| (\tikztotarget)}] (char) edge (sl);
\end{tikzpicture}
\caption{Procedure for the construction of an object that belongs to the \code{yuima.law} class. \label{TIKZFIG}}
\end{figure}

After the construction of an object that belongs to the \code{yiuma.law} class, by using the standard constructor \code{setModel} where an \code{yuima.law} object is passed to \code{setModel} through the argument \code{measure}, the user can specify completely a SDE driven by a pure L\'evy jump as shown in the following command line:
\begin{CodeChunk}
\begin{CodeInput}
R> setModel(drift = "User.Defined_drift", jump.coeff = "User.Defined_jump.coef",        
+    measure.type = "code", measure = list(df = User.Defined_yuima.law))
\end{CodeInput}
\end{CodeChunk}
We remark that an object of \code{yuima.law} class can be also used to specify the L\'evy noise in the Continuous Time ARMA model \cite{iacus2015implementation} and in the COGARCH process \cite{iacus2017cogarch,iacus2018discrete}. In the first case, the model is built using the constructor \code{setCarma}:
\begin{CodeChunk}
\begin{CodeInput}
R> setCarma(p, q, measure.type = "code", measure = list(df = User.Defined_yuima.law))
\end{CodeInput}
\end{CodeChunk} 
where $p$ and $q$ are two integers indicating the order of the autoregressive and the moving average parameters. The COGARCH(p,q) process can be defined in \yuima using the function \code{setCogarch} as follows:
\begin{CodeChunk}
\begin{CodeInput}
R> setCogarch(p, q, measure.type = "code", measure = list(df = User.Defined_yuima.law))
\end{CodeInput}
\end{CodeChunk}

\begin{figure}[!h]
\centering
\begin{tikzpicture}[scale=0.60, transform shape]
    \node (pos1) [sensor]  {\textbf{Outputs}};
		\path (pos1)+(-8.1, 0) node (cons) [sensor]{\textbf{Constructors}};
		\path (pos1)+(-16.2, 0) node (Inp) [sensor]  {\textbf{Inputs}};
		\path (cons)+(0, -2.1) node (setC) [sensor] {\code{setCarma}};
		\path (setC)+(0, -2.1) node (setM) [sensor] {\code{setModel}};
		\path (setM)+(0, -2.1) node (setCog) [sensor] {\code{setCogarch}};
		\path (setM)+(-8.1, 0) node (Inp1) [sensor] {\code{yuima.law} and additional inputs};
		\draw[-> ] (Inp1) edge (setC); 
		\draw[-> ] (Inp1) edge (setM);
		\draw[-> ] (Inp1) edge (setCog);
		\path (setC)+(8.1,0) node (OupC) [sensor] {\code{yuima.carma}};
		\draw[-> ] (setC) edge (OupC);
		\path (setM)+(8.1,0) node (OupM) [sensor] {\code{yuima.model}};
		\draw[-> ] (setM) edge (OupM);
		\path (setCog)+(8.1,0) node (OupCog) [sensor] {\code{yuima.model}};
		\draw[-> ] (setCog) edge (OupCog);
		%\draw[->,  to path={-| (\tikztotarget)}] (char) edge (sl);
\end{tikzpicture}
\caption{Procedure for the definition of the model where the noise is defined by an object of \code{yuima.law} class. \label{TIKZFIG1}}
\end{figure}
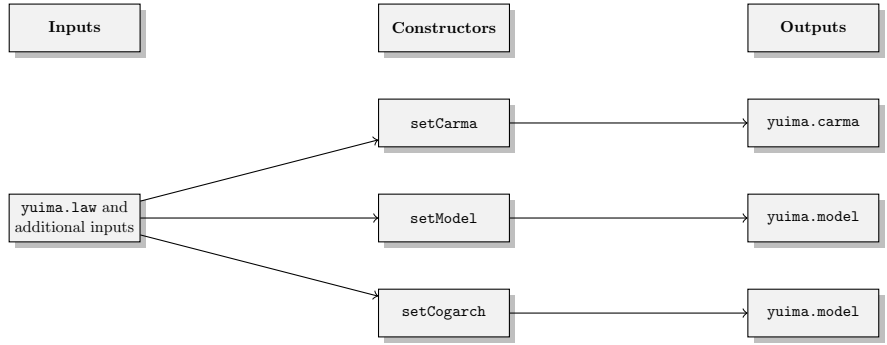
Figure \ref{TIKZFIG1} shows how to use an object of \code{yuima.law} class in the definition of models that can be constructed using  YUIMA.

%%%%%
\subsection{yuima.qmleLevy.incr: Estimation of an SDE driven by a L\'evy pure jump process in yuima}
In this section, we discuss how to estimate an SDE driven by a L\'evy pure jump process in YUIMA. In particular, we describe the features of the new class \code{yuima.qmleLevy.incr} and explain the usage of the new method \code{qmleLevy}. The \code{yuima.qmleLevy.incr} class is the extension of the classical \code{yuima.qmle} class because we have additional slots associated with the filtered L\'evy increments obtained using the procedure described in Section \ref{yu:main}. As a child class, \code{yuima.qmleLevy.incr} class inherits all the \yuima methods developed for the \code{yuima.qmle} class. Figure \ref{Fig:slotsyuimaqmle} reports the new slots. The most relevant for our study is the slot \code{Incr.Lev} where we can find the estimated L\'evy increments. 

\begin{figure}[!h]
\centering
\begin{tikzpicture} 
\tikzstyle{comment}=[rectangle, draw=black, rounded corners, fill=gray!10, text=black, drop shadow, text width=9.3cm]
    \node (PressureInstants) [comment, text justified]
        {\ \ \ \ \ \ \ \ \ \ \ \ \ \ \ \ \footnotesize{Object of \textbf{\code{yuima.qmleLevy.incr}} class} \newline
						$\begin{array}{|ll|}
						\hline 
						\emph{{\footnotesize{@}}\footnotesize{Incr.Lev}}:&\text{Estimated $\Delta t$ or unit-time L\'evy increments}\\ 
						\emph{{\footnotesize{@}}\footnotesize{logL.Incr}}:&\text{Log-likelihood of the estimated L\'evy increments}\\
						\emph{{\footnotesize{@}}\footnotesize{Levydetails}}:&\text{Additional information on the internal optimization}\\
						\emph{{\footnotesize{@}}\footnotesize{Data}}:&\text{Observed data}\\
						\emph{{\footnotesize{@}}\footnotesize{...}}:&\text{Slots inherited from \code{yuima.qmle} class}\\
						\hline
						\end{array}$
					};
\end{tikzpicture}
\caption{Main slots of an object that belongs to the \code{yuima.qmleLevy.incr} class.\label{Fig:slotsyuimaqmle}}
\end{figure}

An object of \code{yuima.qmleLevy.incr} class can not be directly constructed by the user but it is a possible output of the function \code{qmleLevy} that performs the estimation approach discussed in Section \ref{yu:notation and assumption}. The syntax of this function is as follows:
\begin{CodeChunk}
\begin{CodeInput}
R> qmleLevy(yuima, start, lower, upper, joint = FALSE, third = FALSE,  
+    Est.Incr = "NoIncr", aggregation = TRUE)
\end{CodeInput}
\end{CodeChunk}
The first argument is an object of \code{yuima} class where the slot \code{data} contains the observed dataset, while the slot \code{model} is a mathematical description of the SDE driven by the pure L\'evy jump process. The arguments \code{start}, \code{lower} and \code{upper} are used in the optimization routine to identify the initial guesses and box-constraints. The arguments \code{joint} and \code{third} are technical arguments related to the procedure of the GQMLE; we refer to \cite{iacus2018simulation} for a specific documentation of their meaning. The most important arguments for the estimation of the L\'evy increments are \code{Est.Incr} and \code{aggregation}. The argument \code{Est.Incr} assumes three values: \code{NoIncr}, \code{Incr} and \code{IncrPar}. In the first case, the function returns an object of \code{yuima.qmle} class that contains only the SDE parameters. The function \code{qmleLevy} internally runs only the GQMLE procedure. Setting \code{Est.Incr = "Incr"} or \code{Est.Incr = "IncrPar"}, \code{qmleLevy} returns an object of \code{yuima.qmleLevy.incr} class. In the first case the object contains the estimated increments while, in the second case, we also obtain the estimated parameters of the L\'evy measure. The last argument \code{aggregation} is a logical variable.
If \code{aggregation = TRUE}, the estimated L\'evy increments are \eqref{hm:add2} associated to the unit-time intervals while, if \code{aggregation = FALSE}, the function returns the $\Delta t$-time L\'evy increments, see \eqref{hm:add1}.

\begin{figure}[!h]
\centering
\begin{tikzpicture}[scale=0.60, transform shape]
\node (pos1) [sensor]  {\code{yuima.model}};
\path (pos1)+(12, 0) node (pos2) [sensor]{\code{yuima.data}};
\path (pos1)+(6, -2.1) node (pos3) [sensor]{\code{setYuima}};
\draw[-> ] (pos1) edge (pos3);
\draw[-> ] (pos2) edge (pos3);
\path (pos3)+(0, -2.1) node (Out1) [sensor]{\code{yuima}};
\draw[-> ] (pos3) edge (Out1);
\path (Out1)+(6, 0) node (Inp1) [sensor] {\code{start}, \code{lower}, \code{upper}, \code{...}};
\path (Out1)+(0, -2.1) node (cons) [sensor] {\code{qmleLevy}};
\draw[-> ] (Out1) edge (cons);
\draw[->,  to path={|- (\tikztotarget)}] (Inp1) edge (cons);
\path (cons)+(-6, -2.1) node (1Out) [sensor] {\code{yuima.qmle}};
\path (cons)+(6, -2.1) node (2Out) [sensor1] {\code{yuima.qmleLevy.incr}};
\draw[-> ] (cons) edge (1Out);
\draw[-> ] (cons) edge (2Out);
\end{tikzpicture}
\caption{Procedure for the estimation of a SDE driven by a L\'evy process in YUIMA.\label{Fig:Const}}
\end{figure}
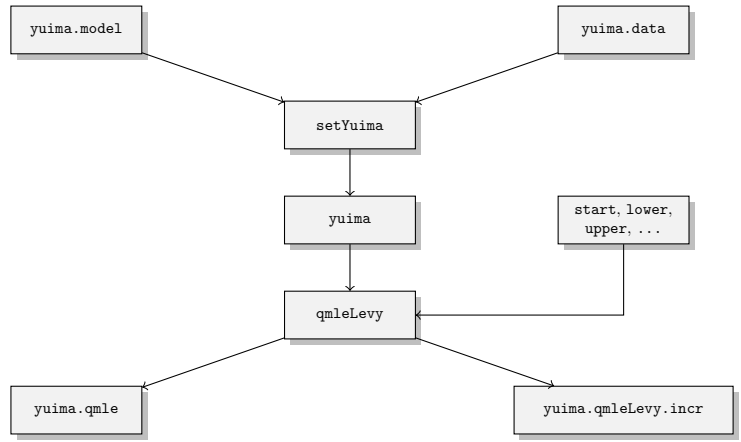
Figure \ref{Fig:Const} shows all the steps for the estimation of an SDE driven by a user-defined L\'evy process using the real data. As remarked at beginning of  Section \ref{sec:Implementation}, Figure \ref{Fig:Const} remarks the preliminary step for the construction of an object that belongs to \code{yuima} class through the constructor \code{setYuima}.   

\section{Numerical Examples} \label{sec:Numerical Examples}

\subsection{Univariate L\'evy SDE model}
In this section, we show how to use \yuima in the simulation and estimation of an univariate SDE driven by a pure jump L\'evy process defined by the user through an object of \code{yuima.law} class. The model, that we consider, is defined by the following SDE:
\begin{equation}
dX_t=\alpha_1\left(\alpha_2-X_t\right)dt + \gamma \mbox{d}J_t
\label{eq:OUsymVG}
\end{equation}
where $\alpha_2$ is a real parameter while $\alpha_1$ and $\gamma$ are positive parameters, and where $\left\{J_t\right\}_{t\geq0}$ is a symmetric Variance Gamma process with parameter $\eta > 0$. %\newline 

In this example we use an object of \code{yuima.law} class to construct a link between \yuima and the \texttt{VarianceGamma} package \cite{VGpackage} available in CRAN. We use two functions available in the package \texttt{VaranceGamma} respectively \code{rvg} for the random number generation and \code{dvg} to construct the density function. The parametrization in \texttt{VarianceGamma} package was introduced in \cite{Madan91} where the symmetric Variance Gamma $J$ random variable is defined as a normal variance mean mixture with a gamma subordinator.
%$\tau\sim \Gamma\left(\frac{1}{\nu},\frac{1}{\nu}\right)$ and its characteristic function is
Specifically, we set
\[
\phi_{J_1}\left(u\right)=\left(1+\sigma^2\nu \frac{u^2}{2}\right)^{-\frac{1}{\nu}}
\]
for the characteristic function of $J_1$.
Setting $\nu=\frac{1}{\eta\Delta t}$ and $\sigma=\sqrt{\Delta t}$, we identify the distribution of the increments $J_{t}-J_{t-\Delta t}$ for the symmetric Variance Gamma L\'evy process used in \eqref{eq:OUsymVG}. %\newline

Following the structure presented in Section \ref{sec:Implementation}, we define an object of \code{yuima.law} class that contains all the information on the underlying process $\left\{J_t\right\}_{t\geq0}$. We run all examples using version \texttt{yuima.1.15.4} available on \proglang{R}-Forge. 
\begin{CodeChunk}
\begin{CodeInput}
R> library(VarianceGamma)

#### Definition of a yuima.law object ####

R> myrng <- function(n, eta, t){
+    rvg(n, vgC = 0, sigma = sqrt(t), theta = 0, nu = 1/(eta*t))
+  }

R> mydens <- function(x, eta, t){
+    dvg(x, vgC = 0, sigma = sqrt(t), theta = 0, nu = 1/(eta*t))
+  }

R> mylaw <- setLaw(rng = myrng, density = mydens, dim = 1)

R> class(mylaw)
\end{CodeInput}
\begin{CodeOutput}
[1] "yuima.law"
attr(,"package")
[1] "yuima"
\end{CodeOutput}

\begin{CodeInput}
R> slotNames(mylaw)
\end{CodeInput}
\begin{CodeOutput}
[1] "rng"            "density"        "cdf"            "quantile"       
[5] "characteristic" "param.measure"  "time.var"       "dim"      
\end{CodeOutput}
\end{CodeChunk}

Using the constructor \code{setLaw} we are able to build an object of \code{yuima.law} class where the first two slots contain the random number generator (\code{myrng}) and the density function (\code{mydens}) that we will use for the simulation and the estimation of the distribution of $J$ in the model \eqref{eq:OUsymVG}. The next step is to build an object of \code{yuima.model} class using the standard constructor \code{setModel}: 

\begin{CodeChunk}
\begin{CodeInput}
#### Definition of an object of yuima.model class ####
 
R> yuima1 <- setModel(drift = "alpha1*(alpha2-X)", jump.coeff = "gamma",
+    jump.variable = "J", solve.variable = c("X"), state.variable = c("X"),         
+    measure.type = "code", measure = list(df = mylaw))
\end{CodeInput}
\end{CodeChunk}

It is worth noticing that the slot \code{measure} of the object \code{yuima1} contains the object \code{mylaw} constructed previously.

\begin{CodeChunk}
\begin{CodeInput} 
R> print(yuima1@measure[[1]])
\end{CodeInput}
\begin{CodeOutput}
An object of class "yuima.law"
Slot "rng":
function(n, eta, t){
  rvg(n, vgC = 0, sigma = sqrt(t), theta = 0, nu = 1/(eta*t))
}

Slot "density":
function(x, eta, t){
  dvg(x, vgC = 0, sigma = sqrt(t), theta = 0, nu = 1/(eta*t))
}

Slot "cdf":
function(q,...){NULL}
<environment: 0x000001e202715cf0>

Slot "quantile":
function(p,...){NULL}
<environment: 0x000001e202715cf0>

Slot "characteristic":
function(u,...){NULL}
<environment: 0x000001e202715cf0>

Slot "param.measure":
[1] "eta"

Slot "time.var":
[1] "t"

Slot "dim":
[1] NA
\end{CodeOutput}
\end{CodeChunk}

We can generate a sample path using the \code{simulate} method in \yuima that we report in Figure \ref{fig:OUVG1}. The simulation scheme in \yuima is based on the Euler discretization, and the small-time increments of the noise $J$ therein are generated by the random number generator stored in \code{mylaw} object.

\begin{CodeChunk}
\begin{CodeInput} 
#### real parameters ####
R> alpha1 <- 0.4; alpha2 <- 0.25; gamma<- 0.25; eta <- 1

#### Sample grid ####
R> n <- 50000 
R> Time <- 1000 
R> sam <- setSampling(Terminal = Time, n = n) 

#### Simulation ####
R> yuima2 <- setYuima(model = yuima1, sampling = sam)
R> true <- list(alpha1 = alpha1, alpha2 = alpha2, gamma = gamma, eta = eta)
R> set.seed(123)
R> yuima3 <- simulate(yuima2, true.parameter = true, sampling = sam)

#### plot sample path ####
R> plot(yuima3)
\end{CodeInput}
\begin{figure}[!h]
\centering 
\includegraphics[width=10cm,height=7cm]{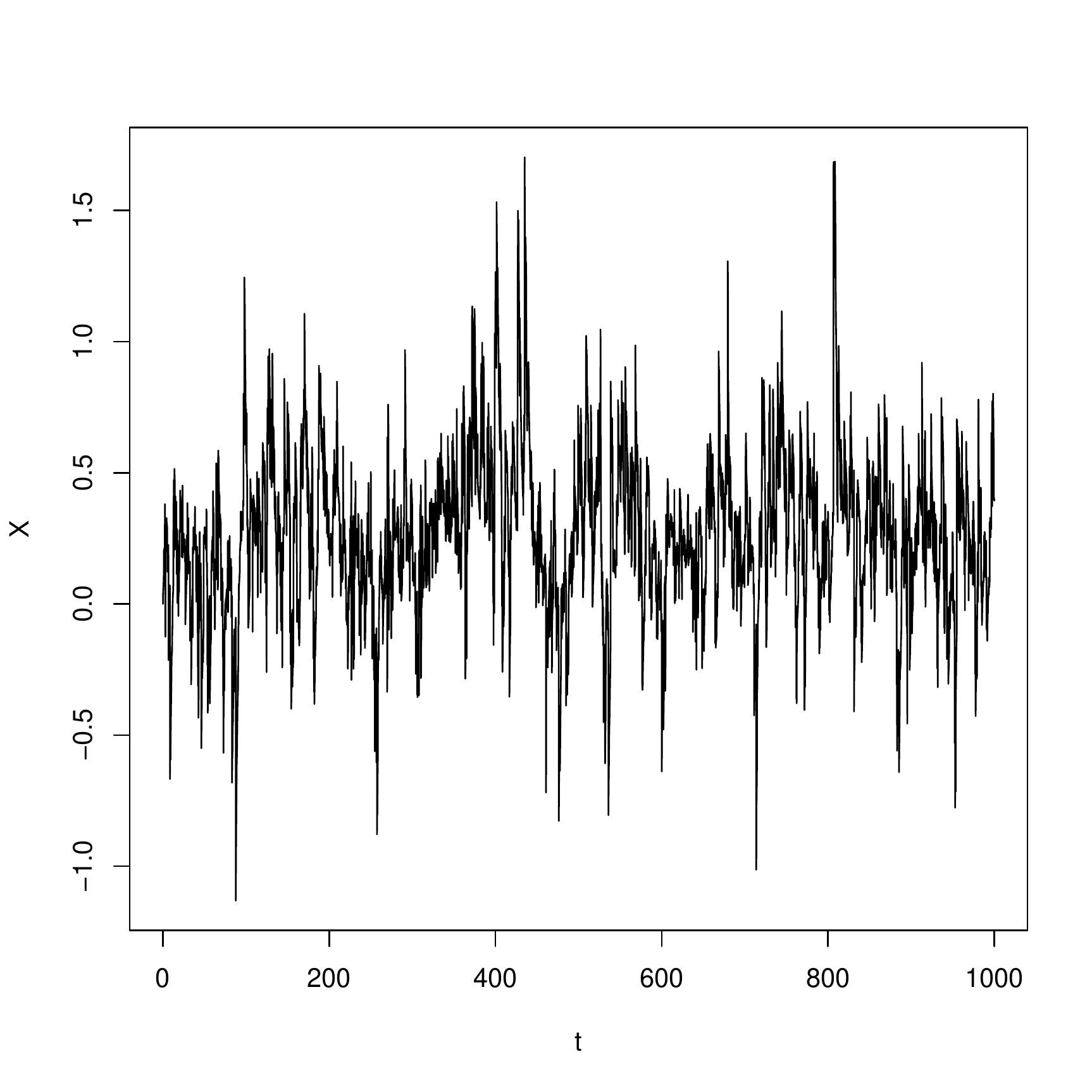}
\caption[Simulated Trajectory of an OU-VG model with parameters $\alpha_1 = 0.4, \alpha_2 = 0.25, \gamma = 0.25, \eta = 1$.]{Simulated Trajectory of an OU-VG model with parameters $\alpha_1 = 0.4, \alpha_2 = 0.25, \gamma = 0.25, \eta = 1$.\label{fig:OUVG1}}
\end{figure}

\end{CodeChunk}

To assess numerically the effectiveness of the three-step estimation procedure discussed in Section \ref{yu:notation and assumption} and Section \ref{yu:main} we re-estimate the model in \eqref{eq:OUsymVG} using the data stored in the object \code{yuima3}. 

\begin{CodeChunk}
\begin{CodeInput} 
#### starting point ####
R> set.seed(123)
R> start <- list(alpha1 = runif(1, 0.01, 2), alpha2 = runif(1, 0.01, 2),
+    gamma = runif(1, 0.01, 2), eta = runif(1, 0.5, 1.5))

#### upper and lower bounds ####
R> upper <- list(alpha1 = 2, alpha2 = 2, gamma = 2, eta = 1.5) 
R> lower <- list(alpha1 = 0.01, alpha2 = 0.01, gamma = 0.01, eta = .5) 

#### GQMLE procedure ####
R> res.VG <- qmleLevy(yuima3, start = start, lower = lower, 
+    upper = upper, Est.Incr = "IncrPar", aggregation = TRUE,
+    joint = FALSE) 
\end{CodeInput}
\end{CodeChunk}

The function \code{qmleLevy} returns an object of \code{yuima.qmleLevy.incr} class that extends the standard class \code{yuima.qmle}. 

\begin{CodeChunk}
\begin{CodeInput} 
R> class(res.VG)
\end{CodeInput}
\begin{CodeOutput}
[1] "yuima.qmleLevy.incr"
attr(,"package")
[1] "yuima"
\end{CodeOutput}
\begin{CodeInput} 
R> slotNames(res.VG)
\end{CodeInput}
\begin{CodeOutput} 
 [1] "Incr.Lev"      "logL.Incr"     "minusloglLevy" "Levydetails"   "Data"  "model"        
 [7] "call"          "coef"          "fullcoef"      "fixed"         "vcov"  "min"          
[13] "details"       "minuslogl"     "nobs"          "method"
\end{CodeOutput}
\end{CodeChunk}

The slot \code{Incr.Lev} is filled with an object of \code{yuima.data} class that contains the estimated unit-time L\'evy increments.  

\begin{CodeChunk}
\begin{CodeInput} 
R> str(res.VG@Incr.Lev, 2)
\end{CodeInput}
\begin{CodeOutput}
Formal class 'yuima.data' [package "yuima"] with 2 slots
  ..@ original.data:'zooreg' series from 1 to 1000
  Data: num [1:1000, 1] 0.141 0.249 -1.219 1.336 0.106 ...
  .. ..- attr(*, "dimnames")=List of 2
  Index:  num [1:1000] 1 2 3 4 5 6 7 8 9 10 ...
  Frequency: 1 
  ..@ zoo.data     :List of 1
\end{CodeOutput}
\end{CodeChunk}

Figure \ref{fig:EstIncrFromOUVG1} reports the trajectory of the estimated unit-time L\'evy increments.

\begin{CodeChunk}
\begin{CodeInput} 
#### Visualization of the estimated unit-time increments ####
R> plot(res.VG@Incr.Lev)
\end{CodeInput}
\begin{figure}[!h]
\centering 
\includegraphics[width=10cm,height=7cm]{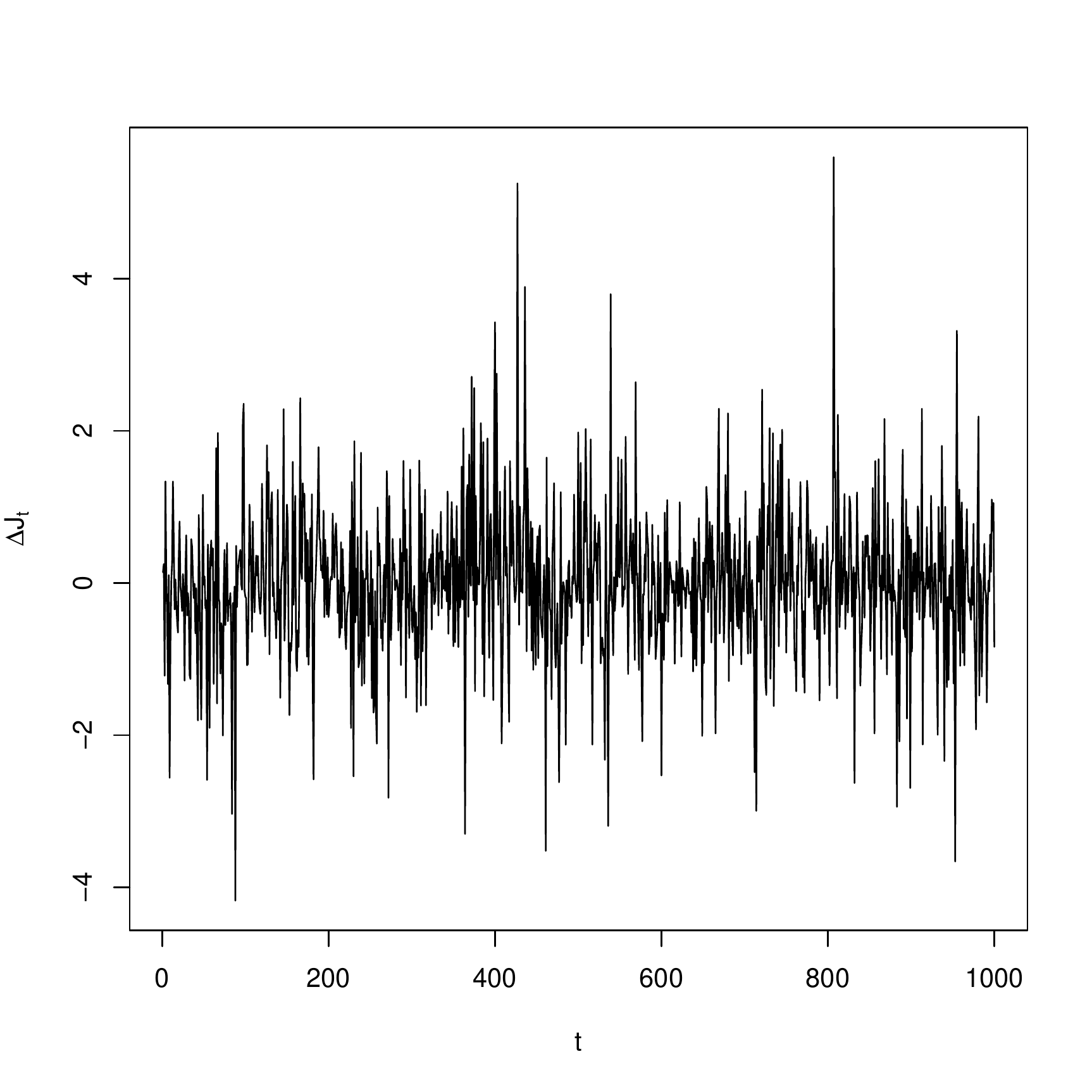} 
\caption[Estimated unit-time increments from the OU-VG model defined in \eqref{eq:OUsymVG}.]{Estimated unit-time increments from the OU-VG model defined in \eqref{eq:OUsymVG}.\label{fig:EstIncrFromOUVG1}}
\end{figure}
\end{CodeChunk}

\subsection{Multivariate L\'evy SDE model}
\label{hm:sec_multivariate.SDE.sim}

In this section, we simulate and estimate a bivariate SDE model driven by two independent symmetric Variance Gamma processes. As done in the previous section we construct the random number generator and the joint density function of the underlying bivariate  L\'evy process using the function developed in the \texttt{VarianceGamma} package. We report below the code for simulating and estimating the process $X_t:=\left[X_{1,t},X_{2,t}\right]^\top$ that satisfies the following system of SDEs:
\begin{equation}
\label{eq:ExampBivariate}
\left.
\begin{array}{l}
d X_{1,t} = \alpha_{1,1}\left(\alpha_{1,2}-X_{1,t}-0.2X_{2,t}\right) d t + \gamma_1 d J_{1,t}\\
d X_{2,t} = \alpha_{2,1}\left(\alpha_{2,2}-X_{2,t}\right) d t + \gamma_2 d J_{2,t}\\
\end{array}
\right.
\end{equation}
%\begin{eqnarray}
%d X_{1,t} &=& \alpha_{1,1}\left(\alpha_{1,2}-X_{1,t}-0.2X_{2,t}\right) d t + \gamma_1 d J_{1,t}\nonumber\\
%d X_{2,t} &=& \alpha_{2,1}\left(\alpha_{2,2}-X_{2,t}\right) d t + \gamma_2 d J_{2,t}\nonumber\\
%\label{eq:ExampBivariate}
%\end{eqnarray}
where $J_{t}=\left[J_{t,1},J_{t,2}\right]^{\top}$ is a bivariate L\'evy process where the components are two independent symmetric Variance Gamma processes.\newline
The first step is to construct an object of \code{yuima.law} class that contains a random number generator and the joint density of the bivariate L\'evy process $J_{t}$. As done for the model in \eqref{eq:OUsymVG} we use the functions available in the \proglang{R} package \code{VarianceGamma}. The random number generator of the increments can be defined using the following command lines:
\begin{CodeChunk}
\begin{CodeInput}
#### Construction of a bivariate rng function ####
R> myrng2 <- function(n, eta1, eta2, t){
+    res0 <- rvg(n, vgC = 0, sigma = sqrt(t), theta = 0, nu = 1 / (eta1 * t))
+    cbind(res0, rvg(n, vgC = 0, sigma = sqrt(t), theta = 0, nu = 1 / (eta2 * t)))
+  }
\end{CodeInput}
\end{CodeChunk}
Compared with the random number generator used in the univariate case, the result of the function \code{rng} is a two-column matrix where each column contains increments generated from a symmetric Variance Gamma random variable. Exploiting the independence assumption we construct the joint density of the process $J_t$ as a product of two univariate symmetric Variance Gamma densities using the following \proglang{R} function:
\begin{CodeChunk}
\begin{CodeInput}
#### Construction of the joint density ####
R> mydens2 <- function(x, eta1, eta2, t){
+    dvg(x[,1], vgC = 0, sigma = sqrt(t), theta = 0, nu = 1/(eta1 * t)) *
+      dvg(x[,2], vgC = 0, sigma = sqrt(t), theta = 0, nu = 1/(eta2 * t))
+  } 
\end{CodeInput}
\end{CodeChunk}
Using the constructor \code{setLaw}, we build an object of \code{yuima.law} that contains information for simulating the noise $J_t$ and for estimating the parameters in \eqref{eq:ExampBivariate}.
\begin{CodeChunk}
\begin{CodeInput}
R> mylaw2 <- setLaw(rng = myrng2, density = mydens2, dim = 2)
\end{CodeInput}
\end{CodeChunk}
We simulate a trajectory of the model in \eqref{eq:ExampBivariate} using the standard syntax in \yuima as follows:
\begin{CodeChunk}
\begin{CodeInput}
#### Model Definition ####
R> yuima2 <- setModel(drift = c("alpha11*(alpha12-X1-0.2*X2)","alpha21*(alpha22-X2)"), 
+    jump.coeff = matrix(c("gamma1", 0, 0, ""gamma2"), 2, 2), solve.variable = c("X1", "X2"), 
+    state.variable = c("X1", "X2"), measure.type = c("code", "code"), 
+    measure = list(df = mylaw2), jump.variable = "J")

#### Choosing model parameters ####
R> alpha11 = 0.4; alpha12 = 0.25; gamma1 = 0.2; eta1 = 1
R> alpha21 = 0.3; alpha22 = 0.3; gamma2 = 0.1; eta2 = 1
R> true2 <- list(alpha11 = alpha11, alpha12 = alpha12, gamma1 = gamma1, eta1 = eta1,
+    alpha21 = alpha21, alpha22 = alpha22, gamma2 = gamma2, eta2 = eta2)

#### Setting the sample grid ####
R> n2 <- 50000 
R> Time2 <- 1000 
R> sam2 <- setSampling(Terminal = Time2, n = n2)

#### Simulation ####
R> yuima2 <- setYuima(model = yuima2, sampling = sam2) 
R> set.seed(123)
R> yuima2 <- simulate(yuima2, true.parameter = true2, sampling = sam2) 
\end{CodeInput}
\end{CodeChunk}
Figure \ref{fig:SimulatedBivariateProcess} reports the simulated trajectory of each member in the process $X_t:=\left[X_{t,1},X_{t_2}\right]^\top$
\begin{CodeChunk}
\begin{CodeInput}
R> plot(yuima2)
\end{CodeInput}
\begin{figure}[!h]
\centering 
\includegraphics[width=10cm,height=7cm]{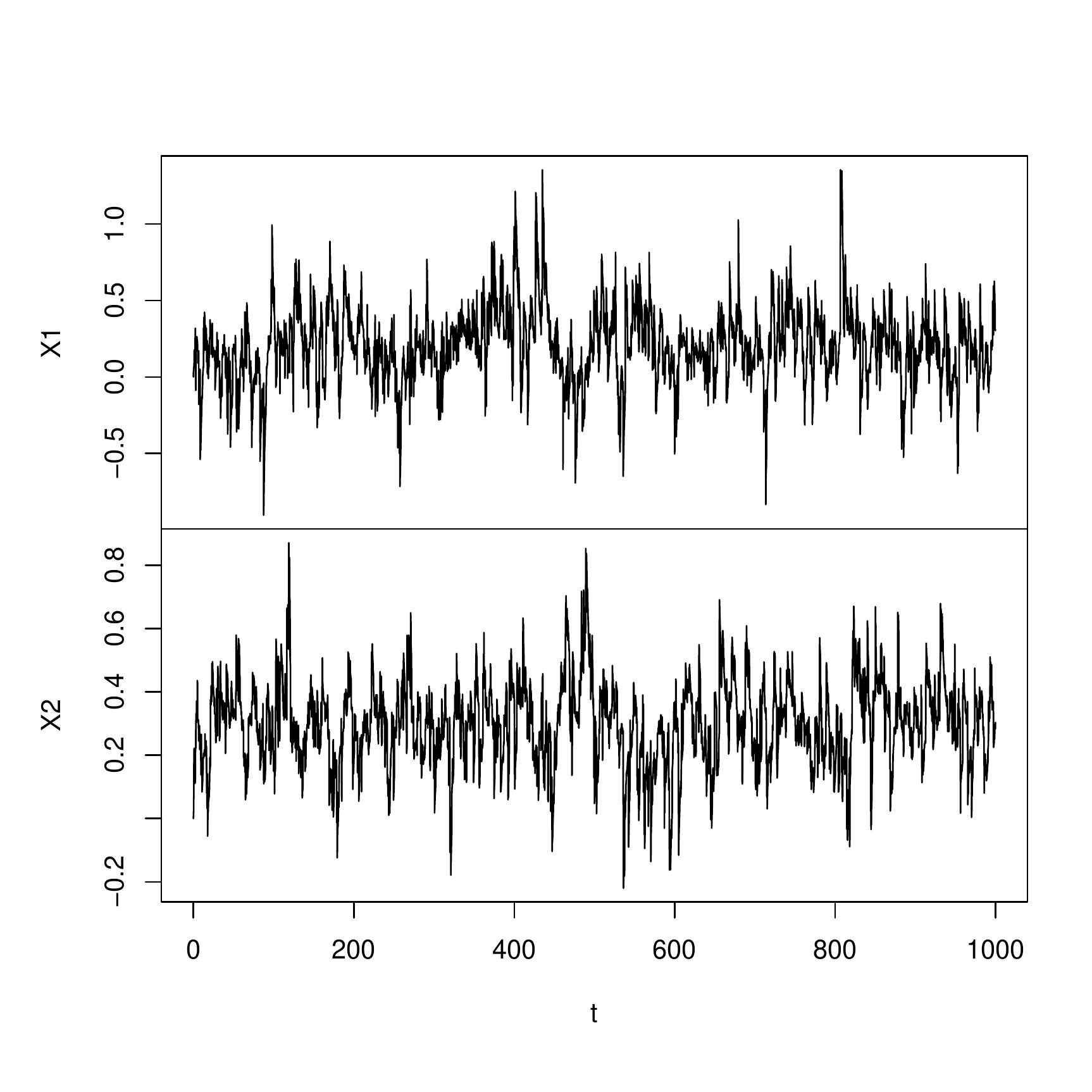} 
\caption[Simulated trajectory of the bivariate process $Y_t$ defined in \eqref{eq:ExampBivariate}.]{Simulated trajectory of the bivariate process $Y_t$ defined in \eqref{eq:ExampBivariate}.\label{fig:SimulatedBivariateProcess}}
\end{figure}
\end{CodeChunk} 

Now we execute the three-step estimation procedure using, as a dataset, the simulated trajectory stored in the slot \code{data}  of the object \code{yuima2}. As done in the univariate case, we select randomly a starting point and we fix upper and lower bounds for each parameter. We select the inputs of the function \code{qmleLevy} to get an object of \code{yuima.qmleLevy.incr} class that contains the estimated unit-time increments of the noise and the L\'evy measure parameters of the bivariate symmetric Variance Gamma process $Z_t$.
\begin{CodeChunk}
\begin{CodeInput}
#### Starting point generation #### 
R> set.seed(123)
R> start2 <- list(alpha11 = runif(1, 0.01, 2), alpha12 = runif(1, 0.01, 2),
+    gamma1 = runif(1, 0.01, 2), eta1 = runif(1, 0.5, 2), alpha21 = runif(1, 0.01, 2), 
+    alpha22 = runif(1, 0.01, 2), gamma2 = runif(1, 0.01, 2), eta2 = runif(1, 0.5, 2))

#### Upper and lower bounds ####
R> upper2 <- list(alpha11 = 2, alpha12 = 2, gamma1 = 2, eta1 = 2,
+    alpha21 = 2, alpha22 = 2, gamma2 = 2, eta2 = 2) 
R> lower2 <- list(alpha11 = 0.01, alpha12 = 0.01, gamma1 = 0.01, eta1 = .5,
+    alpha21 = 0.01, alpha22 = 0.01, gamma2 = 0.01, eta2 = .5) ## set lower bound

#### Estimation ####
R> res.VG2 <- qmleLevy(yuima2, start = start2, lower = lower2, upper = upper2, 
+    Est.Incr = "IncrPar", aggregation = TRUE, joint = FALSE) 
\end{CodeInput}
\end{CodeChunk} 
With the following command lines, we compare the initial values for the optimization routine, the fixed and estimated parameters.
\begin{CodeChunk}
\begin{CodeInput}
#### Starting values ####
unlist(start2)[names(coef(res.VG2))]
\end{CodeInput}
\begin{CodeOutput}
  alpha11   alpha12   alpha21   alpha22    gamma1    gamma2      eta1      eta2 
0.5822793 1.5787272 1.8815299 0.1006574 0.8238641 1.0609299 1.8245261 1.8386286 
\end{CodeOutput}
\begin{CodeInput}
#### Real parameters ####
unlist(true2)[names(coef(res.VG2))]
\end{CodeInput}
\begin{CodeOutput}
alpha11 alpha12 alpha21 alpha22  gamma1  gamma2    eta1    eta2 
   0.40    0.25    0.30    0.30    0.20    0.10    1.00    1.00 
\end{CodeOutput}
\begin{CodeInput}
#### Estimated parameters ####
coef(res.VG2)
\end{CodeInput}
\begin{CodeOutput}
  alpha11   alpha12   alpha21   alpha22    gamma1    gamma2      eta1      eta2 
0.3668973 0.2633624 0.2984875 0.3009404 0.2053790 0.1017661 0.9883838 0.9745951 
\end{CodeOutput}
\end{CodeChunk}
The estimated parameters seem to be precise. The Euclidean norm of the difference between \code{true2} and \code{coef(res.VG2}) is approximately 0.0457 while, applying the same distance between \code{true2} and \code{start2}, it results to be 2.652 with a reduction of $98\%$. We show the standard errors applying the function \code{summary}. 
\begin{CodeChunk}
\begin{CodeInput}
#### Summary ####
summary(res.VG2)
\end{CodeInput}
\begin{CodeOutput}
summary(res.VG2)
Quasi-Maximum likelihood estimation

Call:
qmleLevy(yuima = yuima2, start = start2, lower = lower2, upper = upper2, 
    joint = FALSE, Est.Incr = "IncrPar", aggregation = TRUE)

Coefficients:
          Estimate  Std. Error
alpha11  0.3668973 0.005725729
alpha12  0.2633624 0.002610983
alpha21  0.2984875 0.028302434
alpha22  0.3009404 0.017663594
gamma1   0.2053790 0.023555954
gamma2   0.1017661 0.010761713
eta1     0.9883838 0.098752509
eta2     0.9745951 0.094324105

-2 log L: -494221.9 -494557.8 5335.13 
\end{CodeOutput}
\end{CodeChunk}
The estimated time-unit increments of the bivariate L\'evy noise are available in the slot \code{Incr.Lev} 
\begin{CodeChunk}
\begin{CodeInput}
R> summary(res.VG2@Incr.Lev)
\end{CodeInput}
\begin{CodeOutput}
   Length1    Length2      Class       Mode 
      1000       1000 yuima.data         S4 
\end{CodeOutput}
\begin{CodeInput}
R> plot(res.VG2@Incr.Lev, ylab = c(expression(paste(Delta, J[1, t])),
+    expression(paste(Delta, J[2, t]))), xlab = "t")
\end{CodeInput}
\begin{figure}[!h]
\centering 
\includegraphics[width=10cm,height=7cm]{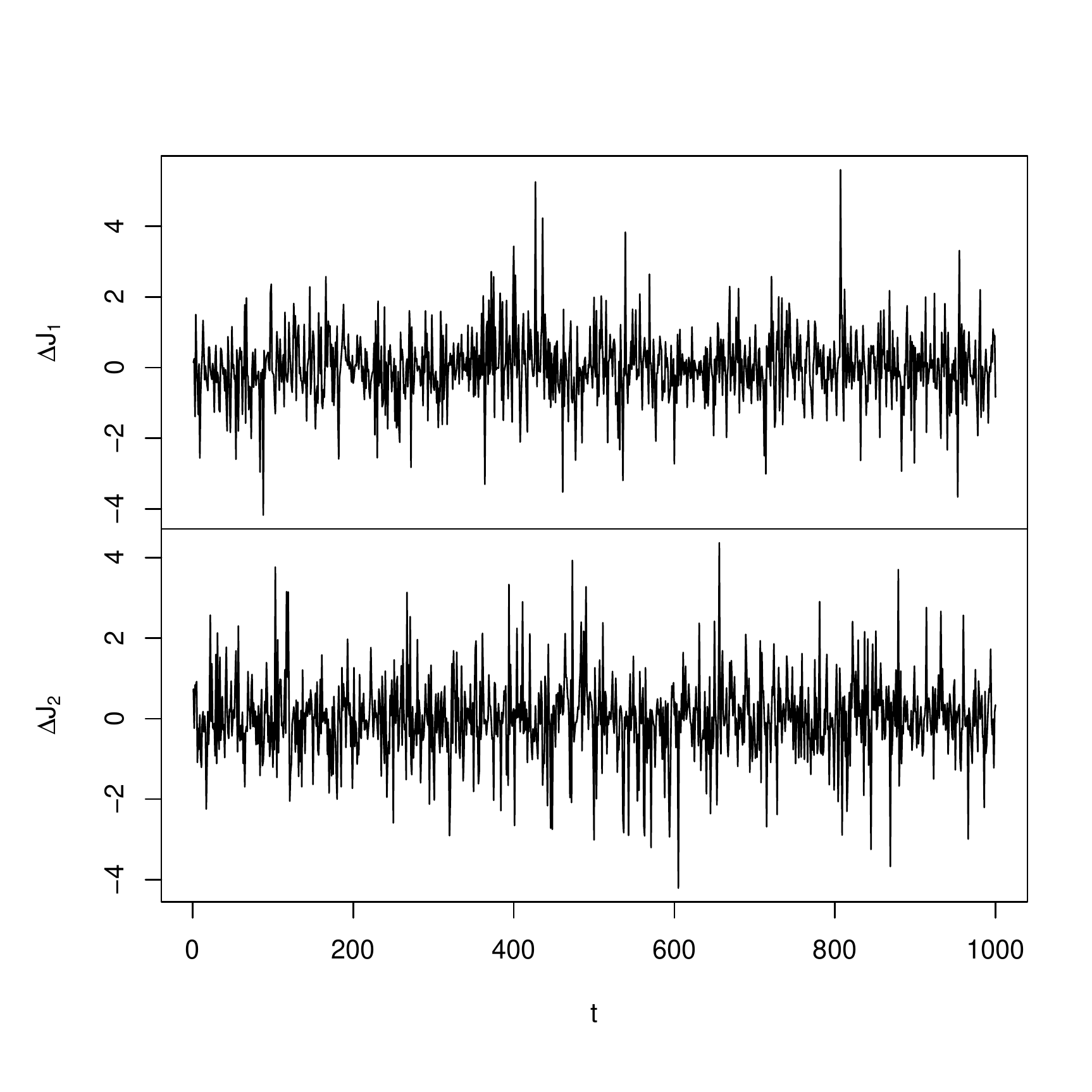}
\caption[Estimated unit-time increments of the bivariate process $J_t$.]{Estimated unit-time increments of the bivariate process $J_t$. \label{fig:BivariateUnitTimeIncr}}
\end{figure}
\end{CodeChunk}

\subsection{Real Data}
\label{hm:sec_real.data}

In this section, we discuss how to estimate a stochastic differential equation driven by a L\'evy process using real data. Once the increments have been obtained, we show how to use them in the two different situations: noise selection and forecasting.
We start with an example that shows how to combine the information stored in an object of \code{yuima.qmleLevy.incr} class with available \proglang{R} packages for selecting a L\'evy measure. The data is  downloaded from \code{yahoo.finance} using the \proglang{R} package \texttt{quantmod} that downloads time-series in an \code{xts} format. We get the closing log-prices of the S\&P500 index ranging from 04 January 1951 to 04 January 2021 using the following command lines:
\begin{CodeChunk}
\begin{CodeInput}
#### Download Dataset ####

R> library(quantmod)
R> getSymbols(Symbols = "^GSPC", from = "1951-01-04", to = "2021-01-04")
R> logprice <- log(GSPC$GSPC.Close)
R> plot(logprice, main = "Closed log-prices of Standard & Poor 500", main.cex = 0.8)
\end{CodeInput}
\begin{figure}[!h]
\centering 
\includegraphics[width=10cm,height=7cm]{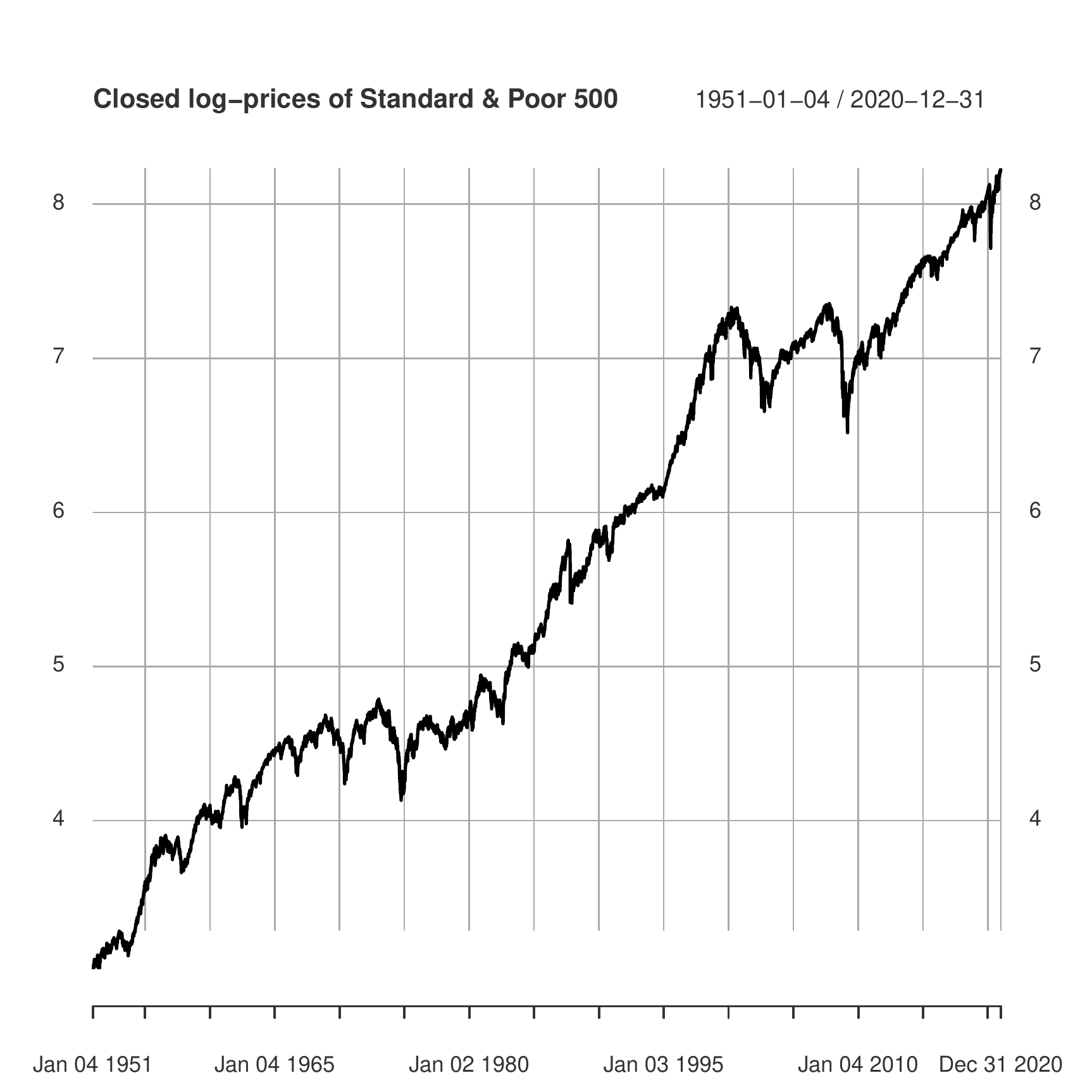}
\caption[Closing log-prices of S\&P500 ranging from 04 January 1951 to 04 January 2021]{Closing log-prices of S\&P500 ranging from 04 January 1951 to 04 January 2021 \label{SP500logprice}}
\end{figure}
\end{CodeChunk}
Figure \ref{SP500logprice} reports the time series used in our example. %\newline 
%The log-price is described by the following SDE:
We describe the log-price by the following SDE:
\begin{equation}
d X_t = \left(\alpha_1+\alpha_2 X_t\right)d t + \gamma_1 X_t^{\gamma_2}d J_t,
\label{candidate}
\end{equation}
where $\alpha_1>0$, $\alpha_2 <0$,  $\gamma_1 > 0$ and $\gamma_2 \geq 0$. %\newline
%where $\alpha_1$ is a real parameter  while $\alpha_2\geq0$,  $\gamma_1 \geq 0$ and $\gamma_2 \geq 0$. %\newline
We construct an object of \code{yuima} class that contains the mathematical description of the SDE in \eqref{candidate} and the data. 
\begin{CodeChunk}
\begin{CodeInput}
#### Law Definition ####
R> mylaw3 <- setLaw(dim = 1)

R> #### Model and Data ####
yuima3 <- setModel(drift = "alpha1+alpha2*X", jump.coeff = matrix(c("gamma1*X^gamma2")),
+    measure.type = "code", measure = list(df = mylaw3), jump.variable = "J", 
+    solve.variable = c("X"), state.variable = c("X"))
R> Data <- setData(logprice, delta = 1/30)
R> yuima3 <- setYuima(data = Data, model = yuima3)
R> print(Data)
\end{CodeInput}
\begin{CodeOutput}
Number of original time series: 1
length = 17615, time range [1951-01-04 ; 2020-12-31]

Number of zoo time series: 1
           length time.min time.max      delta
GSPC.Close  17615        0  587.133 0.03333333
\end{CodeOutput}
\end{CodeChunk}
From the structure of the object \texttt{Data}, we observe that the time is expressed in a monthly basis. Therefore, setting $t_0=0$, we have  $T_n=587.133$ and $h=1/30$. %\newline

It is worth noting that the object \code{mylaw} does not require a formal specification for the random number generator and for the density function as done in the previous examples. Indeed, we do not assume any specific form of the L\'evy measure of the process $J$ and the estimation of the unit-time increments described in Section \ref{yu:main} is completely model-free.
%Using the input \code{delta} in the \pkg{yuima} function \code{setData} we specify the length of each interval.
\begin{CodeChunk}
\begin{CodeInput}
#### Estimation of time-unit increments ####
R> set.seed(123)
R> start3 <- list(alpha1 = runif(1, min = 10^(-10), max = 1), 
+    alpha2 = runif(1, min = -1, max = -10^(-10)), gamma1 = runif(1, min = 10^(-10), max = 1), 
+    gamma2 = runif(1, min = 0, max = 2))
R> lower3 <- list(alpha1 = 10^(-10), alpha2 = -1, gamma1 = 10^(-10), gamma2 =  0)
R> upper3 <- list(alpha1 = 1, alpha2 =  1, gamma1 = 1, gamma2 =  2)
R> res3 <- qmleLevy(yuima3, start = start3, lower = lower3, upper = upper3, 
+    Est.Incr = "Incr", aggregation = TRUE, joint = FALSE) 
R> summary(res3)
\end{CodeInput}
\begin{CodeOutput}
Quasi-Maximum likelihood estimation

Call:
qmleLevy(yuima = yuima3, start = start3, lower = lower3, upper = upper3, 
    joint = FALSE, Est.Incr = "Incr", aggregation = TRUE)

Coefficients:
           Estimate   Std. Error
gamma1  0.016267709 0.0007539279
gamma2  0.694168099 0.0715834862
alpha1  0.012156634 0.0081051421
alpha2 -0.000635202 0.0015396235

-2 log L: -113838.7 -113858.5  
\end{CodeOutput}
\end{CodeChunk}
Applying \code{logLik} method to \code{res3}, we determine the value for the stepwise GQL function $\mbbh_{1,n}(\hat{\gamma}_1, \hat{\gamma}_2)$ and the value of GQMLE with the following command lines:

\begin{CodeChunk}
\begin{CodeInput}
R> T_n <- tail(index(Data@zoo.data[[1]]),1L)
R> H_1 <- -1/T_n*logLik(res3)[1]
R> GQMLE <-logLik(res3)[2]
R> print(c(H_1, GQMLE))
\end{CodeInput}
\begin{CodeChunk}
[1] -96.94454 56929.23019
\end{CodeChunk}
\end{CodeChunk}

The unit-time increments are stored in the slot \code{res3@Incr.Lev} and they can be extrapolated using the following command lines:
\begin{CodeChunk}
\begin{CodeInput}
#### Time-unit increments ####
R> UnitaryIncr <- as.numeric(res3@Incr.Lev@original.data)
R> plot(UnitaryIncr, ylab = expression(Delta*J[1]), xlab = " ",
+    main = "Estimated Time-Unit Increments", cex.main = 0.8)
\end{CodeInput}
\begin{figure}[!h]
\centering 
\includegraphics[width=10cm,height=7cm]{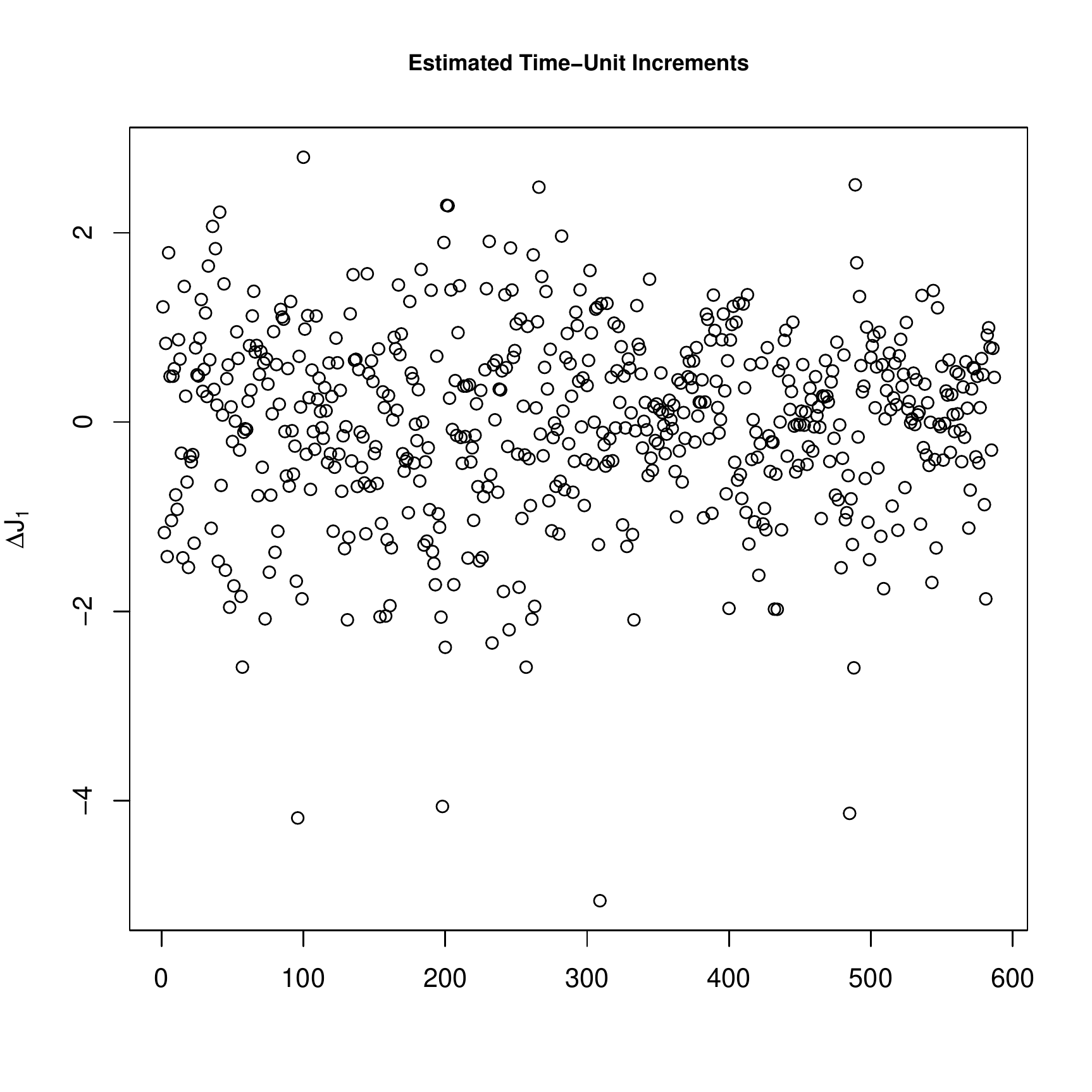}
\caption[Estimated time-unit increments from the real dataset]{Estimated time-unit increments from the real dataset \label{FigIncrReal}}
\end{figure}
\end{CodeChunk}
Figure \ref{FigIncrReal} reports the  estimated time-unit increments of the process $J$. Due to the fact that the object  \code{UnitaryIncr} belongs to the \code{numeric} class, we can apply any method available in \proglang{R} for any \code{numeric} object.
Just for an illustration, in the following command lines, we show how to get the kernel density estimate based on the estimated increments, and compare it with the empirical histogram; a graphical  comparison is reported in Figure \ref{KernelDens}.
\begin{CodeChunk}
\begin{CodeInput}
#### Plot kernel density ####
R> hist(UnitaryIncr, freq = F, nclass = 50, main = "Density of Time-Unit Increments", 
+    cex.main =0.8, xlab = expression(Delta*J[1]), ylab = " ")
R> lines(density(UnitaryIncr),  col = "red")
\end{CodeInput}
\begin{figure}[!h]
\centering
\includegraphics[width=10cm,height=7cm]{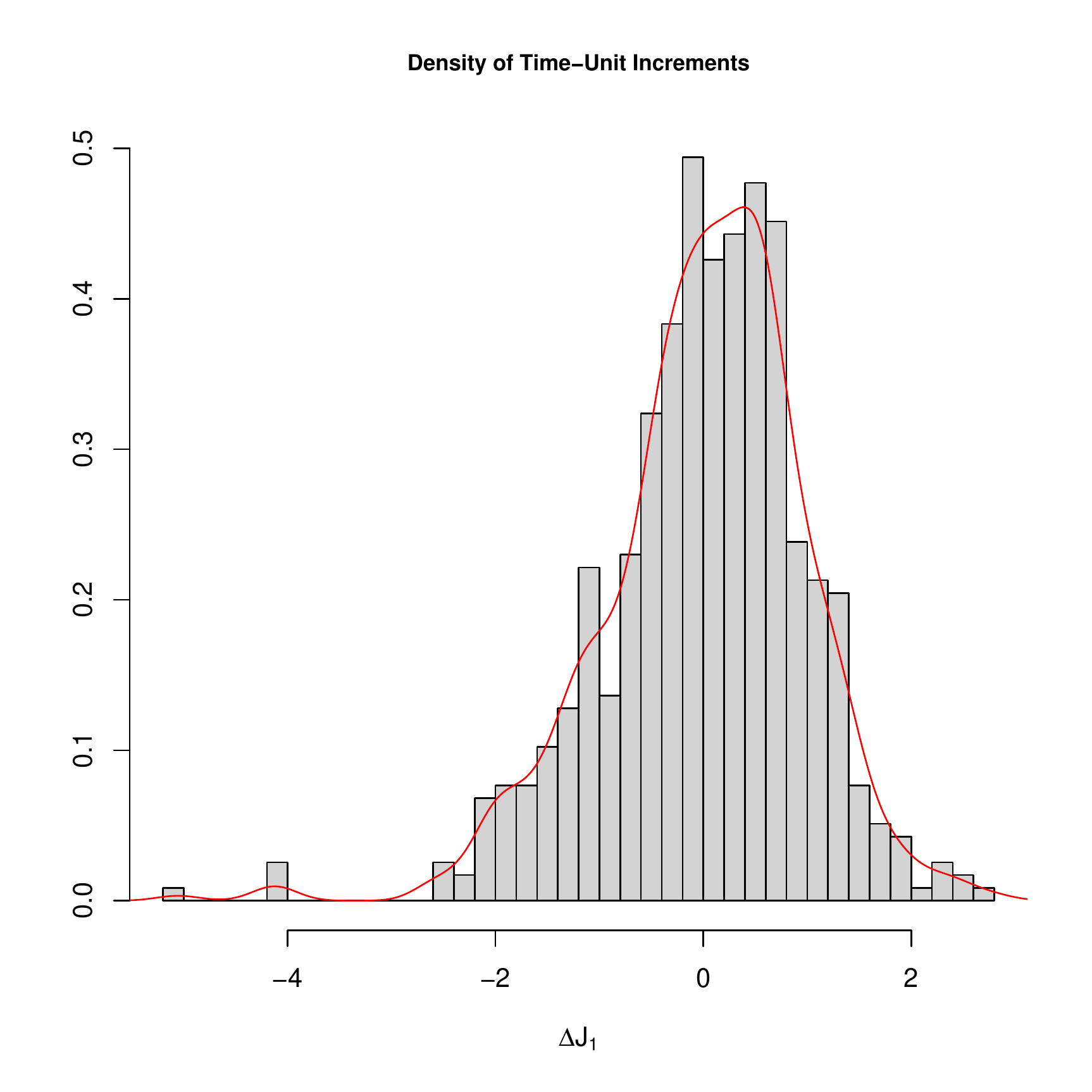}
\caption[Comparison betwen empirical and kernel densities of the time-unit increments]{Comparison betwen empirical and kernel densities of the time-unit increments \label{KernelDens}}
\end{figure}
\end{CodeChunk}
A model selection exercise can be done using the function \code{stepAIC.ghyp} available in the package \texttt{ghyp} that allows the user to compare a list of distributions widely applied in finance.
In particular, based on the Akaike Information Criterion, this function identifies the best model between the Generalized Hyperbolic, the Hyperbolic, the Variance Gamma, the Normal Inverse Gaussian, the Student-$t$ and the Normal distribution.
\begin{CodeChunk}
\begin{CodeInput}
#### Model selection  ####
R> library(ghyp)
R> Comparison <- stepAIC.ghyp(UnitaryIncr)
R> Comparison$best.model
\end{CodeInput}
\begin{CodeOutput}
Asymmetric Hyperbolic Distribution:

Parameters:
 alpha.bar         mu      sigma      gamma 
 1.8398103  0.5383745  0.9075450 -0.5337697 

log-likelihood:
-793.7689

Call:
stepAIC.ghyp(data = UnitaryIncr)
\end{CodeOutput}
\end{CodeChunk}
In our example, the function \code{stepAIC.ghyp} selects the Asymmetric Hyperbolic distribution as the best fitting model with the $\left(\lambda, \bar{\alpha},\mu, \sigma, \gamma\right)$-parametrization for a generic Generalized Hyperbolic distribution. The latter is a normal variance mean mixture with a Generalized Inverse Gaussian subordinator $\tau_{t}$ and, as described in the package documentation \cite{ghyp}, the $\left(\lambda, \bar{\alpha},\mu, \sigma, \gamma\right)$-parametrization requires the characteristic function of $\tau_1$ to be:
\[
\phi_{\tau_1}\left(u\right) = \left(\frac{\varphi}{\varphi-2iu}\right)^{\frac{\lambda}{2}}\frac{K_{\lambda}\left(\sqrt{\xi\left(\varphi-2iu\right)}\right)}{K_{\lambda}\left(\sqrt{\xi\varphi}\right)}
\]
where $\varphi =\bar{\alpha}\frac{K_{\lambda+1}\left(\sqrt{\bar{\alpha}}\right)}{K_{\lambda}\left(\sqrt{\bar{\alpha}}\right)}$, $\xi=\bar{\alpha}\frac{K_{\lambda}\left(\sqrt{\bar{\alpha}}\right)}{K_{\lambda+1}\left(\sqrt{\bar{\alpha}}\right)}$ and $\bar{\alpha}>0$. The real parameters $\mu$ and $\gamma$ control the position and the skewness while $\sigma\geq0$ is a scale parameter for the Generalized Hyperbolic distribution. The Hyperbolic distribution is obtained by setting $\lambda=1$.

In the second example, we show how to use an object of \code{yuima.law} class to generate a new trajectory using the increments of the process $J$. In this case, we need to estimate the increments associated with the interval of length $\Delta t$ (small model-time length). In this example, we use a shorter dataset composed of three years of observations of the  S\&P500 index ranging from 04 January 2018 to 04 January 2021. For the estimation of the increments, the chunk code is exactly the same used in the previous example with only one difference. Indeed, to obtain the $\Delta t$ increments we set the input \code{aggregation} as \code{FALSE}.

\begin{CodeChunk}
\begin{CodeInput}
#### Download dataset ####
R> getSymbols(Symbols = "^GSPC", from = "2018-01-04", to = "2021-01-04")
R> logprice2 <- log(GSPC$GSPC.Close)

#### Model and data ####
R> mylaw4 <- setLaw(dim = 1)
R> yuima4 <- setModel(drift = "alpha1+alpha2*X", jump.coeff = matrix(c("gamma1*X^gamma2")),
+    measure.type = "code", measure = list(df = mylaw4), jump.variable = "J", 
+    solve.variable = c("X"), state.variable = c("X"))
R> Data <- setData(logprice, delta = 1/30)
R> yuima4 <- setYuima(data = Data, model = yuima4)

#### Estimation delta t increments ####
R> set.seed(123)
R> start4 <- list(alpha1 = runif(1, min = 10^(-10), max = 5), 
+    alpha2 = runif(1, min = -1, max = -10^(-10)), gamma1 = runif(1, min = 10^(-10), max = 1), 
+    gamma2 = runif(1, min = 0, max = 2))
R> lower4 <- list(alpha1 = 10^(-10), alpha2 =  -1, gamma1 = 10^(-10), gamma2 =  0)
R> upper4 <- list(alpha1 = 5, alpha2 =  -10^(-10), gamma1 = 1, gamma2 =  2)
R> res4 <- qmleLevy(yuima4, start = start4, lower = lower4, upper = upper4,
+    Est.Incr = "Incr", aggregation = FALSE, joint = FALSE)
R> summary(res4)
\end{CodeInput}
\begin{CodeOutput}
Quasi-Maximum likelihood estimation

Call:
qmleLevy(yuima = yuima4, start = start4, lower = lower4, upper = upper4, 
    joint = FALSE, Est.Incr = "Incr", aggregation = FALSE)

Coefficients:
          Estimate  Std. Error
gamma1  0.08111166 0.005677789
gamma2  0.00000000 0.065495083
alpha1  2.01137182 1.367347271
alpha2 -0.25019063 0.171158938

-2 log L: -4207.318 -4210.077 
\end{CodeOutput}
\end{CodeChunk}  

Using the estimated $\Delta t$ increments in the slot \code{res4@Incr.Lev} we can build an object of \code{yuima.law} class that internally uses a random number generator that samples from the data in \code{res4@Incr.Lev}.
\begin{CodeChunk}
\begin{CodeInput}
#### yuima.law Definition ####
R> mydata <- as.numeric(res4@Incr.Lev@original.data)
R> myrndEmp <- function(n, mydata){
+    sample(mydata, size = n)
+  }
R> mylaw5 <- setLaw(rng = myrndEmp)
\end{CodeInput}
\end{CodeChunk} 

The object \code{mylaw} contains a random number generator that uses the \proglang{R} function \code{sample},  however, the user can apply more advanced sampling methods from packages available from CRAN. We can simulate one-year trajectory of the S\&P500 log prices in YUIMA. Figure \ref{Forecast} reports the simulated sample path.

\begin{CodeChunk}
\begin{CodeInput}
#### Generation 1 year trajectory ####
R> yuima5 <- setModel(drift = "alpha1+alpha2*X", jump.coeff = matrix(c("gamma1*X^gamma2")),
+    measure.type = "code", measure = list(df = mylaw5), jump.variable = "J",
+    solve.variable = c("X"), state.variable = c("X") 
+    xinit = as.numeric(tail(logprice2, 1L)))
R> samp5 <- setSampling(Initial = 0, Terminal = 24, n = 24*30)
R> yuima5 <- setYuima(model = yuima5, sampling = samp5)
R> true5 <- as.list(coef(res4))
R> true5$mydata <- mydata
R> set.seed(123)
R> yuima5 <- simulate(yuima5, true.parameter = true5, sampling = samp5)
R> plot(yuima5, main = "Forecasted 24 months trajectory of the S&P500 Index", 
+    cex.main = 0.8)  
\end{CodeInput}
\begin{figure}[!h]
\centering
\includegraphics[width=10cm,height=7cm]{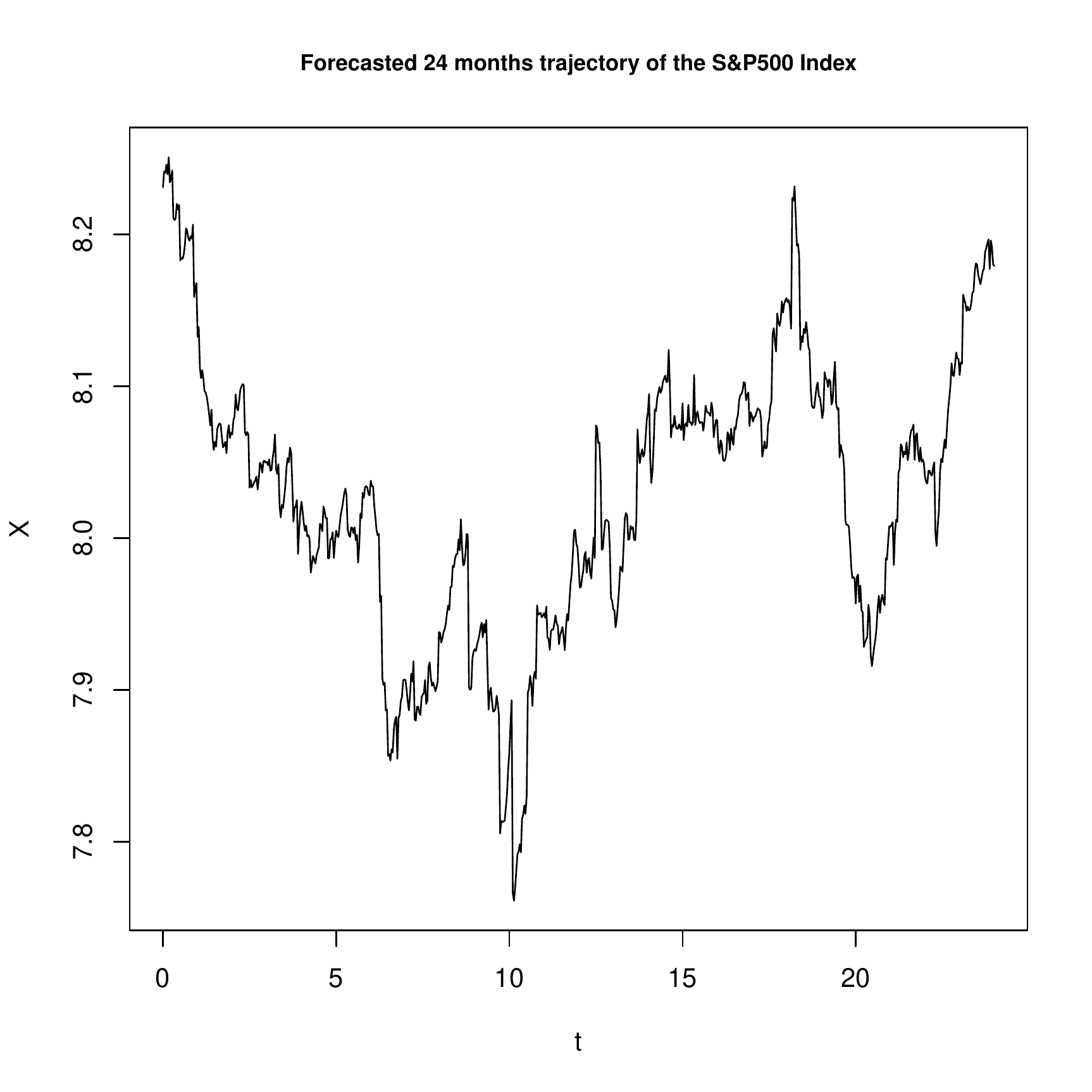}
\caption[24 months simulated trajectory of the S\&P500 log-price series]{24 months simulated trajectory of the S\&P500 log-price series\label{Forecast}}
\end{figure}
\end{CodeChunk}
In Figure \ref{Forecast}, the 24 months trajectory displays an oscillatory behavior. It fluctuates around the long term mean that can be estimated using the ratio $-\frac{\hat{\alpha_1}}{\hat{\alpha_2}}\approx 8.0394$.

\subsection*{Acknowledgement}
We thank the anonymous reviewers for their valuable comments.
This work was partly supported by %JSPS KAKENHI Grant Number ZP26400204 and ZP17K05367, and also 
JST CREST Grant Number JPMJCR14D7, Japan.

\bibliographystyle{abbrv}
%\bibliography{YU_bibs}

%%%%%%%%%%%%%%%%%%%%%%%%%%%%%%%%%%%%%%%%%%%%%%
%% Supplementary Material, if any, should   %%
%% be provided in {supplement} environment  %%
%% with title and short description.        %%
%%%%%%%%%%%%%%%%%%%%%%%%%%%%%%%%%%%%%%%%%%%%%%
%\begin{supplement}
%\stitle{???}
%\sdescription{???.}
%\end{supplement}

%% if your bibliography is in bibtex format, uncomment commands:
%\bibliographystyle{imsart-number} % Style BST file (imsart-number.bst or imsart-nameyear.bst)
%\bibliography{bibliography}       % Bibliography file (usually '*.bib')

%% or include bibliography directly:
% \begin{thebibliography}{}
% \bibitem{b1}
% \end{thebibliography}

\end{document}